\documentclass[11pt]{article}
\input{preamble}

\title{On the Equivalence between Neyman Orthogonality \\and Pathwise Differentiability}
\author{Yuxi Chen$^{1}$, Edward H. Kennedy$^{1}$, and Sivaraman Balakrishnan$^{1,2}$\\\\
$^{1}$Department of Statistics \& Data Science\\
$^{2}$Machine Learning Department\\
Carnegie Mellon University\\\\
\texttt{\{eric, edward, siva\}@stat.cmu.edu}
}
\date{}

\begin{document}
\maketitle

\begin{abstract}
It has been frequently observed that Neyman orthogonality, the central device underlying double/debiased machine learning \citep{chernozhukov_doubledebiased_2018}, and pathwise differentiability, a cornerstone concept from semiparametric theory, often lead to the same debiased estimators in practice. Despite the widespread adoption of both ideas, the precise nature of this equivalence has remained elusive, with the two concepts having been developed in largely separate traditions. In this work, we revisit the semiparametric framework of \citet{van_der_laan_unified_2003} and identify an implicit regularity assumption on the relationship between target and nuisance parameters---a local product structure---that allows us to establish a formal equivalence between Neyman orthogonality and pathwise differentiability. We also show that the two directions of this equivalence impose fundamentally different structural requirements. Finally, we illustrate the theory through three detailed examples of estimating the average treatment effect and expected density in a nonparametric model, as well as the slope in a partially linear model. This helps clarify the relationship between these two foundational frameworks and provides a useful reference for practitioners working at their intersection.

\end{abstract}

\section{Introduction}
\label{sec:intro}

In recent years, the double/debiased machine learning (DML) framework of \citet{chernozhukov_doubledebiased_2018} has become a standard tool in modern causal inference for estimating low-dimensional parameters in the presence of high-dimensional nuisance functions. The central feature of DML is that the estimating function satisfies \emph{Neyman orthogonality}: an estimating function $m(Z;\beta,\eta)$ is Neyman orthogonal if the G\^ateaux derivative of the expected estimating function
with respect to the nuisance parameter $\eta$, evaluated at the true parameter values $(\beta_0, \eta_0)$, vanishes in all admissible perturbation directions. This first-order insensitivity to the nuisance ensures that bias from estimating $\eta_0$ enters only at second order, enabling the use of flexible machine learning estimators for nuisance functions while preserving desirable properties of the target estimator. We refer the reader to \citet{chernozhukov_doubledebiased_2018} for a thorough treatment of these statistical consequences.

It has long been observed that Neyman orthogonal estimating functions coincide, in essentially every example of interest, with influence functions of \emph{pathwise differentiable} functionals from classical semiparametric theory, which underpin the construction of efficient estimators \citep{newey_asymptotic_1994, bickel_efficient_1998, van_der_vaart_asymptotic_1998, van_der_laan_unified_2003, tsiatis_semiparametric_2006}. For example, the augmented inverse probability weighted estimator for the average treatment effect arises naturally both as a one-step correction built from the efficient influence function and as the solution to a Neyman orthogonal moment condition. 

However, the two concepts have largely been developed and invoked in separate traditions. A general characterization of the relationship between pathwise differentiability and Neyman orthogonality, and in particular of what structural conditions each direction of the implication requires, does not appear to have been explicitly formulated in the literature. We hope to close this gap by formalizing the equivalence, and by clarifying the structural and regularity conditions that underpin each direction of the implication. For simplicity, we restrict our attention to scalar-valued functionals, although the results extend generally to vector-valued scenarios.

Establishing the equivalence requires bridging two seemingly distinct viewpoints. Pathwise differentiability is formulated geometrically, characterizing the first-order behavior of a functional along smooth perturbations of the data-generating distribution without reference to any explicit nuisance parameterization. Neyman orthogonality, by contrast, is defined analytically through derivatives of an expected estimating function with respect to an explicitly parameterized nuisance. Relating the two turns out to require constructing smooth perturbations of the distribution that move one parameter while holding the other fixed. A natural candidate for guaranteeing that such perturbations exist is the notion of local variation independence, which requires that the attainable parameter set contain a product neighborhood of \((\beta_0, \eta_0)\). However, this condition is purely set-theoretic and does not ensure that the independently varied parameter values are connected by submodels regular enough to differentiate along. We formalize the missing regularity as a \emph{local product structure} (Assumption~\ref{assumption:product-structure}), which requires that coordinate perturbations not merely exist as points in the model but form regular submodels through \(P_0\). This condition underlies the classical framework of \citet{van_der_laan_unified_2003}, where it is implicitly invoked but not separately identified. We make this explicit and discuss its role in their proofs in Appendix~\ref{appendix:product_structure}.

Equipped with this assumption, we establish the equivalence between Neyman orthogonality and pathwise differentiability. The forward direction (Theorem~\ref{thm:forward}) shows that a Neyman orthogonal estimating function with a nondegenerate Jacobian induces an influence function, and hence pathwise differentiability, without requiring any variation independence or product structure. The reverse direction (Theorem~\ref{thm:reverse}) shows that a mean-zero estimating function whose value at the truth is an influence function is automatically Neyman orthogonal. This direction does require local product structure in order to identify coordinate submodels that perturb $\beta$ and $\eta$ independently.

\section{Background}
\label{sec:setup}

We work on a measurable space \((\cZ, \cA)\) and fix a \(\sigma\)-finite measure \(\nu\) such that every \(P \in \cP\) is dominated by \(\nu\). We denote the density of \(P \in \cP\) by \(p = dP/d\nu\) and fix \(P_0 \in \cP\) with density \(p_0.\) We write \(\bE_0[\cdot] \equiv \bE_{P_0}[\cdot]\). For \(P, Q \in \cP\) with densities \(p, q\), the total variation and Hellinger distances are taken to be
\[
\mathrm{TV}(P,Q) := \frac{1}{2}\int |p-q|\,d\nu,
\qquad
H(P,Q) := \left(\frac{1}{2}\int\bigl(\sqrt p-\sqrt q\bigr)^2\,d\nu\right)^{1/2}.
\]
Next, let 
\[
L_2(P_0) = \left\{f \, : \, \bE_0\bigl[f^{2}\bigr] < \infty\right\}, \quad L_2^0(P_0) = \left\{f \in L_2(P_0) \, : \, \bE_0[f] = 0\right\}, \]
and
\[L_\infty(P_0) = \left\{f : \|f\|_\infty := \operatorname*{ess\,sup}_{P_0} |f| < \infty\right\}.
\]
To define local perturbations at \(P_0\), we consider paths through \(P_0\) inside the model \(\cP\). The appropriate regularity condition on such paths is quadratic-mean differentiability \citep{van_der_vaart_asymptotic_1998}.

\subsection{Regular Submodels and Scores}

\begin{definition}[Regular (QMD) submodel and score]
\label{def:regular_qmd_submodel}
    A regular parametric submodel through \(P_0\) is an indexed family \(\{P_t: t \in (-\epsilon, \epsilon)\} \subset \cP\) with \(P_{t = 0} \equiv P_0\) such that 
\begin{enumerate}
    \item \(P_t \ll \nu\) with density \(p_t = dP_t / d\nu\).
    \item The map \(t \mapsto P_t\) is differentiable in quadratic mean (QMD) at 0: there exists \(s \in L_2^0(P_0)\) such that \[\int \left(\frac{\sqrt{p_t} - \sqrt{p_0}}{t} - \frac{1}{2}s\sqrt{p_0}\right)^2\,d\nu \to 0 \quad \text{as \(t \to 0\)}.\]
    The function \(s\) is the score of the submodel at 0.
\end{enumerate}
\end{definition}

When it is helpful to indicate the score of a submodel, we write \(P_{t, s}\) for a regular submodel through \(P_0\) with score \(s\). We also use \(t \mapsto P_{t, s}\) and \(\{P_{t, s}\}\) interchangeably to refer to the submodel. 

One may observe that different submodels can share the same score. The score determines the first-order behavior of the submodel as a probability measure. It does not, however, by itself determine the derivative along the submodel of an arbitrary functional of \(P\), unless that functional is pathwise differentiable. It is in this sense that it is natural to work not with individual submodels but with their scores, which we collect into a single space.

Let \(\cS \subseteq L_2^0(P_0)\) be the set of scores of all regular submodels through \(P_0\).

\begin{definition}[Tangent space]
\label{def:tangent_space}
The (full) tangent space is
\[\cT := \overline{\mathrm{span}(\cS)}^{L_2(P_0)} \subset L_2^0(P_0).\]
\end{definition}

The tangent space is defined as the closed span of the scores, but it remains to show that scores can be constructed in a controlled way. One simple and standard construction is the linear tilt, where one perturbs $p_0$ by a multiplicative factor $1 + tg$ for a bounded, mean-zero function $g$, producing a regular submodel with score exactly $g$. 

\begin{lemma}[Linear tilt submodel is QMD with score \(g\)]
\label{lem:linear_tilt_submodel}
Let \(g \in L_\infty (P_0)\) with \(\bE_0[g] = 0\). Let \(M := \|g\|_\infty\). For \(|t| < 1/M\), define
\[p_t(z) := p_0(z)\{1 + tg(z)\}.\]
Then \(p_t \ge 0\) \(\nu\)-a.s., \(\int p_t \, d\nu = 1\), and the resulting submodel \(\{P_t: |t| < 1/M\}\) is regular (QMD) at \(0\) with score \(s \equiv g\).
(Proof in Appendix~\ref{appendix:proof_linear_tilt_submodel}.)
\end{lemma}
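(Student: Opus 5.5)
The plan is to verify the three claims directly, with the QMD limit handled by an explicit pointwise identity followed by dominated convergence.

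\emph{Nonnegativity and normalization.} Since $|g| \le M$ $P_0$-a.s., for $|t|<1/M$ we have $1+tg \ge 1-|t|M > 0$ on $\{p_0>0\}$ outside a $P_0$-null set. Hence $p_t \ge 0$ $\nu$-a.s., because on $\{p_0=0\}$ we have $p_t=0$. Normalization follows from $\int p_t\,d\nu = 1 + t\,\bE_0[g] = 1$. The score $g$ is bounded, so it lies in $L_2(P_0)$, and it has mean zero, so $g\in L_2^0(P_0)$ as required.

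\emph{QMD.} Everything can be written in terms of $p_0$:
\[
\frac{\sqrt{p_t}-\sqrt{p_0}}{t} - \tfrac12 g\sqrt{p_0} = \sqrt{p_0}\,\phi_t(g), \qquad \phi_t(x) := \frac{\sqrt{1+tx}-1}{t} - \frac{x}{2}.
\]
The integral in Definition~\ref{def:regular_qmd_submodel} then equals $\bE_0[\phi_t(g)^2]$. Rationalizing gives
\[
\frac{\sqrt{1+tx}-1}{t} = \frac{x}{\sqrt{1+tx}+1},
\]
and therefore
\[
\phi_t(x) = x\Bigl(\frac{1}{\sqrt{1+tx}+1}-\frac12\Bigr) = \frac{x\,(1-\sqrt{1+tx})}{2(\sqrt{1+tx}+1)}.
\]
For $|x|\le M$ and $|t|\le 1/(2M)$ we have $|1-\sqrt{1+tx}| \le |tx|$, since $|\sqrt{1+u}-1| = |u|/(\sqrt{1+u}+1)\le |u|$. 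This yields the uniform bound $|\phi_t(g)| \le \tfrac12 M^2|t|$ $P_0$-a.s. Consequently $\bE_0[\phi_t(g)^2] \le \tfrac14 M^4 t^2 \to 0$. This proves QMD with score $g$, so no separate dominated convergence step is needed.

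\emph{Main obstacle.} The only delicate point is bookkeeping. Boundedness of $g$ holds only $P_0$-a.s., not $\nu$-a.s., so every pointwise statement must be restricted to $\{p_0>0\}$. On $\{p_0=0\}$ both $p_t$ and the integrand vanish identically. With that caveat the argument is elementary. The case $M=0$ is trivial: then $p_t=p_0$ for all $t$, and the claims hold with $s\equiv 0$.
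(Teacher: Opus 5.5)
Your proof is correct and matches the paper's argument: both factor out $\sqrt{p_0}$ and bound the remainder of the expansion $\sqrt{1+tg} = 1 + \frac{t}{2}g + (\text{remainder})$ uniformly by a constant times $|t|M^2$, so the QMD integral is $O(t^2)$. The only difference is cosmetic: your rationalization gives the explicit bound $\frac{1}{4}M^4t^2$ for all $|t|<1/M$, whereas the paper uses a generic second-derivative bound $|r(u)|\le Cu^2$ on $|u|\le 1/2$.
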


It should be noted that these submodels are not necessarily intended as realistic data-generating mechanisms but rather as analytical tools for assessing the local geometry of the model. Indeed, in the nonparametric model, linear tilts alone suffice to saturate the tangent space.

\begin{corollary}[Saturation in the nonparametric model]
\label{cor:saturation}
    Suppose \(\cP\) is the full nonparametric model (all densities \(p\) w.r.t. \(\nu\)). Then
    \[\cT = L_2^0(P_0).\]
\end{corollary}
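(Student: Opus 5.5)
The plan is to prove the two inclusions \(\cT \subseteq L_2^0(P_0)\) and \(L_2^0(P_0) \subseteq \cT\) separately.

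The first inclusion holds by construction. Definition~\ref{def:regular_qmd_submodel} requires every score to lie in \(L_2^0(P_0)\), so \(\cS \subseteq L_2^0(P_0)\). The space \(L_2^0(P_0)\) is a linear subspace of \(L_2(P_0)\), and it is closed because \(f \mapsto \bE_0[f]\) is a continuous linear functional on \(L_2(P_0)\) (by Cauchy--Schwarz, \(|\bE_0[f]| \le \|f\|_{L_2(P_0)}\)), and \(L_2^0(P_0)\) is its kernel. Hence \(L_2^0(P_0)\) contains the closed span of \(\cS\), which is \(\cT\).

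For the reverse inclusion, I would first show that every bounded mean-zero function is a score. Take \(g \in L_\infty(P_0)\) with \(\bE_0[g]=0\). Lemma~\ref{lem:linear_tilt_submodel} shows that \(p_t = p_0(1+tg)\) is a nonnegative density integrating to one, so \(P_t\) belongs to the nonparametric model. The same lemma shows the path is QMD with score \(g\), so \(g \in \cS\). The only subtlety is if \(g=0\), where \(M=0\); then the constant path \(P_t \equiv P_0\) trivially has score \(0\). Thus \(L_\infty(P_0) \cap L_2^0(P_0) \subseteq \cS \subseteq \cT\).

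It remains to show that \(L_\infty(P_0)\cap L_2^0(P_0)\) is dense in \(L_2^0(P_0)\). I expect this step to be the main obstacle, although it is standard. Given \(f \in L_2^0(P_0)\), I would truncate: set \(f_n = f\,\mathbf 1\{|f|\le n\}\) and center it, \(g_n = f_n - \bE_0[f_n]\). Each \(g_n\) is bounded and mean zero. Dominated convergence, with dominating function \(f^2\), gives \(f_n \to f\) in \(L_2(P_0)\). Then \(\bE_0[f_n] \to \bE_0[f] = 0\), and since constants have \(L_2(P_0)\) norm equal to their absolute value, \(g_n \to f\) in \(L_2(P_0)\). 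Because \(\cT\) is closed and contains every \(g_n\), it contains \(f\). This gives \(L_2^0(P_0) \subseteq \cT\) and completes the proof.
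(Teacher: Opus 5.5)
Your proof is correct and follows the same route as the paper: linear tilts (Lemma~\ref{lem:linear_tilt_submodel}) show that every bounded mean-zero function is a score, and the density of bounded mean-zero functions in $L_2^0(P_0)$ then gives $\cT = L_2^0(P_0)$. You also supply details the paper leaves implicit: the closedness of $L_2^0(P_0)$ for the inclusion $\cT \subseteq L_2^0(P_0)$, the trivial case $g = 0$, and the truncate-and-recenter argument for density.
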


\begin{proof}
    By Lemma~\ref{lem:linear_tilt_submodel}, every bounded mean-zero \(g\) is a score so \(L_\infty (P_0) \cap L_2^0(P_0) \subset \cS\). Since bounded functions are dense in \(L_2(P_0)\), it follows that \(L_\infty (P_0) \cap L_2^0(P_0)\) is dense in \(L_2^0(P_0)\). Taking the closed linear span of these scores yields \(\cT = L_2^0(P_0)\).
\end{proof}

\subsubsection{Differentiating Expectations along Regular Submodels}
\label{subsubsec:differentiate_expectations_regular_submodels}

Deriving the central results of this note requires differentiating expectations of the form \(\bE_{P_{t, s}}[f(Z)]\) along regular submodels, where the integrand \(f\) itself may also depend on \(t\). The first result below handles the case for a fixed integrand, and the second extends to integrands that vary along the submodel, which arises naturally when the integrand depends on parameters that move with \(P_{t, s}\).

\begin{lemma}[Differentiation of expectations for a fixed \(f\)]
\label{lem:differentiation_fixed_f}

Let \(t \mapsto P_{t, s}\) be a regular (QMD) submodel through \(P_0\) with bounded score \(s\). If \(f: \cZ \to \RR\) is bounded and measurable, then 
\[\frac{d}{dt}\bE_{P_{t, s}}[f(Z)] \bigg|_{t = 0} = \bE_0[f(Z) \, s(Z)].\]
(Proof in Appendix~\ref{appendix:proof_differentiation_fixed_f}.)
\end{lemma}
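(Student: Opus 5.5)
The plan is to write the difference quotient as an integral against \(p_t - p_0\) and factor this difference through square roots, so that QMD can be applied directly. Write \(\bE_{P_{t,s}}[f] - \bE_0[f] = \int f\,(p_t - p_0)\,d\nu\) and use
\[
p_t - p_0 = \bigl(\sqrt{p_t} - \sqrt{p_0}\bigr)\bigl(\sqrt{p_t} + \sqrt{p_0}\bigr).
\]
Set \(r_t := t^{-1}(\sqrt{p_t} - \sqrt{p_0}) - \tfrac12 s\sqrt{p_0}\). By Definition~\ref{def:regular_qmd_submodel}, \(\|r_t\|_{L_2(\nu)} \to 0\) as \(t \to 0\). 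This gives the decomposition
\[
\frac{1}{t}\int f\,(p_t - p_0)\,d\nu = \int f\Bigl(\tfrac12 s\sqrt{p_0} + r_t\Bigr)\bigl(\sqrt{p_t} + \sqrt{p_0}\bigr)\,d\nu .
\]

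The first preliminary step is to show that \(\sqrt{p_t} \to \sqrt{p_0}\) in \(L_2(\nu)\). Indeed,
\[
\|\sqrt{p_t} - \sqrt{p_0}\|_{L_2(\nu)} \le |t|\bigl(\|r_t\| + \tfrac12\|s\sqrt{p_0}\|\bigr) \to 0,
\]
because \(\|s\sqrt{p_0}\|_{L_2(\nu)}^2 = \bE_0[s^2] < \infty\). It follows that \(\sqrt{p_t} + \sqrt{p_0} \to 2\sqrt{p_0}\) in \(L_2(\nu)\). Moreover, \(\|\sqrt{p_t} + \sqrt{p_0}\|_{L_2(\nu)} \le 2\), since both terms are square roots of probability densities.

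Next I split the integral into two terms and take limits.

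\textbf{Remainder term.} By Cauchy--Schwarz,
\[
\Bigl|\int f\, r_t\,(\sqrt{p_t} + \sqrt{p_0})\,d\nu\Bigr| \le \|f\|_\infty\,\|r_t\|\cdot 2 \to 0 .
\]

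\textbf{Main term.} Write
\[
\tfrac12\int f s \sqrt{p_0}\,(\sqrt{p_t} + \sqrt{p_0})\,d\nu = \tfrac12\int f s \sqrt{p_0}\cdot 2\sqrt{p_0}\,d\nu + \tfrac12\int f s\sqrt{p_0}\,(\sqrt{p_t} - \sqrt{p_0})\,d\nu .
\]
The first piece equals \(\bE_0[fs]\). The second piece is bounded by
\[
\tfrac12\|f\|_\infty\,\|s\sqrt{p_0}\|_{L_2(\nu)}\,\|\sqrt{p_t} - \sqrt{p_0}\| \to 0 .
\]
Adding the two terms, the difference quotient converges to \(\bE_0[f\,s]\), which is the claimed derivative at \(t = 0\).

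No step is really an obstacle. The only point needing care is that the identity \(p_t - p_0 = (\sqrt{p_t} - \sqrt{p_0})(\sqrt{p_t} + \sqrt{p_0})\) must be applied on all of \(\cZ\), including where \(p_0 = 0\); it holds pointwise there as well, so this is fine. Boundedness of \(s\) is not actually needed: square-integrability suffices. I will just remark on this.
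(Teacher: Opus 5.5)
Your proof is correct and follows essentially the same route as the paper: you factor $p_t - p_0$ through square roots and split off the QMD remainder $r_t$, which Cauchy--Schwarz kills, and you handle the main term $\frac{1}{2}\int f s\sqrt{p_0}\,(\sqrt{p_t}+\sqrt{p_0})\,d\nu$ using $\sqrt{p_t}\to\sqrt{p_0}$ in $L_2(\nu)$. Your direct bound $\|\sqrt{p_t}+\sqrt{p_0}\|_{L_2(\nu)}\le 2$ is a cleaner substitute for the paper's detour through $\bE_{P_t}[f^2]\to\bE_0[f^2]$, and you are right that only $s\in L_2(P_0)$ is needed, not boundedness of $s$.
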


\begin{lemma}[Differentiation of expectations for varying \(f_t\)]
\label{lem:differentiation_varying_f}
    Let \(t \mapsto P_{t, s}\) be a regular (QMD) submodel through \(P_0\) with bounded score \(s\). Let \(f_t : \cZ \to \RR\) be measurable for each \(t\), with \(f_0\) bounded. Suppose
    \begin{enumerate}
        \item There exists \(\dot f_0 \in L_2(P_0)\) such that 
        \[\bE_0 \left[\left(\frac{f_t - f_0}{t} - \dot f_0\right)^2\right] \to 0 \quad \text{as \(t \to 0\)}.\]
        \item There exists \(\delta > 0, C < \infty\) such that 
        \[\sup_{|
        t| < \delta} \bE_{P_{t, s}} \left[\left(\frac{f_t - f_0}{t}\right)^2\right] \le C.\]
    \end{enumerate}
    Then \(t \mapsto \bE_{P_{t, s}}[f_t]\) is differentiable at \(0\) and 
    \[\frac{d}{dt}\bE_{P_{t, s}}[f_t] \bigg|_{t = 0} = \bE_0[f_0s] + \bE_0\bigl[\dot f_0\bigr].\](Proof in Appendix~\ref{appendix:proof_differentiation_varying}.)
\end{lemma}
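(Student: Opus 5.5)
I will prove Lemma~\ref{lem:differentiation_varying_f} by splitting the difference quotient into a ``measure'' part and an ``integrand'' part:
\[
\frac{\bE_{P_{t,s}}[f_t]-\bE_0[f_0]}{t}
= \frac{\bE_{P_{t,s}}[f_0]-\bE_0[f_0]}{t}
+ \bE_{P_{t,s}}\!\left[\frac{f_t-f_0}{t}\right].
\]
The first term converges to \(\bE_0[f_0 s]\) by Lemma~\ref{lem:differentiation_fixed_f}, since \(f_0\) is bounded. So the proof reduces to showing that \(\bE_{P_{t,s}}[g_t]\to\bE_0[\dot f_0]\), where \(g_t:=(f_t-f_0)/t\).

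For the second term I write
\[
\bE_{P_{t,s}}[g_t]=\bE_0[g_t]+\int g_t\,(p_t-p_0)\,d\nu .
\]
By assumption 1 and Cauchy--Schwarz, \(|\bE_0[g_t]-\bE_0[\dot f_0]|\le \|g_t-\dot f_0\|_{L_2(P_0)}\to0\). It remains to show that the cross term \(\int g_t(p_t-p_0)\,d\nu\) vanishes.

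This cross term is the main obstacle, because \(g_t\) is controlled in \(L_2\) only under \(P_0\) (assumption 1) and under \(P_t\) (assumption 2), while \(p_t-p_0\) is small only in a Hellinger sense. I factor \(p_t-p_0=(\sqrt{p_t}-\sqrt{p_0})(\sqrt{p_t}+\sqrt{p_0})\) and apply Cauchy--Schwarz:
\[
\left|\int g_t(p_t-p_0)\,d\nu\right|
\le \Bigl(\int(\sqrt{p_t}-\sqrt{p_0})^2d\nu\Bigr)^{1/2}\Bigl(\int g_t^2(\sqrt{p_t}+\sqrt{p_0})^2d\nu\Bigr)^{1/2}.
\]
The second factor is at most \(\bigl(2\bE_{P_t}[g_t^2]+2\bE_0[g_t^2]\bigr)^{1/2}\). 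This is bounded for small \(t\): the \(P_t\) part by assumption 2, and the \(P_0\) part because \(g_t\to\dot f_0\) in \(L_2(P_0)\). The first factor is \(\sqrt2\,H(P_t,P_0)\). QMD gives \(\int(\sqrt{p_t}-\sqrt{p_0})^2d\nu = O(t^2)\), with limiting constant \(\tfrac14 t^2\bE_0[s^2]\). Hence the whole cross term is \(O(|t|)\to0\).

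Combining the three limits gives the stated derivative \(\bE_0[f_0 s]+\bE_0[\dot f_0]\). Differentiability in the two-sided sense follows because all the bounds hold for \(t\to0\) from either side.
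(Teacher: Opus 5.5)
Your proposal is correct and follows essentially the same route as the paper's proof: the same split into the fixed-integrand term (handled by Lemma~\ref{lem:differentiation_fixed_f}) and \(\bE_{P_t}[g_t]\), with the cross term controlled by factoring \(p_t-p_0=(\sqrt{p_t}-\sqrt{p_0})(\sqrt{p_t}+\sqrt{p_0})\), applying Cauchy--Schwarz, and bounding \(\int g_t^2(\sqrt{p_t}+\sqrt{p_0})^2\,d\nu\) via assumptions 1 and 2. The only cosmetic difference is that you record an \(O(|t|)\) rate for the cross term, whereas the paper uses only that \(\|\sqrt{p_t}-\sqrt{p_0}\|_{L_2(\nu)}\to 0\).
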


\subsubsection{Nuisance Scores, Nuisance Tangent Space, and Pathwise Derivatives}
\label{subsubsec:nuisance}

The tools developed in Section~\ref{subsubsec:differentiate_expectations_regular_submodels} allow us to differentiate expectations along regular submodels, but do not yet distinguish between perturbations that change the parameter of interest and those that do not. To clarify this distinction, we define nuisance scores, the nuisance tangent space, and influence functions following \citet{van_der_laan_unified_2003}, and show that influence functions are orthogonal to the nuisance tangent space.

Let $\beta : \cP \to \RR$ be the target parameter of interest with $\beta_0 := \beta(P_0)$.

\begin{definition}[Nuisance scores and nuisance tangent space]\label{def:nuisance}
    Assume that for every regular submodel $t\mapsto P_{t, s}$ through \(P_0\), the derivative $\frac{d}{dt}\beta(P_{t, s})|_{t=0}$ exists. Define the nuisance score set
  \[
    \cS_{\mathrm{nuis}}
    :=
    \left\{
      s\in\cS:\, \exists\, \text{a regular submodel } t\mapsto P_{t, s} \text{ with score } s
      \text{ such that }
      \frac{d}{dt}\beta(P_{t, s})\bigg|_{t=0}=0
    \right\}.
  \]
  Define the nuisance tangent space
  \[
    \Lambda := \overline{\operatorname{span}(\cS_{\mathrm{nuis}})}^{\,L_2(P_0)}
    \subset \cT.
  \]
\end{definition}

It is worth noting that nuisance scores are defined without reference to any explicit nuisance parameterization. Concretely, a score $s$ is a nuisance if there exists a regular submodel with score $s$ along which $\beta$ is locally constant to first order. Following the discussion after Definition~\ref{def:regular_qmd_submodel}, the score need not by itself determine $\frac{d}{dt}\beta(P_{t, s})|_{t=0}$, as two regular submodels can share the same score while yielding different derivatives of $\beta$, which is why the definition quantifies over the existence of such a submodel. Furthermore, $\Lambda$ is a closed linear subspace of $\cT$ generated by score directions that admit regular submodels along which $\beta$ is locally constant to first order.

\begin{definition}[Pathwise differentiability and influence functions]\label{def:pathwise}
  We say $\beta$ is pathwise differentiable at $P_0$ if there exists $\varphi \in L_2^0(P_0)$ such that for every regular submodel with score $s$,
  \[
    \frac{d}{dt}\beta(P_{t,s})\bigg|_{t=0} = \bE_0[\varphi(Z ; P_0) \, s(Z)].
  \]
  Any such $\varphi$ is called an influence function of $\beta$ at $P_0$ or a gradient of the pathwise derivative.
\end{definition}
Here, \(\varphi(Z; P_0)\) indicates that \(\varphi\) is a functional of \(P_0\) evaluated at the data point \(Z\). Since \(P_0\) is fixed throughout, we write simply \(\varphi(Z)\) hereafter. 

Note that if $\beta$ is pathwise differentiable at $P_0$, then the derivative depends only on the score, in which case the condition in Definition~\ref{def:nuisance} is equivalent to requiring that $\beta$ does not change to first order along \emph{any} regular submodel with score $s$.

\begin{remark}[Uniqueness of the influence function]\label{remark:uniqueness}
  In general, the influence function need not be unique. The pathwise derivative condition only probes $\varphi$ through inner products with scores $s \in \cT$, so adding any $h \in \cT^\perp$
  to $\varphi$ produces another valid influence function. Only the projection onto $\cT$ is identified by the pathwise derivative. This projection is called the efficient influence function and
  is the unique influence function lying in $\cT$. In the nonparametric model, $\cT = L_2^0(P_0)$ by Corollary~\ref{cor:saturation}, so $\cT^\perp = \{0\}$ and the influence function is unique. For nonparametric models, the influence function and the efficient influence function coincide.
\end{remark}

\begin{lemma}[Influence functions are orthogonal to $\Lambda$]
\label{lem:ortho_nuisance}
  If $\beta$ is pathwise differentiable with influence function $\varphi$ (Definition~\ref{def:pathwise}), then
  \[
    \bE_0[\varphi(Z)\, s(Z)] = 0 \quad \forall s \in \Lambda.
  \]
\end{lemma}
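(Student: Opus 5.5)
The argument runs in three steps: first on the generating set $\cS_{\mathrm{nuis}}$, then by linearity on its span, and finally by continuity on the closure $\Lambda$.

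\emph{Step 1: nuisance scores.} Fix $s \in \cS_{\mathrm{nuis}}$. By Definition~\ref{def:nuisance} there is at least one regular submodel $t \mapsto P_{t,s}$ with score $s$ along which $\frac{d}{dt}\beta(P_{t,s})|_{t=0} = 0$. Definition~\ref{def:pathwise} applies to \emph{every} regular submodel, so in particular to this one, and gives
\[
0 = \frac{d}{dt}\beta(P_{t,s})\bigg|_{t=0} = \bE_0[\varphi(Z)\, s(Z)].
\]
It does not matter that other submodels with the same score might behave differently: pathwise differentiability makes the derivative a function of the score alone, and we only need a single witnessing submodel.

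\emph{Step 2: the linear span.} The map $s \mapsto \bE_0[\varphi s]$ is linear. It therefore vanishes on every finite linear combination of elements of $\cS_{\mathrm{nuis}}$, that is, on $\operatorname{span}(\cS_{\mathrm{nuis}})$.

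\emph{Step 3: the closure.} Take $s \in \Lambda$. Choose $s_n \in \operatorname{span}(\cS_{\mathrm{nuis}})$ with $s_n \to s$ in $L_2(P_0)$. Since $\varphi \in L_2^0(P_0) \subset L_2(P_0)$, Cauchy--Schwarz gives
\[
\bigl|\bE_0[\varphi s] - \bE_0[\varphi s_n]\bigr| \le \|\varphi\|_{L_2(P_0)}\,\|s - s_n\|_{L_2(P_0)} \to 0.
\]
Hence $\bE_0[\varphi s] = \lim_n \bE_0[\varphi s_n] = 0$.

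No step here is a real obstacle. The only point that needs care is Step 1, namely that the existential quantifier in the definition of $\cS_{\mathrm{nuis}}$ is enough, and it is, because Definition~\ref{def:pathwise} is universal over submodels. The rest is just linearity plus continuity of the inner product, which is what requires $\varphi \in L_2(P_0)$.
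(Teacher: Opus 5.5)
Your proof is correct and matches the paper's argument: a witnessing submodel combined with pathwise differentiability gives $\bE_0[\varphi s]=0$ on $\cS_{\mathrm{nuis}}$, and linearity and continuity of $s \mapsto \bE_0[\varphi s]$ on $L_2(P_0)$ extend this to $\Lambda$. You spell out the span and closure steps separately, which the paper compresses into a single sentence.
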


\begin{proof}
  Let $s \in \cS_{\mathrm{nuis}}$. By Definition~\ref{def:nuisance}, there exists a regular submodel with score $s$ along which $\frac{d}{dt}\beta(P_{t,s})|_{t=0} = 0$. By pathwise differentiability,
  \[
    0 = \frac{d}{dt}\beta(P_{t,s})\bigg|_{t=0} = \bE_0[\varphi(Z)\, s(Z)].
  \]
  Since the map $s \mapsto \bE_0[\varphi s]$ is a continuous linear functional on $L_2(P_0)$, the equality extends from $\cS_{\mathrm{nuis}}$ to its closed linear span $\Lambda$.
\end{proof}

Lemma~\ref{lem:ortho_nuisance} says that the influence function is orthogonal to every direction in the nuisance tangent space. This can be considered as an analogue of Neyman orthogonality, which requires that the expected estimating function be insensitive to perturbations of the nuisance parameter, but formulated without reference to any explicit parameterization. Establishing a formal equivalence between these two formulations, as we do in Section~\ref{sec:equivalence}, will rely on the product structure developed in the next section to identify nuisance perturbations with nuisance scores in \(\Lambda\).

\subsection{Estimating Functions and Neyman Orthogonality }
\label{subsec:estimating}

To formulate Neyman orthogonality, we will need to work with estimating functions of the form \(m(Z; \beta, \eta)\) that depend explicitly on both a target parameter \(\beta\) and a nuisance parameter \(\eta\). This requires us to move beyond the framework of Section~\ref{subsubsec:nuisance}, where nuisance scores were defined without reference to any explicit parameterization, and specify concrete functionals on the model that assume the roles of the target and nuisance. Once such a parameterization is in place, it is natural to ask what structure the relationship between \(\beta\) and \(\eta\) must possess for the two viewpoints to agree. Pathwise differentiability is defined through scores alone and makes no reference to how the nuisance is parameterized, while Neyman orthogonality depends explicitly on the functional form of \(\beta\) and \(\eta\). As we show below, connecting these two viewpoints requires the ability to construct submodels that move one coordinate while holding the other fixed.

As before, let
\[\beta : \cP \to \RR, \qquad \eta : \cP \to \cH\]
be functionals on the model, where $\cH \subset \cV$ is a subset of a normed vector space with the norm denoted by \(\|\cdot\|_\cV\). We let $\beta_0 := \beta(P_0)$ and $\eta_0 := \eta(P_0)$. For any pair of $(\beta, \eta)$ in the attainable set $\Theta := \{(\beta(P), \eta(P)) : P \in \cP\}$, we write \(P_{\beta,\eta}\) for a distribution in \(\cP\) with \(\beta(P_{\beta,\eta}) = \beta\) and \(\eta(P_{\beta,\eta}) = \eta\), so that for any \(P \in \cP\), the expectation \(\bE_{P}[f(Z; \beta(P), \eta(P))]\) can be written \(\bE_{P_{\beta,\eta}}[f(Z; \beta, \eta)]\) with \((\beta, \eta) = (\beta(P), \eta(P))\). Finally, let \[\dot{\cH} := \left \{h \in \cV: \exists \epsilon > 0 \text{ such that  } \eta_0 + th \in \cH \text{ for all } |t| < \epsilon \right\}\] denote the set of admissible perturbation directions at $\eta_0$. 

\subsubsection{Local Product Structure}
\label{subsec:local_product_structure}

To apply the differentiation results of Section~\ref{subsubsec:differentiate_expectations_regular_submodels}, we require an additional local product structure assumption, which ensures the existence of regular (QMD) submodels along each coordinate, that is, submodels that perturb one of \(\beta\) or \(\eta\) while holding the other fixed. Note that any regular submodel \(t \mapsto P_{t,s}\) through \(P_0\) induces a coordinate path \(t \mapsto (\beta_{t,s},\, \eta_{t,s}) := (\beta(P_{t,s}),\, \eta(P_{t,s}))\). The following assumption requires that this coordinate path can be controlled independently in each component.

\begin{assumption}[Local product structure]
\label{assumption:product-structure}
The following first-order coordinate conditions hold:
\begin{enumerate}
  \item \textbf{\(\beta\)-coordinate submodel.} There exists a regular (QMD) submodel \(t \mapsto P_t \in \cP\) through \(P_0\) along which the induced coordinate path is differentiable at \(t = 0\) with
  \[\frac{d}{dt}\beta(P_t)\bigg|_{t=0} = 1 \qquad \text{and} \qquad \frac{d}{dt} \eta(P_t)\bigg|_{t = 0} = 0.\]
  \item \textbf{\(\eta\)-coordinate submodel.} For every admissible nuisance perturbation direction \(h \in \dot{\cH}\), there exists a regular (QMD) submodel \(t \mapsto P_t \in \cP\) through \(P_0\) along which the induced coordinate path is differentiable at \(t = 0\) with
  \[\frac{d}{dt}\beta(P_t)\bigg|_{t=0} = 0 \qquad \text{and} \qquad \frac{d}{dt} \eta(P_t)\bigg|_{t = 0} = h.\]
\end{enumerate}
\end{assumption}

This formalizes a condition implicit in the framework of \citet[p.~56]{van_der_laan_unified_2003}, where the model is written as \(\{F_{\mu,\eta}\}\) with \(\mu\) and \(\eta\) ``independently varying,'' and submodels varying only the nuisance parameter are used to generate the nuisance tangent space. Note that Assumption~\ref{assumption:product-structure} requires only first-order control, where the derivatives of \(\beta(P_t)\) and \(\eta(P_t)\) at \(t = 0\) are prescribed, but the paths need not satisfy \(\beta(P_t) = \beta_0 + t\) or \(\eta(P_t) = \eta_0 + th\) exactly for \(t \neq 0\). 

We discuss the relationship between Assumption~\ref{assumption:product-structure} and the notion of local variation independence, as well as the role of product structure in the proof of Lemma~1.3 of \citet{van_der_laan_unified_2003}, in Appendix~\ref{appendix:product_structure}.

\subsubsection{Neyman Orthogonality}
\label{subsubsection:neyman_ortho}

Next, let \(m : \cZ \times \RR \times \cH \to \RR\) be such that \(z \mapsto m(z; \beta, \eta)\) is \(\cA\)-measurable for each 
\((\beta, \eta)\). The function \(m\) plays the role of an estimating function, encoding a moment condition whose solution at the true nuisance value identifies \(\beta_0\), while the explicit dependence on \(\eta\) reflects the presence of nuisance quantities that need to be estimated.

\begin{definition}[Correct local specification]
\label{def:correct_spec}
We say that $m$ is correctly specified in a neighborhood of $(\beta_0, \eta_0)$ if for all $P \in \cP$ with $(\beta(P), \eta(P))$ in a neighborhood of $(\beta_0, \eta_0)$,
\[
  \bE_P[m(Z;\, \beta(P),\, \eta(P))] = 0.
\]
\end{definition}
Correct specification ensures that \(\beta_0\) solves the moment condition at the true nuisance, but does not constrain how the expected estimating function varies with \(\eta\) near \(\eta_0\). Neyman orthogonality strengthens this by requiring that this variation vanish to first order, so that small errors in \(\eta\) do not propagate to estimation of \(\beta\).

\begin{definition}[Neyman orthogonality]
\label{def:neyman_orthogonality}
Assume the map \(\eta \mapsto \bE_0[m(Z; \beta_0, \eta)]\) is G\^ateaux differentiable at \(\eta_0\) along directions \(h \in \dot{\cH}\). We say \(m\) is Neyman orthogonal at \((\beta_0, \eta_0)\) if
\[\frac{\partial}{\partial \eta}\bE_0[m(Z; \beta_0, \eta)]
\bigg|_{\eta = \eta_0}[h] = 0 \quad \forall 
h \in \dot{\cH}.\]
\end{definition}

\begin{remark}
\label{remark:neyman_connection}
The G\^ateaux derivative in Definition~\ref{def:neyman_orthogonality} is computed under the fixed measure \(P_0\) with \(\beta_0\) held fixed. The map \(\eta \mapsto \bE_0[m(Z; \beta_0, \eta)]\) is defined for any \(\eta \in \cH\) for which the integral exists, without requiring that \((\beta_0, \eta)\) correspond to a distribution in the model \(\cP\). In particular, no variation independence 
is needed to formulate Neyman orthogonality. The role of Assumption~\ref{assumption:product-structure} is instead to establish that Neyman orthogonality holds for influence functions.
\end{remark}

\subsection{The \texorpdfstring{$L_2$}{L2} Chain Rule along Coordinate Paths}
\label{subsec:l2_chain_rule}

To connect estimating functions with pathwise differentiability, we also need to differentiate the estimating function 
\(m(Z; \beta, \eta)\) along the coordinate path induced by a regular submodel. The next two assumptions regulate the behavior of this coordinate path and of the estimating function along it.

\begin{assumption}[Coordinate smoothness along a submodel]
\label{assumption:coordinate_smoothness}
For a given regular (QMD) submodel \(t \mapsto P_{t,s}\) through 
\(P_0\) with score \(s\), the induced coordinate path satisfies:
\begin{enumerate}
    \item \(t \mapsto \beta_{t,s} := \beta(P_{t,s})\) is 
    differentiable at \(0\): 
    \((\beta_{t,s} - \beta_0)/t \to \dot\beta_{0,s} \in \RR\).
    \item \(t \mapsto \eta_{t,s} := \eta(P_{t,s})\) is 
    differentiable at \(0\) in \(\cV\): 
    \(\|(\eta_{t,s} - \eta_0)/t - \dot\eta_{0,s}\|_{\cV} \to 0\) 
    for some \(\dot\eta_{0,s} \in \dot{\cH}\).
\end{enumerate}
In particular, 
\(\dot\beta_{0,s} = \frac{d}{dt}\beta(P_{t,s})|_{t=0}\).
\end{assumption}

\begin{assumption}[Fr\'echet differentiability of \(m\) in 
\(L_2(P_0)\)]
\label{assumption:frechet_m}
The map \((\beta, \eta) \mapsto m(\cdot\,; \beta, \eta) \in 
L_2(P_0)\) is Fr\'echet differentiable at \((\beta_0, \eta_0)\). 
That is, there exist bounded linear maps
\[D_\beta m_0 : \RR \to L_2(P_0), \qquad D_\eta m_0 : \cV \to 
L_2(P_0)\]
such that
\[\|m(\cdot\,; \beta, \eta) - m(\cdot\,; \beta_0, \eta_0) - 
D_\beta m_0(\beta - \beta_0) - D_\eta m_0(\eta - \eta_0)
\|_{L_2(P_0)} = o\!\left(|\beta - \beta_0| + \|\eta - \eta_0
\|_{\cV}\right).\]
We write \(\partial_\beta m(Z; \beta_0, \eta_0) := D_\beta 
m_0(1)(Z)\) and \(\partial_\eta m(Z; \beta_0, \eta_0)[h] := 
D_\eta m_0(h)(Z)\).
\end{assumption}

\begin{lemma}[\(L_2\) chain rule]
\label{lem:l2_chain_rule}
Under Assumptions~\ref{assumption:coordinate_smoothness} 
and~\ref{assumption:frechet_m}, define 
\(f_{t,s}(Z) := m(Z; \beta_{t,s}, \eta_{t,s})\) and
\[\dot f_{0,s}(Z) := \partial_\beta m(Z; \beta_0, \eta_0)\,
\dot\beta_{0,s} + \partial_\eta m(Z; \beta_0, \eta_0)
[\dot\eta_{0,s}].\]
Then \((f_{t,s} - f_0)/t \to \dot f_{0,s}\) in \(L_2(P_0)\).
(Proof in Appendix~\ref{appendix:proof_chain_rule}.)
\end{lemma}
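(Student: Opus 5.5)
The plan is to split the difference quotient into the Fréchet linear term and a remainder, and then control each piece using Assumption~\ref{assumption:frechet_m} and Assumption~\ref{assumption:coordinate_smoothness}. For $t \neq 0$, write $\Delta\beta_t := \beta_{t,s} - \beta_0$ and $\Delta\eta_t := \eta_{t,s} - \eta_0$. We decompose
\[
\frac{f_{t,s} - f_0}{t} - \dot f_{0,s}
= \frac{R_t}{t}
+ \partial_\beta m_0\left(\frac{\Delta\beta_t}{t} - \dot\beta_{0,s}\right)
+ D_\eta m_0\left(\frac{\Delta\eta_t}{t} - \dot\eta_{0,s}\right),
\]
where $R_t := m(\cdot\,;\beta_{t,s},\eta_{t,s}) - m(\cdot\,;\beta_0,\eta_0) - D_\beta m_0(\Delta\beta_t) - D_\eta m_0(\Delta\eta_t)$. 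We also use that $D_\beta m_0(\Delta\beta_t) = \Delta\beta_t\,\partial_\beta m_0$ by linearity on $\RR$.

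The second and third terms are handled by boundedness of the linear maps. The $L_2(P_0)$ norm of the second term is at most $\|\partial_\beta m_0\|_{L_2(P_0)}\,|\Delta\beta_t/t - \dot\beta_{0,s}|$, which tends to $0$ by part 1 of Assumption~\ref{assumption:coordinate_smoothness}. The norm of the third term is at most $\|D_\eta m_0\|_{\mathrm{op}}\,\|\Delta\eta_t/t - \dot\eta_{0,s}\|_{\cV}$, which tends to $0$ by part 2. Note that $\dot\eta_{0,s} \in \dot\cH \subset \cV$, so $D_\eta m_0$ is defined there.

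The remainder term is the only step needing care. There are two points to check.

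\emph{Applicability of the Fréchet bound.} The Fréchet expansion is only asserted at points $(\beta,\eta)$ in the domain of $m$. This holds because $(\beta_{t,s},\eta_{t,s})$ lies in the attainable set $\Theta \subset \RR\times\cH$.

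\emph{The case $(\Delta\beta_t,\Delta\eta_t) = 0$.} Here $R_t = 0$ trivially, so there is nothing to bound.

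Otherwise, the $o(\cdot)$ statement gives $\|R_t\|_{L_2(P_0)} \le \varepsilon(\Delta_t)\,(|\Delta\beta_t| + \|\Delta\eta_t\|_{\cV})$, with $\varepsilon(\Delta_t) \to 0$ as $\Delta_t := |\Delta\beta_t| + \|\Delta\eta_t\|_{\cV} \to 0$. By Assumption~\ref{assumption:coordinate_smoothness}, $\Delta_t/|t| \to |\dot\beta_{0,s}| + \|\dot\eta_{0,s}\|_{\cV} < \infty$. In particular, $\Delta_t/|t|$ is bounded for small $t$, and $\Delta_t \to 0$. Hence
\[
\|R_t/t\|_{L_2(P_0)} \le \varepsilon(\Delta_t)\cdot \Delta_t/|t| \to 0.
\]

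Combining the three bounds by the triangle inequality yields $(f_{t,s} - f_0)/t \to \dot f_{0,s}$ in $L_2(P_0)$. The only real obstacle is expressing the $o(\cdot)$ remainder uniformly through $\varepsilon(\cdot)$ and showing that the coordinate increments are $O(|t|)$. Both follow directly from the stated assumptions.
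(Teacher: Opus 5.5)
Your proposal is correct and follows essentially the same route as the paper's proof in Appendix~\ref{appendix:proof_chain_rule}. Both split the difference quotient into the two bounded-linear-map terms and the Fr\'echet remainder, then use $|\beta_{t,s}-\beta_0| + \|\eta_{t,s}-\eta_0\|_{\cV} = O(|t|)$ to conclude $\|r_{t,s}/t\|_{L_2(P_0)} = o(1)$. Your explicit $\varepsilon(\Delta_t)$ formulation, and the handling of the case of zero increments, simply make that last step more careful.
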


\section{Equivalence Between Neyman Orthogonality and Pathwise 
Differentiability}\label{sec:equivalence}

We now establish the relationship between Neyman orthogonality and pathwise differentiability. The forward direction (Section~\ref{subsec:forward}) demonstrates that a Neyman orthogonal estimating function with nondegenerate Jacobian induces an influence function, and hence pathwise differentiability. The reverse direction (Section~\ref{subsec:reverse}) shows that if a correctly specified estimating function evaluates to an influence function at the truth, then it must be Neyman orthogonal and its sensitivity to the target parameter is fully calibrated by the influence function representation. Here, we require local product structure in order to specialize to coordinate submodels that perturb $\beta$ and $\eta$ independently.

The proofs of the two directions differ regarding their structural requirements. The forward direction requires that the induced coordinate paths be smooth along a dense class of regular submodels and that the target functional be locally Lipschitz in Hellinger distance, whereas the reverse direction requires the local product structure of Assumption~\ref{assumption:product-structure} in order to construct submodels that perturb \(\beta\) and \(\eta\) independently.

\subsection{Neyman Orthogonality Implies Pathwise 
Differentiability}
\label{subsec:forward}

Fix an estimating function \(m : \cZ \times \RR \times \cH \to \RR\). Correct specification ensures \(\bE_{P_{t,s}}[m(Z; \beta_{t,s}, \eta_{t,s})] = 0\) identically along any regular submodel, so the derivative of this constant function vanishes. Expanding the derivative via Lemma~\ref{lem:differentiation_varying_f} and the \(L_2\) chain rule (Lemma~\ref{lem:l2_chain_rule}), and then invoking Neyman orthogonality to eliminate the nuisance contribution, yields a representation of \(\dot\beta_{0,s}\) as an inner product with the score, which is exactly pathwise differentiability.

\begin{assumption}[Correct specification]
\label{assumption:correct_spec}
The estimating function \(m\) is correctly specified at \((\beta_0, \eta_0)\) in the sense of Definition~\ref{def:correct_spec}.
\end{assumption} 

\begin{assumption}[Coordinate smoothness along a dense class of submodels]
\label{assumption:coord_smooth_all}
There exists a set of scores \(S \subset L_\infty(P_0) \cap L_2^0(P_0)\) whose \(L_2(P_0)\)-closure is equal to \(\cT\) such that for each \(s \in S\), there exists a regular submodel \(t \mapsto P_{t,s}\) through \(P_0\) with score \(s\) along which the induced coordinate path
\[t \mapsto (\beta_{t,s}, \eta_{t,s}) = (\beta(P_{t,s}), 
\eta(P_{t,s}))\]
satisfies Assumption~\ref{assumption:coordinate_smoothness}.
\end{assumption}

\begin{assumption}[Fr\'echet differentiability of \(m\)]
\label{assumption:frechet_all}
The map \((\beta, \eta) \mapsto m(\cdot\,; \beta, \eta) \in 
L_2(P_0)\) satisfies 
Assumption~\ref{assumption:frechet_m}.
\end{assumption}

\begin{assumption}[Regularity along submodels]
\label{assumption:regularity_submodels}
For each \(s \in S\) and the corresponding submodel \(t \mapsto P_{t, s}\) from Assumption~\ref{assumption:coord_smooth_all}, the function \(f_{t,s}(Z) := m(Z; \beta_{t,s}, \eta_{t,s})\) satisfies the conditions of Lemma~\ref{lem:differentiation_varying_f}.
\end{assumption}

\begin{assumption}[Nondegenerate Jacobian]
\label{assumption:jacobian}
\[G := \bE_0[\partial_\beta m(Z; \beta_0, \eta_0)] \neq 0.\]
\end{assumption}

\begin{assumption}[Neyman orthogonality]
\label{assumption:neyman}
For all \(h \in \dot{\cH}\),
\[\frac{\partial}{\partial \eta}\bE_0[m(Z; \beta_0, \eta)]
\bigg|_{\eta = \eta_0}[h] = 0.\]
\end{assumption}

\begin{assumption}[Hellinger Lipschitz]
\label{assumption:hellinger}
    There exist \(c, \delta > 0\) such that 
    \[|\beta(P_1) - \beta(P_2)| \le cH(P_1, P_2) \quad \forall P_1, P_2 \in \cP  \text{ with \(H(P_i, P_0) \le \delta\)}.\]
\end{assumption}

Assumptions~\ref{assumption:correct_spec} and~\ref{assumption:neyman} are the two standard requirements on the estimating function introduced in Section~\ref{subsubsection:neyman_ortho}. Assumptions~\ref{assumption:coord_smooth_all}
through~\ref{assumption:regularity_submodels} ensure that the differentiation machinery of Section~\ref{sec:setup} applies along a dense class of regular submodels. These amount to differentiability of
the estimating function in its parameters and of the functionals \(\beta\) and \(\eta\) along these submodels, together with boundedness and integrability conditions near the truth. Assumption~\ref{assumption:jacobian} ensures that the rescaling \(\varphi = -G^{-1}m(Z;\beta_0,\eta_0)\) in the conclusion of Theorem~\ref{thm:forward} is well defined. Assumption~\ref{assumption:hellinger} provides the quantitative control needed to extend the pathwise derivative from the dense class of bounded scores to all scores. It bounds how fast $\beta$ can vary relative to the Hellinger distance between distributions, ensuring that replacing an arbitrary regular submodel by one from the dense class with a nearby score incurs a controlled error in the derivative of $\beta$.

\begin{theorem}[Neyman orthogonality implies pathwise 
differentiability]
\label{thm:forward}
Under 
Assumptions~\ref{assumption:correct_spec}--\ref{assumption:hellinger}, 
\(\beta\) is pathwise differentiable 
(Definition~\ref{def:pathwise}) at \(P_0\) with influence 
function
\[\varphi(Z) := -G^{-1}m(Z; \beta_0, \eta_0).\]
\end{theorem}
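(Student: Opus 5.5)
The proof has two stages. First we establish the influence-function representation on the dense class $S$ of bounded scores from Assumption~\ref{assumption:coord_smooth_all}. Then we extend it to an arbitrary regular submodel using the Hellinger Lipschitz condition (Assumption~\ref{assumption:hellinger}).

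\textbf{Stage 1 (dense class).} Fix $s \in S$ and the submodel $t \mapsto P_{t,s}$ supplied by Assumption~\ref{assumption:coord_smooth_all}. For small $|t|$ the coordinates $(\beta_{t,s},\eta_{t,s})$ lie in a neighborhood of $(\beta_0,\eta_0)$ by Assumption~\ref{assumption:coordinate_smoothness}. Correct specification (Assumption~\ref{assumption:correct_spec}) then gives $\bE_{P_{t,s}}[f_{t,s}] = 0$ identically, so the derivative at $0$ vanishes. By Assumption~\ref{assumption:regularity_submodels}, Lemma~\ref{lem:differentiation_varying_f} applies, with $\dot f_{0,s}$ identified by the $L_2$ chain rule (Lemma~\ref{lem:l2_chain_rule}). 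This yields
\[
0 = \bE_0[m(Z;\beta_0,\eta_0)\,s(Z)] + \dot\beta_{0,s}\,\bE_0[\partial_\beta m(Z;\beta_0,\eta_0)] + \bE_0\bigl[\partial_\eta m(Z;\beta_0,\eta_0)[\dot\eta_{0,s}]\bigr].
\]
The last term must be shown to be zero. Because $h \mapsto \bE_0[D_\eta m_0(h)]$ is a bounded linear functional, the Fréchet differentiability of Assumption~\ref{assumption:frechet_m} shows that it coincides with the Gâteaux derivative of $\eta \mapsto \bE_0[m(Z;\beta_0,\eta)]$ in direction $h$. Indeed, $L_2$ convergence implies convergence of means under $P_0$. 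Since $\dot\eta_{0,s} \in \dot\cH$, Neyman orthogonality (Assumption~\ref{assumption:neyman}) kills the term. Solving with $G \neq 0$ then gives $\dot\beta_{0,s} = \bE_0[\varphi s]$ for all $s \in S$, with $\varphi = -G^{-1}m(Z;\beta_0,\eta_0)$. Correct specification at $P_0$ gives $\bE_0[\varphi]=0$, and $\varphi \in L_2(P_0)$ by Assumption~\ref{assumption:frechet_m}, so $\varphi \in L_2^0(P_0)$.

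\textbf{Stage 2 (extension).} This is the main obstacle. Let $P_{t,s}$ be an arbitrary regular submodel with score $s \in \cS \subset \cT$. Choose $s_k \in S$ with $s_k \to s$ in $L_2(P_0)$, together with the associated submodels $P_{t,s_k}$. By Assumption~\ref{assumption:hellinger}, for small $t$,
\[
\Bigl|\tfrac{\beta(P_{t,s}) - \beta_0}{t} - \tfrac{\beta(P_{t,s_k}) - \beta_0}{t}\Bigr| \le c\,\tfrac{H(P_{t,s},P_{t,s_k})}{|t|}.
\]
QMD of both paths gives $\sqrt{p_{t,\cdot}} = \sqrt{p_0} + \tfrac t2 s_\cdot\sqrt{p_0} + o(t)$ in $L_2(\nu)$. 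Hence
\[
\limsup_{t\to0} \frac{H(P_{t,s},P_{t,s_k})}{|t|} = \tfrac{1}{2\sqrt2}\|s-s_k\|_{L_2(P_0)}.
\]
Using Stage 1 for $s_k$, the $\limsup$ and $\liminf$ of the difference quotient of $\beta(P_{t,s})$ both lie within $\bE_0[\varphi s_k] \pm \tfrac{c}{2\sqrt2}\|s - s_k\|$. Letting $k \to \infty$, with continuity of $s \mapsto \bE_0[\varphi s]$, the derivative exists and equals $\bE_0[\varphi s]$. This proves pathwise differentiability in the sense of Definition~\ref{def:pathwise}.

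The delicate points are all in Stage 2. The Hellinger bound needs both $P_{t,s}$ and $P_{t,s_k}$ to lie within $\delta$ of $P_0$, which holds for $|t|$ small by QMD. The existence of the derivative itself, and not merely bounds on it, comes from the squeeze argument. Finally, the approximation scores must be drawn from $S$, whose closure is $\cT \supseteq \cS$.
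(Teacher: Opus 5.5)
Your proposal is correct and follows the paper's proof essentially step for step. Stage 1 is the same derivation via correct specification, Lemma~\ref{lem:differentiation_varying_f}, the $L_2$ chain rule and Neyman orthogonality. Stage 2 is the same extension via the Hellinger gap between QMD paths and Assumption~\ref{assumption:hellinger}. The remaining differences are cosmetic: you use an approximating sequence $s_k$ with a limsup/liminf squeeze where the paper uses a fixed $\epsilon$ and a three-term bound, and you get $\varphi \in L_2(P_0)$ from the codomain of the Fr\'echet map where the paper uses boundedness of $f_0$.
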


\begin{proof}
Let \(s \in S\) and let \(t \mapsto P_{t,s}\) be a regular submodel through \(P_0\) with score \(s\) as furnished by Assumption~\ref{assumption:coord_smooth_all}. By Assumption~\ref{assumption:coord_smooth_all}, the induced coordinate path \((\beta_{t,s}, \eta_{t,s})\) lies in the neighborhood of \((\beta_0, \eta_0)\) for small \(t\). Assumption~\ref{assumption:correct_spec} then gives
\[\bE_{P_{t,s}}[m(Z; \beta_{t,s}, \eta_{t,s})] = 0 \quad \text{for all sufficiently small } t.\]
Define \(f_{t,s}(Z) := m(Z; \beta_{t,s}, \eta_{t,s})\) and \(f_0(Z) := m(Z; \beta_0, \eta_0)\). Since \(t \mapsto \bE_{P_{t,s}}[f_{t,s}]\) is identically zero,
\[\frac{d}{dt}\bE_{P_{t,s}}[f_{t,s}]\bigg|_{t=0} = 0.\]
We apply Lemma~\ref{lem:differentiation_varying_f} to the function \(f_{t,s}\), which is valid by Assumption~\ref{assumption:regularity_submodels}. By the \(L_2\) chain rule (Lemma~\ref{lem:l2_chain_rule}), which applies under Assumptions~\ref{assumption:coord_smooth_all} and~\ref{assumption:frechet_all}, the quotient \((f_{t,s} - f_0)/t\) converges in \(L_2(P_0)\) to
\[\dot f_{0,s}(Z) = \partial_\beta m(Z; \beta_0, \eta_0)\,
\dot\beta_{0,s} + \partial_\eta m(Z; \beta_0, \eta_0)
[\dot\eta_{0,s}].\]
Lemma~\ref{lem:differentiation_varying_f} thus gives
\[0 = \bE_0[f_0(Z)\, s(Z)] + \bE_0[\dot f_{0,s}(Z)].\]
Substituting the expression for \(\dot f_{0,s}\) and using linearity of expectation,
\begin{equation}
\label{eq:intermediate_identity}
    0 = \bE_0[m(Z; \beta_0, \eta_0)\, s(Z)] + 
\bE_0[\partial_\beta m(Z; \beta_0, \eta_0)]\,\dot\beta_{0,s} + 
\bE_0[\partial_\eta m(Z; \beta_0, \eta_0)[\dot\eta_{0,s}]].
\end{equation}
Now, \(\dot\eta_{0,s} \in \dot{\cH}\) by Assumption~\ref{assumption:coord_smooth_all}, and Fr\'echet differentiability (Assumption~\ref{assumption:frechet_all}) permits the interchange of derivative and expectation, so that
\[\bE_0[\partial_\eta m(Z; \beta_0, \eta_0)[h]] = 
\frac{\partial}{\partial \eta}\bE_0[m(Z; \beta_0, \eta)]
\bigg|_{\eta = \eta_0}[h],\]
which vanishes by Neyman orthogonality (Assumption~\ref{assumption:neyman}). Recalling \(G = \bE_0[\partial_\beta m(Z; \beta_0, \eta_0)]\), we are left with
\[0 = \bE_0[m(Z; \beta_0, \eta_0)\, s(Z)] + G\,\dot\beta_{0,s}.\]
Since \(G \neq 0\) by Assumption~\ref{assumption:jacobian},
\[\dot\beta_{0,s} = -G^{-1}\bE_0[m(Z; \beta_0, \eta_0)\, s(Z)] 
= \bE_0[\varphi(Z)\, s(Z)],\]
where \(\varphi(Z) = -G^{-1}m(Z; \beta_0, \eta_0)\). We note that \(\varphi \in L_2^0(P_0)\). Lemma~\ref{lem:differentiation_varying_f}, invoked via Assumption~\ref{assumption:regularity_submodels}, requires \(f_0 = m(Z; \beta_0, \eta_0)\) to be bounded and measurable. Since this is a property of \(f_0\) alone and does not depend on the choice of submodel, \(\varphi = -G^{-1}f_0\) is bounded and measurable, hence in \(L_2(P_0)\). Mean zero follows from correct specification at the truth (Assumption~\ref{assumption:correct_spec}).

It remains to extend the conclusion to all regular submodels. To start, let \(t \mapsto P_{t, s'}\) be an arbitrary regular submodel through \(P_0\) with score \(s' \in \cS\) and fix \(\epsilon > 0\). Since \(S\) is dense in \(\cT\) by Assumption~\ref{assumption:coord_smooth_all}, there exists \(g \in S\) with 
\[\|s' - g\|_{L_2(P_0)} \le \epsilon. \]
Let \(t \mapsto P_{t, g}\) be the regular submodel with score \(g\) furnished by Assumption~\ref{assumption:coord_smooth_all}. By QMD, we know that \(H(P_{t, s'}, P_0) \to 0\) as \(t \to 0\), and \(H(P_{t, g}, P_0) \to 0\) as \(t \to 0\). Let \(\delta > 0\) be as in Assumption~\ref{assumption:hellinger}. There exists \(t^* > 0\) such that for all \(|t| < t^*\),
\[H(P_{t, s'}, P_0) \le \delta \quad \text{and} \quad H(P_{t, g}, P_0) \le \delta. \]
By Lemma~\ref{lem:hellinger_gap} (Appendix~\ref{subsec:hellinger_gap}), which bounds the Hellinger distance between two regular submodels in terms of the $L_2(P_0)$ distance between their scores,
\[\limsup_{t \to 0}\frac{H(P_{t, s'}, P_{t, g})}{|t|} \le \frac{1}{2\sqrt{2}}\|s' - g\|_{L_2(P_0)} \le \frac{\epsilon}{2\sqrt{2}}.\]
Then it follows that
\[\limsup_{t \to 0}\frac{|\beta(P_{t, s'}) - \beta(P_{t, g})|}{|t|} \le c \cdot \limsup_{t \to 0} \frac{H(P_{t, s'}, P_{t, g})}{|t|} \le \frac{c\epsilon}{2 \sqrt{2}},\]
where the first inequality holds by Assumption~\ref{assumption:hellinger}, since both $P_{t, s'}$ and $P_{t, g}$ lie within Hellinger distance $\delta$ of $P_0$ for $|t| < t^*$. Next, for any \(t \neq 0\) with \(|t| < t^*\), we write
\[\left|\frac{\beta(P_{t, s'}) - \beta_0}{t} - \bE_{0}[\varphi s']\right| \le \underbrace{\frac{|\beta(P_{t, s'}) - \beta(P_{t, g})|}{|t|}}_{(\mathrm{I})} + \underbrace{\left|\frac{\beta(P_{t, g}) - \beta_0}{t} - \bE_{0}[\varphi g]\right|}_{(\mathrm{II})} + \underbrace{|\bE_0[\varphi(s' - g)]|}_{(\mathrm{III})}.\]
For term (I), we know \(\limsup_{t \to 0}(\mathrm{I}) \le \frac{c\epsilon}{2 \sqrt{2}}.\) For term (II), the score \(g\) lies in \(S\), so we know from above that \(\lim_{t \to 0} (\mathrm{II}) = 0\). For term (III), by Cauchy-Schwarz, we have
\[|\bE_0[\varphi(s' - g)]| \le \|\varphi\|_{L_2(P_0)} \cdot\|s' - g\|_{L_2(P_0)} \le \|\varphi\|_{L_2(P_0)} \cdot \epsilon. \]
Combining the three terms, we arrive at
\[\limsup_{t \to 0} \left|\frac{\beta(P_{t, s'}) - \beta_0}{t} - \bE_{0}[\varphi s']\right| \le \epsilon \left(\frac{c}{2\sqrt{2}} + \|\varphi\|_{L_2(P_0)} \right).\]
Since \(\epsilon\) was arbitrary, the left-hand side evaluates to zero. Therefore,
\[\frac{d}{dt}\beta(P_{t, s'})\bigg|_{t = 0} = \bE_0[\varphi(Z)s'(Z)].\] 
Since \(s' \in \cS\) was arbitrary, Definition~\ref{def:pathwise} is satisfied and \(\beta\) is pathwise differentiable at \(P_0\) with influence function \(\varphi\).
\end{proof}

\begin{remark}[Hellinger Lipschitz]
The extension from bounded scores to all scores in the proof of Theorem~\ref{thm:forward} adapts an argument from \citet{luedtke_one-step_2024}, who use a Hellinger Lipschitz condition to establish pathwise differentiability of Hilbert-valued parameters from a score-dense class of submodels (their Lemma~2). The first part of the proof, which establishes the derivative representation on the dense class from Neyman orthogonality, is specific to the present setting.
\end{remark}

\begin{remark}[Neyman orthogonality and efficiency]
Theorem~\ref{thm:forward} establishes that a Neyman orthogonal moment induces an influence function \(\varphi\). However, the induced influence function need not always be the efficient influence function. As noted in Remark~\ref{remark:uniqueness},
\(
\Lambda^\perp \cap \cT = \mathrm{span}(\varphi_{\mathrm{eff}}),
\)
where \(\varphi_{\mathrm{eff}}\) is the unique element lying in \(\Lambda^\perp \cap \cT\), and every influence function admits a decomposition
\(
\varphi = \varphi_{\mathrm{eff}} + h\) with 
\(h \in \cT^\perp.
\)
In the nonparametric model, \(\cT = L_2^0(P_0)\) by Corollary~\ref{cor:saturation}, so \(\cT^\perp = \{0\}\) and the influence function is unique. In semiparametric models, however, \(\cT^\perp\) is generally nontrivial, so a Neyman orthogonal moment may induce an influence function with \(h \neq 0\), yielding a valid but inefficient estimating equation. In conclusion, efficiency does not automatically follow within the class of orthogonal moments, and is attained when the induced influence function equals \(\varphi_{\mathrm{eff}}\), equivalently when its component lying in \(\cT^\perp\) vanishes. We refer the reader to \citet[Chapter~3.2]{bickel_efficient_1998}, \citet[Chapter~1.4]{van_der_laan_unified_2003}, \citet[Chapter~4.4]{tsiatis_semiparametric_2006}, 
\citet[Section~3.3]{kennedy_semiparametric_2016},
\citet[Section~2.2]{chen_overidentification_2018} for a more in-depth treatment.
\end{remark}

\subsection{Pathwise Differentiability Implies Neyman Orthogonality}
\label{subsec:reverse}

We now prove the converse. If \(m\) is a correctly specified
estimating function whose value at the truth is an influence
function, then \(m\) is Neyman orthogonal and its sensitivity to perturbations of \(\beta\) is pinned at unit rate by the pathwise derivative. Unlike the forward direction, this
requires the local product structure of
Assumption~\ref{assumption:product-structure} in order to
specialize Equation~\ref{eq:intermediate_identity} from the proof of
Theorem~\ref{thm:forward} to each coordinate axis independently.

\begin{assumption}[Pathwise differentiability and influence function representation]
\label{assumption:pathwise}
The functional \(\beta\) is pathwise differentiable at \(P_0\)
(Definition~\ref{def:pathwise}) with influence function
\(\varphi(Z) \equiv m(Z; \beta_0, \eta_0)\). 
\end{assumption}

\begin{assumption}[Local product structure]
\label{assumption:product_reverse}
Assumption~\ref{assumption:product-structure} holds. We denote
the score of the \(\beta\)-coordinate submodel by \(s_\beta\)
and the score of the \(\eta\)-coordinate submodel in direction
\(h\) by \(s_h\).
\end{assumption}

\begin{assumption}[Regularity along coordinate submodels]
\label{assumption:regularity_coordinate}
For each coordinate submodel \(t \mapsto P_{t,s}\) from
Assumption~\ref{assumption:product_reverse}, the score \(s\) is
bounded and the function
\(f_{t,s}(Z) := m(Z; \beta_{t,s}, \eta_{t,s})\) satisfies the
conditions of Lemma~\ref{lem:differentiation_varying_f}.
\end{assumption}

\begin{theorem}[Pathwise differentiability implies Neyman
orthogonality]
\label{thm:reverse}
Under
Assumptions~\ref{assumption:correct_spec},
\ref{assumption:frechet_all},
and~\ref{assumption:pathwise}--\ref{assumption:regularity_coordinate},
the estimating function \(m\) satisfies:
\begin{enumerate}
    \item \textbf{Neyman orthogonality.} For all
    \(h \in \dot{\cH}\),
    \[\frac{\partial}{\partial \eta}\bE_0[m(Z; \beta_0, \eta)]
    \bigg|_{\eta = \eta_0}[h] = 0.\]
    \item \textbf{\(-1\) normalization.}
    \[G := \bE_0[\partial_\beta m(Z; \beta_0, \eta_0)] = -1.\]
\end{enumerate}
\end{theorem}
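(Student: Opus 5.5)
The plan is to reuse the derivation leading to \eqref{eq:intermediate_identity} in the proof of Theorem~\ref{thm:forward}, now along the coordinate submodels from Assumption~\ref{assumption:product_reverse}. The earlier derivation used no Neyman orthogonality or Jacobian condition. It needs only three ingredients. First, correct specification (Assumption~\ref{assumption:correct_spec}) makes \(t \mapsto \bE_{P_{t,s}}[m(Z;\beta_{t,s},\eta_{t,s})]\) identically zero for small \(t\). Second, Lemma~\ref{lem:differentiation_varying_f} applies, which here is guaranteed by Assumption~\ref{assumption:regularity_coordinate}. Third, the \(L_2\) chain rule (Lemma~\ref{lem:l2_chain_rule}) holds, which needs Assumption~\ref{assumption:frechet_all} together with Assumption~\ref{assumption:coordinate_smoothness} along the given submodel.

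The coordinate submodels supply Assumption~\ref{assumption:coordinate_smoothness} directly. In the \(\beta\)-coordinate submodel we have \(\dot\beta_{0,s_\beta}=1\) and \(\dot\eta_{0,s_\beta}=0\in\dot\cH\). In the \(\eta\)-coordinate submodel in direction \(h\) we have \(\dot\beta_{0,s_h}=0\) and \(\dot\eta_{0,s_h}=h\in\dot\cH\). One caveat: Assumption~\ref{assumption:product-structure} asks only that \(\eta(P_t)\) be differentiable, and I will read this in the \(\cV\)-norm sense of Assumption~\ref{assumption:coordinate_smoothness}. The coordinate paths also stay in the neighborhood of \((\beta_0,\eta_0)\) for small \(t\), by continuity. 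So for each coordinate submodel with score \(s\), identity \eqref{eq:intermediate_identity} holds:
\[0=\bE_0[m(Z;\beta_0,\eta_0)s(Z)]+G\,\dot\beta_{0,s}+\bE_0\bigl[\partial_\eta m(Z;\beta_0,\eta_0)[\dot\eta_{0,s}]\bigr].\]

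Next I use Assumption~\ref{assumption:pathwise}. Since \(\varphi=m(\cdot;\beta_0,\eta_0)\) is an influence function, Definition~\ref{def:pathwise} gives \(\bE_0[m(Z;\beta_0,\eta_0)s(Z)]=\dot\beta_{0,s}\) along every regular submodel, in particular along the coordinate ones.

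\begin{itemize}
\item For the \(\beta\)-submodel the identity becomes \(0=1+G\cdot 1+0\), so \(G=-1\).
\item For the \(\eta\)-submodel in direction \(h\) the identity becomes \(0=0+0+\bE_0[\partial_\eta m(Z;\beta_0,\eta_0)[h]]\).
\end{itemize}

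It remains to turn the second bullet into the G\^ateaux derivative statement. By Assumption~\ref{assumption:frechet_all}, \(m(\cdot;\beta_0,\eta_0+uh)-m(\cdot;\beta_0,\eta_0)-u\,D_\eta m_0(h)=o(u)\) in \(L_2(P_0)\). The expectation is a bounded linear functional on \(L_2(P_0)\), since \(|\bE_0 f|\le\|f\|_{L_2(P_0)}\). Hence \(\frac{d}{du}\bE_0[m(Z;\beta_0,\eta_0+uh)]\big|_{u=0}=\bE_0[\partial_\eta m(Z;\beta_0,\eta_0)[h]]=0\). This holds for every \(h\in\dot\cH\), which is Neyman orthogonality.

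The steps are mostly bookkeeping. The main point requiring care is verifying that the coordinate submodels meet all the hypotheses used to derive \eqref{eq:intermediate_identity}: Assumption~\ref{assumption:coordinate_smoothness} for the coordinate paths, and staying inside the correct-specification neighborhood. The conceptual obstacle, which is where product structure is essential, is separating the \(\beta\)- and \(\eta\)-contributions in a single identity. Along a generic submodel both terms appear together and cannot be disentangled. The coordinate submodels let us isolate each one.
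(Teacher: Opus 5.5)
Your proposal is correct and follows essentially the same route as the paper. You rerun the derivation of \eqref{eq:intermediate_identity} along the coordinate submodels, replace the first term by $\dot\beta_{0,s}$ via the influence function representation, and specialize to the $\eta$- and $\beta$-coordinate submodels to get Neyman orthogonality and $G=-1$. The only difference is cosmetic: you keep the nuisance term as $\bE_0[\partial_\eta m(Z;\beta_0,\eta_0)[\dot\eta_{0,s}]]$ and convert it to the G\^ateaux derivative at the end (using linearity for direction $0$), whereas the paper makes that interchange before specializing.
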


\begin{proof}
The coordinate submodels furnished by
Assumption~\ref{assumption:product_reverse} are regular
submodels through \(P_0\), and their induced coordinate paths
are differentiable at \(t = 0\) by construction, so they
satisfy Assumption~\ref{assumption:coordinate_smoothness}.
Together with correct specification
(Assumption~\ref{assumption:correct_spec}), Fr\'echet
differentiability (Assumption~\ref{assumption:frechet_all}),
and the regularity conditions of
Assumption~\ref{assumption:regularity_coordinate}, the
derivation in the proof of Theorem~\ref{thm:forward} leading
to Equation~\ref{eq:intermediate_identity} applies to each coordinate
submodel \(t \mapsto P_{t,s}\) with score \(s\), 
\begin{equation}
\label{eq:intermediate_reverse}
0 = \bE_0[m(Z; \beta_0, \eta_0)\, s(Z)] + G\,\dot\beta_{0,s}
+ \frac{\partial}{\partial \eta}\bE_0[m(Z; \beta_0, \eta)]
\bigg|_{\eta = \eta_0}\bigl[\dot\eta_{0,s}\bigr].
\end{equation}
By the influence function representation
(Assumption~\ref{assumption:pathwise}), the first term equals
\(\dot\beta_{0,s} = \frac{d}{dt}\beta(P_{t,s})|_{t=0}\),
so~\eqref{eq:intermediate_reverse} becomes
\begin{equation}
\label{eq:master_identity}
(1 + G)\,\dot\beta_{0,s} +
\frac{\partial}{\partial \eta}
\bE_0[m(Z; \beta_0, \eta)]\bigg|_{\eta = \eta_0}
\bigl[\dot\eta_{0,s}\bigr] = 0.
\end{equation}
We now specialize~\eqref{eq:master_identity} to each coordinate
submodel.

\textbf{Part 1.}
Fix \(h \in \dot{\cH}\) and take \(t \mapsto P_{t, s_h}\) to
be the \(\eta\)-coordinate submodel from
Assumption~\ref{assumption:product_reverse}. By
Assumption~\ref{assumption:product_reverse},
\(\dot\beta_{0, s_h} = 0\) and
\(\dot\eta_{0, s_h} = h\).
Substituting into~\eqref{eq:master_identity},
\[\frac{\partial}{\partial \eta}\bE_0[m(Z; \beta_0, \eta)]
\bigg|_{\eta = \eta_0}[h] = 0.\]
Since \(h \in \dot{\cH}\) was arbitrary, Neyman orthogonality
holds.

\textbf{Part 2.}
Take \(t \mapsto P_{t, s_\beta}\) to be the \(\beta\)-coordinate submodel from Assumption~\ref{assumption:product_reverse}. By Assumption~\ref{assumption:product_reverse}, \(\dot\beta_{0, s_\beta} = 1\) and \(\dot\eta_{0, s_\beta} = 0\). Since the G\^ateaux derivative
in~\eqref{eq:master_identity} is evaluated at direction \(\dot\eta_{0, s_\beta} = 0\), the numerator of the defining difference quotient vanishes identically, which leaves
\[(1 + G) \cdot 1 = 0,\]
hence \(G = -1\).
\end{proof}

\begin{remark}[Structural comparison with the forward direction]
\label{remark:structural_comparison}
The forward and reverse directions share the same intermediate identity~\eqref{eq:intermediate_identity}, but differ in what is known and what is derived. In the forward direction, Neyman orthogonality eliminates the nuisance term, and the resulting
inner-product representation \(\dot\beta_{0,s} = \bE_0[\varphi\, s]\) for every score
\(s \in \cS\) yields pathwise differentiability. In the reverse direction, the influence function representation converts the first term into \(\dot\beta_{0,s}\), and product structure
allows one to specialize the resulting identity~\eqref{eq:master_identity} to each coordinate axis
independently, yielding Neyman orthogonality and \(G = -1\).

The two directions also place different requirements on the submodels. In Theorem~\ref{thm:forward},
the coordinate path \((\beta_{t,s}, \eta_{t,s})\) arises from evaluating the functionals \(\beta\) and \(\eta\) along regular submodels from the dense class in Assumption~\ref{assumption:coord_smooth_all}. In Theorem~\ref{thm:reverse}, we must construct submodels with prescribed first-order coordinate behavior, one along which
\(\dot\beta_{0,s} = 1, \dot\eta_{0,s} = 0\) and one with \(\dot\beta_{0,s} = 0, \dot\eta_{0,s} = h\).
\end{remark}

\begin{remark}[The \(-1\) normalization]
\label{remark:normalization}
The \(-1\) normalization follows naturally as a structural consequence of pathwise differentiability and the
coordinate geometry of the model. Along the
\(\beta\)-coordinate submodel
\(t \mapsto P_{t, s_\beta}\), the parameter \(\beta\) increases
at unit rate by construction, and the influence function
representation gives
\(\bE_0[m(Z; \beta_0, \eta_0)\, s_\beta(Z)] = 1\), and \eqref{eq:master_identity} forces
\(1 + G = 0\). A first-order Taylor expansion gives
\[\bE_0[m(Z; \beta, \eta_0)] \approx
\bE_0[m(Z; \beta_0, \eta_0)] + (-1) \cdot (\beta - \beta_0)
= -(\beta - \beta_0),\]
so that \(\bE_0[m(Z; \beta, \eta_0)] = 0\) has the unique
local solution \(\beta = \beta_0\), as desired. This also sheds
light on a familiar pattern in semiparametric inference where
many influence functions take the form
\(\varphi(Z) = (\text{data-dependent term}) - \beta_0\). The
normalization requires \(\beta\) to enter the expected
estimating function with first-order sensitivity exactly
\(-1\), which is realized by subtracting off \(\beta\).
\end{remark}

\begin{remark}[When local product structure fails]
\label{remark:failure_local_product}
When \(\beta = g(\eta)\) for a Fr\'echet differentiable functional \(g: \cH \to \RR\), the target parameter carries no degrees of freedom beyond those already encoded in the nuisance. By the chain rule, any regular submodel with \(\dot \eta_{0, s} = 0\) satisfies \(\dot \beta_{0, s} = Dg(\eta_0)[0] = 0\), so no \(\beta\)-coordinate submodel where \(\dot \beta_{0, s} = 1\) can exist. Similarly, any submodel with \(\dot \eta_{0, s} = h\) satisfies \(\dot \beta_{0, s} = Dg(\eta_0)[h]\), which is generally nonzero, so no \(\eta\)-coordinate submodel with \(\dot \beta_{0, s} = 0\) can exist. Thus, Assumption~\ref{assumption:product-structure} fails and Theorem~\ref{thm:reverse} does not apply. The following proposition shows that a clean characterization is nevertheless available.
\end{remark}

\begin{proposition}[Neyman orthogonality without local product structure]
\label{prop:neyman_wo_lps}
Suppose that \(\beta = g(\eta)\) for a Fr\'echet differentiable function \(g : \cH \to \RR\) whose derivative does not vanish on the admissible nuisance directions, i.e., \(Dg(\eta_0)[h^*] \neq 0\) for some \(h^* \in \dot{\cH}\). Suppose also that the estimating function \(m\) is correctly specified at \((\beta_0, \eta_0)\) (Assumption~\ref{assumption:correct_spec}), is Fr\'echet differentiable (Assumption~\ref{assumption:frechet_all}), and satisfies the influence function representation (Assumption~\ref{assumption:pathwise}). Finally, suppose that for every \(h \in \dot \cH\), there exists a regular submodel \(t \mapsto P_{t, s}\) through \(P_0\) with bounded score \(s\) such that \(t \mapsto \eta(P_{t, s})\) is differentiable at \(0\) in \(\cV\) with \(\dot \eta_{0, s} = h\) and \(f_{t, s}(Z) := m(Z; \beta_{t, s}, \eta_{t, s})\) satisfies the conditions of Lemma~\ref{lem:differentiation_varying_f}. Then
\[m \text{ is Neyman orthogonal } \quad \Longleftrightarrow \quad G := \bE_0[\partial_\beta m(Z; \beta_0, \eta_0)] = -1. \]
\end{proposition}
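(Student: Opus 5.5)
The plan is to rerun the derivation that produced the master identity~\eqref{eq:master_identity}, now along the nuisance-driven submodels supplied in the hypothesis. In this setting the $\beta$-coordinate of each such submodel is not free but is slaved to $\eta$ through $g$.

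Fix $h \in \dot{\cH}$ and take the regular submodel $t \mapsto P_{t,s}$ with bounded score $s$ and $\dot\eta_{0,s} = h$. First I will check that the induced coordinate path satisfies Assumption~\ref{assumption:coordinate_smoothness}. The $\eta$-part is given. For the $\beta$-part, $\beta_{t,s} = g(\eta_{t,s})$, and Fréchet differentiability of $g$ at $\eta_0$ together with differentiability of $t\mapsto\eta_{t,s}$ in $\cV$ gives $(\beta_{t,s}-\beta_0)/t \to Dg(\eta_0)[h]$ by the usual chain rule. Hence $\dot\beta_{0,s} = Dg(\eta_0)[h]$.

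With this in hand, Lemma~\ref{lem:l2_chain_rule} and Lemma~\ref{lem:differentiation_varying_f} apply exactly as in the proof of Theorem~\ref{thm:forward}. Correct specification makes $\bE_{P_{t,s}}[f_{t,s}]\equiv 0$ for small $t$, since the path stays near $(\beta_0,\eta_0)$. This yields~\eqref{eq:intermediate_reverse}. By Assumption~\ref{assumption:pathwise}, $\bE_0[m(Z;\beta_0,\eta_0)s] = \dot\beta_{0,s}$, and we arrive at
\[
(1+G)\,Dg(\eta_0)[h] + \frac{\partial}{\partial\eta}\bE_0[m(Z;\beta_0,\eta)]\bigg|_{\eta=\eta_0}[h] = 0 \qquad \forall h\in\dot{\cH}.
\]
Here, as in the proof of Theorem~\ref{thm:forward}, the Gâteaux derivative is identified with $\bE_0[\partial_\eta m(Z;\beta_0,\eta_0)[h]]$ using Fréchet differentiability of $m$ in $L_2(P_0)$.

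Both directions of the equivalence now follow from this single identity. If $G=-1$, the first term vanishes for every $h$, so the Gâteaux derivative is zero in all admissible directions, which is Neyman orthogonality. Conversely, if $m$ is Neyman orthogonal, the second term vanishes for every $h$, so $(1+G)Dg(\eta_0)[h]=0$ for all $h\in\dot{\cH}$. Taking $h=h^*$ with $Dg(\eta_0)[h^*]\ne 0$ forces $G=-1$.

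The main obstacle is the bookkeeping in the first step: verifying that the hypotheses genuinely place us within the scope of the earlier derivation without any product structure. Three points need checking. First, the $\beta$-path is differentiable, which comes from the chain rule through $g$; this requires $\eta_{t,s}\in\cH$, which holds because $\eta$ maps into $\cH$. Second, $\dot\eta_{0,s}=h$ lies in $\dot{\cH}$, which is given. Third, the coordinate path stays inside the neighborhood where correct specification holds, which follows from continuity of both coordinates at $t=0$. The rest of the argument is a direct reuse of~\eqref{eq:intermediate_identity}.
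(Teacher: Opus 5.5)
Your proposal is correct and follows the paper's proof essentially step for step. Along the hypothesized submodel with $\dot\eta_{0,s}=h$, you get $\dot\beta_{0,s}=Dg(\eta_0)[h]$ from the chain rule and rerun the derivation of \eqref{eq:intermediate_identity}. Substituting the influence function representation then gives $(1+G)\,Dg(\eta_0)[h]$ plus the nuisance G\^ateaux derivative in direction $h$ equal to zero for every $h\in\dot{\cH}$, and both implications follow, with $h^*$ used for the forward one. Your extra bookkeeping (the path staying in the correct-specification neighborhood, and $\eta_{t,s}\in\cH$) spells out what the paper covers with ``similar to the derivation in the proof of Theorem~\ref{thm:reverse}''.
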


\begin{proof}
    Fix \(h \in \dot \cH\) and let \(t \mapsto P_{t, s}\) be a regular submodel with bounded score \(s\) and \(\dot \eta_{0, s} = h\), which exists by assumption. Since \(\beta = g(\eta)\) and \(g\) is Fr\'echet differentiable at \(\eta_0\), the chain rule gives
    \[\dot \beta_{0, s} = \frac{d}{dt}g(\eta_{t, s}) \bigg|_{t = 0} = Dg(\eta_0)\bigl[\dot \eta_{0, s}\bigr] = Dg(\eta_0)[h],\]
    hence the induced coordinate path \(t \mapsto (\beta_{t, s}, \eta_{t, s})\) is differentiable at \(t = 0\). Similar to the derivation in the proof of Theorem~\ref{thm:reverse}, we can show that Equation~\ref{eq:intermediate_identity} applies to the chosen submodel. Therefore, it follows that
    \[0 = \bE_0[m(Z; \beta_0, \eta_0)\, s(Z)] + G\,\dot\beta_{0,s} + \frac{\partial}{\partial \eta}\bE_0[m(Z; \beta_0, \eta)] \bigg|_{\eta = \eta_0}\bigl[\dot\eta_{0,s}\bigr].\]
    Since \(m(Z; \beta_0, \eta_0) \equiv \varphi(Z)\) is an influence function, it follows \(\bE_0[m(Z; \beta_0, \eta_0) \, s(Z)] = \dot \beta_{0, s}\). Substituting into the above and using \(\dot \beta_{0, s} = Dg(\eta_0)[h]\) and \(\dot \eta_{0, s} = h\), we obtain
    \begin{equation}
    \label{eq:master_no_lps}
        (1 + G)Dg(\eta_0)[h] + \frac{\partial}{\partial \eta} \bE_0[m(Z; \beta_0, \eta)]\bigg|_{\eta = \eta_0}[h] = 0.
    \end{equation}
    Since \(h \in \dot \cH\) was arbitrary, \eqref{eq:master_no_lps} holds for all \(h \in \dot \cH\). 
    
    \((\Longleftarrow)\) If \(G = -1\), then \(1 + G = 0\) and \eqref{eq:master_no_lps} gives %
    \[\frac{\partial}{\partial \eta}\bE_0[m(Z; \beta_0, \eta)]\bigg|_{\eta = \eta_0}[h] = 0 \qquad \forall h \in \dot \cH,\]
    which is Neyman orthogonality.

    \((\Longrightarrow)\) If \(m\) is Neyman orthogonal, then the second term in \eqref{eq:master_no_lps} vanishes for all \(h \in \dot \cH\), which leaves
    \[(1 + G) Dg(\eta_0)[h] = 0 \qquad \forall h \in \dot \cH.\]
    By hypothesis, there exists \(h^* \in \dot \cH\) with \(Dg(\eta_0)[h^*] \neq 0\). Evaluating at \(h = h^*\) gives \((1 + G) = 0\), i.e., \(G = -1\).
\end{proof}

To illustrate that the conditions of Theorems~\ref{thm:forward} and~\ref{thm:reverse} can be verified in standard settings, we work through two examples in detail. In Appendix~\ref{appendix:ate}, we consider estimating the average treatment effect in the nonparametric model, constructing the respective coordinate submodels explicitly and checking each assumption. In Appendix~\ref{appendix:plm}, we turn to estimating the slope in the partially linear model, where the semiparametric structure restricts the tangent space and generic linear tilts leave the model. The forward direction is verified for the classical residual-on-residual moment, which is Neyman orthogonal but semiparametrically inefficient, and the reverse direction starts from the efficient influence function and recovers Neyman orthogonality via explicit coordinate submodels witnessing local product structure.

Finally, in Appendix~\ref{appendix:squared_density}, we consider the expected density \(\beta(P) = \int p^2 \, d\nu\), where the target is a known functional of the nuisance \(\eta = p\) and local product structure fails to hold. Theorem~\ref{thm:forward} applies without modification where the estimating function \(m(z; \beta, p) = 2p(z) - \int p^2 \, d\nu - \beta\) is Neyman orthogonal and induces pathwise differentiability with influence function \(\varphi(z) = 2(p_0(z) - \beta_0)\). For the reverse direction, Theorem~\ref{thm:reverse} cannot be applied since no coordinate submodels exist, but Proposition~\ref{prop:neyman_wo_lps} recovers their equivalence.

\section{Discussion}
\label{sec:discussion}

In this paper, we have established a precise equivalence between Neyman orthogonality and pathwise differentiability in nonparametric models, building on the foundational semiparametric theory of \citet{bickel_efficient_1998, van_der_laan_unified_2003, tsiatis_semiparametric_2006}, and connecting it to the modern double/debiased machine learning framework of \citet{chernozhukov_doubledebiased_2018}. Our forward theorem shows that under mild conditions, Neyman orthogonality implies pathwise differentiability, and our converse shows that the reverse implication also holds, but requires the additional geometric condition of local product structure.

Several directions remain open for further investigation. Most importantly, the regularity conditions we impose, notably the existence of coordinate submodels witnessing local product structure, can be nontrivial to verify in complex semiparametric problems, such as those involving constrained nuisance spaces or functionals defined through implicit equations. This being said, the conditions we require are mild, amounting to smoothness of the estimating function and the ability to perturb the target and nuisance parameters independently, and we expect the equivalence to hold broadly in the semiparametric settings most commonly encountered in practice. Relaxing these conditions, extending the equivalence to settings with non-smooth functionals, and developing systematic tools for constructing coordinate submodels in applied problems would be natural next steps.

\section*{Acknowledgements}
Y.C.\ thanks Yanlin Qu, Hongjian Wang, Heyuan Yao, and Weihan Zhang for helpful discussions. The authors also thank Vasilis Syrgkanis for raising the question of orthogonal but inefficient moments in partially linear models, and Xiaohong Chen for pointing out the related results in \citet{chen_overidentification_2018}.

\bibliographystyle{unsrtnat}
\bibliography{references}

\appendix

\section{Proofs of Lemmas}
\label{appendix:proofs of auxiliary results}

\subsection{Proof of Lemma~\ref{lem:linear_tilt_submodel}}
\label{appendix:proof_linear_tilt_submodel}

We first verify that \(p_t\) is a density. Since \(|g| \le M\) \(P_0\)-a.s., for \(|t| < 1/M\) it follows that \(1 + tg(z) > 1 - |t|M > 0\) \(P_0\)-a.s., so \(p_t = p_0(1 + tg) \ge 0\) \(\nu\)-a.s. Also 
\[\int p_t \, d\nu = \int p_0(1 + tg)\,d\nu = 1 + t\int g \,dP_0 = 1 + t\bE_0[g] = 1.\]
We next show that it satisfies the QMD expansion. Write \(\sqrt{p_t} = \sqrt{p_0}\sqrt{1 + tg}\). Define
\[r(u) := \sqrt{1 + u} - 1 - \frac{1}{2}u, \quad u \in (-1, 1).\]
Then \(r(0) = r'(0) = 0\). Since \(\sqrt{1 + u}\) has bounded second derivative on \([-1/2, 1/2]\), there exists \(C < \infty\) such that \(|r(u)| \le Cu^2\) for \(|u| \le 1/2.\) For \(|t| \le 1/(2M)\) we have \(|tg| \le 1/2\), hence
\[\sqrt{1 + tg} = 1 + \frac{t}{2}g + r(tg).\]
Therefore
\[\frac{\sqrt{p_t} - \sqrt{p_0}}{t} - \frac{1}{2}g \sqrt{p_0} = \sqrt{p_0} \cdot \frac{r(tg)}{t}.\]
Using \(|r(tg)| \le Ct^2 g^2\), we have \(\left|\frac{r(tg)}{t}\right| \le C|t|g^2 \le C|t|M^2\). Hence
\[\int \left(\frac{\sqrt{p_t} - \sqrt{p_0}}{t} - \frac{1}{2}g\sqrt{p_0}\right)^2 d\nu \le \int p_0 \bigl(C|t|M^2\bigr)^2 \, d\nu = C^2 t^2 M^4 \to 0 \quad \text{as \(t \to 0\)}\]
and the path is QMD with score \(s \equiv g\).

\subsection{Proof of Lemma~\ref{lem:differentiation_fixed_f}}
\label{appendix:proof_differentiation_fixed_f}

Write \(P_t \equiv P_{t, s}\) and \(p_t = dP_t / d\nu\) throughout. We have \(\bE_{P_t}[f] = \int f p_t \, d\nu\). Then
\[\bE_{P_t}[f] - \bE_0[f] = \int f(p_t - p_0) \, d\nu = \int f \bigl(\sqrt{p_t} - \sqrt{p_0}\bigr) \bigl(\sqrt{p_t} + \sqrt{p_0}\bigr) \, d\nu.\]
Dividing by \(t\),
\[\frac{\bE_{P_t}[f] - \bE_0[f]}{t} = \int f\left(\frac{\sqrt{p_t} - \sqrt{p_0}}{t} \right)\bigl(\sqrt{p_t} + \sqrt{p_0}\bigr) \,d\nu.\]
Let 
\[\Delta_t := \frac{\sqrt{p_t} - \sqrt{p_0}}{t} - \frac{1}{2}s\sqrt{p_0}.\]
By QMD (Definition~\ref{def:regular_qmd_submodel}), it follows \(\|\Delta_t\|_{L_2(\nu)} \to 0\). Decompose
\[\frac{\bE_{P_t}[f] - \bE_0[f]}{t} = I_{t, 1} + I_{t, 2}\]
where 
\[I_{t, 1} := \int f \left(\frac{1}{2} s\sqrt{p_0}\right)\bigl(\sqrt{p_t} + \sqrt{p_0}\bigr) \,d\nu, \quad I_{t, 2} := \int f\Delta_t \bigl(\sqrt{p_t} + \sqrt{p_0}\bigr) \,d\nu.\]
\textbf{Term \(I_{t, 1}\)}. By QMD (Definition~\ref{def:regular_qmd_submodel}) and the triangle inequality, \(\sqrt{p_t} \to \sqrt{p_0}\) in \(L_2(\nu)\). To see this,
there exists a function \(g = \frac{1}{2}s\sqrt{p_0}\) such that
\[\left\|\frac{\sqrt{p_t} - \sqrt{p_0}}{t} - g\right\|_{L_2(\nu)} \to 0,\]
which is equivalent to saying that for any \(\epsilon > 0\), there exists \(\delta > 0\) such that for \(|t| < \delta\),
\[\left\|\frac{\sqrt{p_t} - \sqrt{p_0}}{t} - g\right\|_{L_2(\nu)}< \epsilon.\]
By the triangle inequality for \(|t| < \delta\), 
\[\left\|\frac{\sqrt{p_t} - \sqrt{p_0}}{t}\right\|_{L_2(\nu)} \le \left\|\frac{\sqrt{p_t} - \sqrt{p_0}}{t} - g\right\|_{L_2(\nu)} + \|g\|_{L_2(\nu)} < \epsilon + \|g\|_{L_2(\nu)}\]
where the right-hand side does not depend on \(t\). Multiplying both sides by \(|t|\) and taking the limit as \(t \to 0\) yields the result. 

Thus, \(\sqrt{p_t} + \sqrt{p_0} \to 2 \sqrt{p_0}\) in \(L_2(\nu)\). Since \(fs\sqrt{p_0} \in L_2(\nu)\) as \[\int f^2 s^2 p_0 \, d\nu = \bE_0\bigl[f^2 s^2\bigr] < M^2 \bE_0\bigl[f^2\bigr] < \infty\]
where \(M := \|s\|_\infty\), we have
\[I_{t, 1} \to \int f\left(\frac{1}{2} s\sqrt{p_0}\right)2 \sqrt{p_0} \,d\nu = \int fs p_0 \,d\nu = \bE_0[fs].\]
\textbf{Term \(I_{t, 2}\).}
By Cauchy-Schwarz, 
\[|I_{t, 2}| \le \left\|f\bigl(\sqrt{p_t} + \sqrt{p_0}\bigr)\right\|_{L_2(\nu)} \cdot \|\Delta_t\|_{L_2(\nu)}.\]
We already have that \(\|\Delta_t\|_2 \to 0\). It remains to show \(\bigl\|f\bigl(\sqrt{p_t} + \sqrt{p_0}\bigr)\bigr\|_{L_2(\nu)}\) is bounded for small \(t\). Write
\[\left\|f\bigl(\sqrt{p_t} + \sqrt{p_0}\bigr)\right\|_{L_2(\nu)}^2 = \int f^2 \bigl(\sqrt{p_t} + \sqrt{p_0}\bigr)^2 \,d\nu \le 2 \int f^2(p_t + p_0)\,d\nu = 2 \left[\bE_{P_t}\bigl[f^2\bigr] + \bE_0\bigl[f^2\bigr]\right]\]
Let \(B:= \sup_{z \in \cZ}|f(z)| < \infty\). Under QMD, we know \(P_t \to P_0\) in Hellinger distance, and
\[\left|\bE_{P_t}\bigl[f^2\bigr] - \bE_0\bigl[f^2\bigr]\right| = \left| \int f^2(p_t - p_0) \,d\nu\right| \le B^2 \int |p_t - p_0|\,d\nu \le B^2 \cdot 2 \bigl\|\sqrt{p_t} - \sqrt{p_0}\bigr\|_{L_2(\nu)} \to 0\]
where the first inequality follows by the boundedness assumption of \(f\), and the second follows from \(\int |p_t - p_0|\,d\nu = 2\mathrm{TV}(P_t, P_0) \le 2\sqrt{2}\,H(P_t, P_0)\). Hence \(\bE_{P_t}\bigl[f^2\bigr] \to \bE_0\bigl[f^2\bigr]\) as \(t \to 0\) and \(I_{t, 2} \to 0\).

Combining the results above yields the desired derivative. 

\subsection{Proof of Lemma~\ref{lem:differentiation_varying_f}}
\label{appendix:proof_differentiation_varying}

Write \(P_t \equiv P_{t, s}\) and \(p_t = dP_t / d\nu\) throughout. We have
\[\bE_{P_t}[f_t] - \bE_0[f_0] = \left[\bE_{P_t}[f_0] - \bE_0[f_0]\right] + \bE_{P_t}[f_t - f_0].\]
Dividing by \(t\),
\[\frac{\bE_{P_t}[f_t] - \bE_0[f_0]}{t} = \frac{\bE_{P_t}[f_0] - \bE_0[f_0]}{t} + \bE_{P_t} \left[\frac{f_t - f_0}{t}\right].\]
By Lemma~\ref{lem:differentiation_fixed_f}, the first term converges to \(\bE_0[f_0 s]\). 

For the second term, let \(g_t := (f_t - f_0)/t\). By Assumption (1), \(g_t \to \dot f_0\) in \(L_2(P_0)\) so \(\bE_0[g_t] \to \bE_0\bigl[\dot f_0\bigr]\). It remains to show \(\bE_{P_t}[g_t] - \bE_0[g_t] \to 0\). Write
\[\bE_{P_t}[g_t] - \bE_0[g_t] = \int g_t(p_t - p_0) \, d\nu = \int g_t\bigl(\sqrt{p_t} - \sqrt{p_0}\bigr)\bigl(\sqrt{p_t} + \sqrt{p_0}\bigr) \, d\nu.\]
By Cauchy-Schwarz, 
\[|\bE_{P_t}[g_t] - \bE_0[g_t]| \le \left\|g_t \bigl(\sqrt{p_t} + \sqrt{p_0}\bigr)\right\|_{L_2(\nu)} \cdot \bigl\|\sqrt{p_t} - \sqrt{p_0}\bigr\|_{L_2(\nu)}.\]
By QMD (Definition~\ref{def:regular_qmd_submodel}), \(\bigl\|\sqrt{p_t} - \sqrt{p_0}\bigr\|_{L_2(\nu)} \to 0\). It suffices to show that \(\left\|g_t\bigl(\sqrt{p_t} + \sqrt{p_0}\bigr)\right\|_{L_2(\nu)}\) is bounded for small \(t\). By \((a + b)^2 \le 2\bigl(a^2 + b^2\bigr)\),
\[\left\|g_t\bigl(\sqrt{p_t} + \sqrt{p_0}\bigr)\right\|_{L_2(\nu)}^2 \le 2 \left[\bE_{P_t}\bigl[g_t^2\bigr] + \bE_0\bigl[g_t^2\bigr]\right]\]
By Assumption (2), \(\bE_{P_t}\bigl[g_t^2\bigr]\) is uniformly bounded for small \(t\). \(\bE_0\bigl[g_t^2\bigr]\) is also bounded since \(g_t \to \dot f_0\) in \(L_2(P_0)\). Hence the right-hand side is bounded and \(\bE_{P_t}[g_t] - \bE_0[g_t] \to 0\). Therefore \(\bE_{P_t}[g_t] \to \bE_0\bigl[\dot f_0\bigr]\). Collecting both terms yields the desired identity. 

\subsection{Proof of Lemma~\ref{lem:l2_chain_rule}}
\label{appendix:proof_chain_rule}

By Fr\'echet differentiability (Assumption~\ref{assumption:frechet_m}),
\[m(\cdot; \beta_{t, s}, \eta_{t, s}) - m(\cdot; \beta_0, \eta_0) = D_\beta m_0(\beta_{t, s} - \beta_0) + D_\eta m_0(\eta_{t, s} - \eta_0) + r_{t, s},\]
where
\(\|r_{t, s}\|_{L_2(P_0)} = o(|\beta_{t, s} - \beta_0| + \|\eta_{t, s} - \eta_0\|_{\cV}).\)
Dividing by \(t\) and subtracting \(\dot f_{0, s}\), 
\[\frac{f_{t, s} - f_0}{t} - \dot f_{0, s} = D_\beta m_0 \left(\frac{\beta_{t, s} - \beta_0}{t} - \dot \beta_{0, s}\right) + D_\eta m_0\left(\frac{\eta_{t, s} - \eta_0}{t} - \dot \eta_{0, s}\right) + \frac{r_{t, s}}{t}.\]
Take \(L_2(P_0)\) norms. By boundedness of \(D_\beta m_0, D_\eta m_0\), there exist constants \(C_\beta, C_\eta\) such that 
\[\|\cdot\|_{L_2} \le C_\beta \left|\frac{\beta_{t, s} - \beta_0}{t} - \dot \beta_{0, s}\right| + C_\eta \left\|\frac{\eta_ {t, s} - \eta_0}{t} - \dot \eta_{0, s}\right\|_{\cV} + \left\|\frac{r_{t, s}}{t}\right\|_{L_2}.\]
The first two terms \(\to 0\) by Assumption~\ref{assumption:coordinate_smoothness}. For the remainder, Assumption~\ref{assumption:coordinate_smoothness} implies \(|\beta_{t, s} - \beta_0| + \|\eta_{t, s} - \eta_0\|_{\cV} = O(|t|)\), so \(\|r_{t, s}/t\|_{L_2} = o(1)\), which proves the claim.

\subsection{Hellinger Gap between Regular Submodels}
\label{subsec:hellinger_gap}

\begin{lemma}
\label{lem:hellinger_gap}

Let \(t \mapsto P_{t, s}\) be a regular submodel with score \(s\) and \(t \mapsto P_{t, g}\) be a regular submodel with score \(g\). Then 
\[\limsup_{t \to 0} \frac{H(P_{t, s}, P_{t, g})}{|t|} \le \frac{1}{2 \sqrt{2}}\|s - g\|_{L_2(P_0)}.\]
\end{lemma}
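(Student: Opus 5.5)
The plan is to work directly with the definition of the Hellinger distance and insert the two QMD expansions. Write \(p_{t,s}\) and \(p_{t,g}\) for the densities of \(P_{t,s}\) and \(P_{t,g}\). By definition,
\[
\sqrt{2}\,H(P_{t,s},P_{t,g}) = \bigl\|\sqrt{p_{t,s}} - \sqrt{p_{t,g}}\bigr\|_{L_2(\nu)}.
\]
Both paths pass through the same \(p_0\), so the \(\sqrt{p_0}\) terms cancel. I would decompose the difference as
\[
\frac{\sqrt{p_{t,s}} - \sqrt{p_{t,g}}}{t} = \Delta_{t,s} - \Delta_{t,g} + \tfrac12 (s-g)\sqrt{p_0},
\]
where
\[
\Delta_{t,s} := \frac{\sqrt{p_{t,s}}-\sqrt{p_0}}{t} - \tfrac12 s\sqrt{p_0},
\]
and \(\Delta_{t,g}\) is defined analogously. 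These are exactly the remainders controlled in Definition~\ref{def:regular_qmd_submodel}.

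Next I apply the triangle inequality in \(L_2(\nu)\):
\[
\frac{\sqrt{2}\,H(P_{t,s},P_{t,g})}{|t|} \le \|\Delta_{t,s}\|_{L_2(\nu)} + \|\Delta_{t,g}\|_{L_2(\nu)} + \tfrac12\bigl\|(s-g)\sqrt{p_0}\bigr\|_{L_2(\nu)}.
\]
By QMD of each submodel, the first two terms tend to zero as \(t\to 0\). The last term equals \(\tfrac12\|s-g\|_{L_2(P_0)}\), since \(\int (s-g)^2 p_0\,d\nu = \bE_0[(s-g)^2]\). Taking the \(\limsup\) and dividing by \(\sqrt{2}\) gives
\[
\limsup_{t\to 0}\frac{H(P_{t,s},P_{t,g})}{|t|} \le \frac{1}{2\sqrt{2}}\|s-g\|_{L_2(P_0)},
\]
which is the claim.

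I do not expect a real obstacle here. The only points to check are bookkeeping. First, the normalization of \(H\) carries a factor \(1/2\) inside the square root, which is where the \(1/\sqrt{2}\) comes from. Second, both submodels are dominated by the same \(\nu\), so the difference of root densities lives in a single \(L_2(\nu)\) space. No boundedness of the scores is needed, since only \(s,g\in L_2^0(P_0)\) enters.
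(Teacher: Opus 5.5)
Your proof is correct and is essentially the paper's argument. Your $t\,\Delta_{t,s}$ is the paper's QMD remainder $r_t$. Both proofs cancel the common $\sqrt{p_0}$ term, apply the triangle inequality in $L_2(\nu)$, and obtain the constant $\tfrac{1}{2\sqrt{2}}$ from $\sqrt{2}\,H = \bigl\|\sqrt{p}-\sqrt{q}\bigr\|_{L_2(\nu)}$.
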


\begin{proof}
By QMD, we have 
\begin{align*}
    \sqrt{p_{t, s}} &= \sqrt{p_0} \left(1 + \frac{t}{2}s\right) + r_t, \quad \|r_t\|_{L_2(\nu)} = o(|t|),\\
    \sqrt{p_{t, g}} &= \sqrt{p_0} \left(1 + \frac{t}{2}g\right) + \tilde r_t, \quad \bigl\|\tilde r_t\bigr\|_{L_2(\nu)} = o(|t|).
\end{align*}    
Subtracting, 
\[\sqrt{p_{t, s}} - \sqrt{p_{t, g}} = \frac{t}{2}(s - g)\sqrt{p_0} + \bigl(r_t - \tilde r_t\bigr).\]
Taking \(L_2(\nu)\) norms and using the triangle inequality, 
\[\bigl\|\sqrt{p_{t, s}} - \sqrt{p_{t, g}}\bigr\|_{L_2(\nu)} \le \frac{|t|}{2}\left\|(s - g)\sqrt{p_0}\right\|_{L_2(\nu)} + \|r_t\|_{L_2(\nu)} + \bigl\|\tilde r_t\bigr\|_{L_2(\nu)}.\]
Now \(\left\|(s - g)\sqrt{p_0}\right\|^2_{L_2(\nu)} = \int (s - g)^2 p_0 \, d\nu = \bE_0\left[(s - g)^2\right] = \|s - g\|^2_{L_2(P_0)}.\) Dividing by \(|t|\),
\[\frac{\bigl\|\sqrt{p_{t, s}} - \sqrt{p_{t, g}}\bigr\|_{L_2(\nu)}}{|t|} \le \frac{1}{2}\|s - g\|_{L_2(P_0)} + \frac{\|r_t\|_{L_2(\nu)} + \bigl\|\tilde r_t\bigr\|_{L_2(\nu)}}{|t|}.\]
Since \(\|r_t\|_{L_2(\nu)} = o(|t|)\) and \(\bigl\|\tilde r_t\bigr\|_{L_2(\nu)} = o(|t|)\), the second term vanishes as \(t \to 0\). By definition of \(H\), 
\[\limsup_{t \to 0} \frac{H(P_{t, s}, P_{t, g})}{|t|} \le \frac{1}{2\sqrt{2}}\|s - g\|_{L_2(P_0)}.\]
\end{proof}

\section{Local Variation Independence and Local Product Structure}
\label{appendix:product_structure}

Assumption~\ref{assumption:product-structure} requires that, for each coordinate direction, there exists a regular submodel through \(P_0\) along which the induced coordinate path moves one of \(\beta\) or \(\eta\) to first order while holding the other fixed. A natural question is how this relates to the classical notion of local variation independence, which asks that the attainable parameter set contains a product neighborhood of \((\beta_0, \eta_0)\). Local variation independence guarantees that independently varied parameter values exist, but is purely set-theoretic and does not ensure that they are connected by submodels regular enough to differentiate along. In this appendix, we formalize the distinction between these two conditions and examine the role of product structure in the classical results of \citet{van_der_laan_unified_2003}.

\subsection{Local Variation Independence}
\label{subsec:local_variation_independence}

As mentioned in Definition~\ref{def:nuisance}, we are primarily concerned with regular submodels along which \(\beta\) has derivative zero at the truth, while \(\eta\) is free to vary. The obvious question is whether such paths can always be constructed, i.e., whether one can perturb \(\eta\) while holding \(\beta\) fixed. If the chosen nuisance functional \(\eta\) already determines \(\beta\), for instance, if \(\beta = g(\eta)\) for some known map \(g\), then varying \(\eta\) necessarily changes \(\beta\), and the two functionals cannot be perturbed independently. 

\begin{definition}[Local Variation Independence]
\label{def:local_variation_independence}
We say that \(\beta\) and \(\eta\) are locally variation independent at \(P_0\) if there exist neighborhoods \(U \ni \beta_0\) and \(V \ni \eta_0\) such that 
\[U \times V \subseteq \Theta := \{(\beta(P), \eta(P)): P \in \cP\},\]
that is, the attainable parameter set \(\Theta\) contains a product neighborhood of \(\beta_0, \eta_0\).
\end{definition}

In words, near \((\beta_0, \eta_0)\), there is a full interval of \(\beta\)-values and a full neighborhood of \(\eta\)-values such that every combination of the two is realized by some \(P \in \cP\). The consequence is that one can vary \(\beta\) while holding \(\eta\) fixed, and vice versa. That is, for sufficiently small \(t\), the pairs \((\beta_0 + t, \eta_0)\) and \((\beta_0, \eta_0 + th)\) are both attainable, meaning there exist distributions in \(\cP\) realizing those functional values. Without such a product neighborhood, the attainable pairs near \((\beta_0, \eta_0)\) could lie along a lower-dimensional surface, so that changing \(\beta\) might force \(\eta\) to change as well.

Crucially, however, local variation independence is purely a set-theoretic statement about the attainable set \(\Theta\). The condition guarantees that for each small \(t\), there exists at least one distribution \(P \in \cP\) with \((\beta(P), \eta(P))  = (\beta_0 + t, \eta_0)\). However, this is only a pointwise existence guarantee and imposes no regularity on how such choices may depend on \(t\). In particular, local variation independence does not imply that there exists a map \(t \mapsto P_{t} \in \cP\) satisfying \((\beta(P_t), \eta(P_t)) = (\beta_0 + t, \eta_0)\) that is quadratic-mean differentiable at \(t= 0\).

\begin{assumption}[Regular coordinate submodels]
\label{ass:coordinate_qmd}
For every admissible direction \(h \in \dot{\cH}\), the paths \(t \mapsto P_{\beta_0 + t,\, \eta_0}\) and \(t \mapsto P_{\beta_0,\, \eta_0 + th}\) exist in \(\cP\) for sufficiently small \(|t|\) and are regular (QMD) submodels through \(P_0\) at \(t = 0\).
\end{assumption}

\begin{proposition}
\label{prop:lvi_plus_qmd}
If \(\beta\) and \(\eta\) are locally variation independent at \(P_0\) (Definition~\ref{def:local_variation_independence}) and satisfy coordinate QMD smoothness (Assumption~\ref{ass:coordinate_qmd}), then local product structure (Assumption~\ref{assumption:product-structure}) holds.
\end{proposition}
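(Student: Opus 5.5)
The plan is to use the paths furnished by Assumption~\ref{ass:coordinate_qmd} directly as the coordinate submodels required by Assumption~\ref{assumption:product-structure}. First I would use local variation independence to show these paths are well defined near $t=0$. Definition~\ref{def:local_variation_independence} supplies neighborhoods $U \ni \beta_0$ and $V \ni \eta_0$ with $U \times V \subseteq \Theta$.

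For the $\beta$-coordinate, note that $(\beta_0 + t, \eta_0) \in U \times V$ for $|t|$ small, so some $P_{\beta_0+t,\eta_0} \in \cP$ exists. For the $\eta$-coordinate, fix $h \in \dot{\cH}$. Since $\|(\eta_0 + th) - \eta_0\|_{\cV} = |t|\,\|h\|_{\cV}$, the point $\eta_0 + th$ lies in $V$ for small $|t|$, which we read as a norm neighborhood in $\cV$ intersected with $\cH$. By the definition of $\dot{\cH}$ it also lies in $\cH$, so $(\beta_0, \eta_0+th) \in \Theta$.

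Assumption~\ref{ass:coordinate_qmd} then says that some selection of these distributions, $t \mapsto P_t := P_{\beta_0+t,\eta_0}$ and $t \mapsto P_t := P_{\beta_0,\eta_0+th}$, forms a regular (QMD) submodel through $P_0$. I would require the selection at $t=0$ to be $P_0$ itself; the assumption's phrase ``through $P_0$'' provides this.

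Next comes the derivative computation, which is exact because along these paths the coordinates are prescribed identically, not merely to first order. On the first path, $\beta(P_t) = \beta_0 + t$ and $\eta(P_t) = \eta_0$ for all small $t$. The difference quotients are therefore exactly $1$ and $0 \in \cV$, so the induced coordinate path is differentiable at $0$ with $\frac{d}{dt}\beta(P_t)|_{t=0} = 1$ and $\frac{d}{dt}\eta(P_t)|_{t=0} = 0$ in $\cV$-norm. On the second path, $\beta(P_t) = \beta_0$ and $(\eta(P_t) - \eta_0)/t = h$ exactly, which gives derivatives $0$ and $h$. These are precisely items 1 and 2 of Assumption~\ref{assumption:product-structure}.

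I expect no step to pose a genuine analytic obstacle, since all the regularity is packaged into Assumption~\ref{ass:coordinate_qmd}. The only delicate point is bookkeeping: confirming that $\eta_0 + th$ stays inside both $V$ and $\cH$. This needs $V$ to be understood as a $\cV$-norm neighborhood relative to $\cH$, together with $h \in \dot{\cH}$, and I would state that convention explicitly.
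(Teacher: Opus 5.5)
Your argument is correct and matches the paper's proof: local variation independence makes the coordinate parameter values $(\beta_0+t,\eta_0)$ and $(\beta_0,\eta_0+th)$ attainable, Assumption~\ref{ass:coordinate_qmd} makes the resulting paths regular submodels through $P_0$, and because the coordinates are prescribed exactly along these paths, the required first-order derivatives $(1,0)$ and $(0,h)$ follow immediately. Your explicit derivative computation and your check that $\eta_0+th$ lies in $V \cap \cH$ simply spell out steps the paper leaves implicit.
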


\begin{proof}
Local variation independence provides neighborhoods \(U \ni \beta_0\) and \(V \ni \eta_0\) with \(U \times V \subseteq \Theta\). For small \(|t|\), the parameter values \((\beta_0 + t,\, \eta_0)\) and \((\beta_0,\, \eta_0 + th)\) lie in \(U \times V\) and hence correspond to distributions in \(\cP\). Assumption~\ref{ass:coordinate_qmd} asserts that the resulting paths are differentiable in quadratic mean at \(t = 0\), giving exactly the conditions of Assumption~\ref{assumption:product-structure}.
\end{proof}
We note that Assumption~\ref{assumption:product-structure} is strictly weaker than this combination in two respects, as it requires neither a full product neighborhood in the parameter space nor exact coordinate paths, only regular submodels with the correct first-order coordinate derivatives at \(P_0\). This distinction also applies to the work of \citet[Section~3.4]{bickel_efficient_1998}, which posits a product parameterization that builds in local variation independence by construction, and requires the target submodel to be a regular parametric family. Together, these conditions are strictly stronger than Assumption~\ref{assumption:product-structure}.

\subsection{Revisiting the Gradient Characterization}
\label{subsec:revisiting_vdlr}

As discussed in Section~\ref{subsec:local_product_structure}, the distinction between the set-theoretic content of local variation independence and the analytic content of Assumption~\ref{assumption:product-structure} is subtle, and it is natural to ask whether this distinction matters in practice. We demonstrate that the answer is affirmative by revisiting the classical results of \citet[Section~1.4]{van_der_laan_unified_2003}, which connect influence functions to estimating functions. Their framework contains the essential insight that underpins the equivalence we formalize in Section~\ref{sec:equivalence}. However, the regularity of submodels that perturb \(\beta\) and \(\eta\) independently, which we have isolated as Assumption~\ref{assumption:product-structure}, plays an important role in their argument that was not separately identified. Making this explicit is the purpose of the present subsection.

We focus on two results from \citet{van_der_laan_unified_2003}: their Lemma~1.2, which characterizes gradients through the derivative of an expected estimating function along arbitrary submodels, and Lemma~1.3, which establishes that the derivative of the expected estimating function with respect to \(\beta\) at fixed \(\eta_0\) equals \(-1\). This latter result is the key step that links influence functions to estimating functions and underpins the construction of efficient estimators via solving moment conditions. We will show that the proof of Lemma~1.3 contains an implicit step, replacing the varying nuisance \(\eta(P_{t,s})\) by the fixed value \(\eta_0\) inside a derivative, that requires the nuisance tangent space to capture all nuisance directions, which in turn requires the local product structure of Assumption~\ref{assumption:product-structure}.

For the reader's convenience, we state the relevant results in our notation. The correspondence with \citet{van_der_laan_unified_2003} is:
\[P_0 \leftrightarrow F_X, \beta \leftrightarrow \mu, \eta \leftrightarrow \rho, \bE_0 \leftrightarrow \bE_{F_X}, \{P_{t, s}\} \leftrightarrow\{F_{\epsilon, s}\}, \varphi^* \leftrightarrow S_{\mathbf{eff}}^{*F}, \Lambda \leftrightarrow T_{\mathrm{nuis}}^F, \cP \leftrightarrow \cM^F.\]

\paragraph{Setup.} 
The framework of \citet{van_der_laan_unified_2003} posits a class of estimating functions indexed by an abstract label \(k\), mapping each distribution in the model to a mean-zero function of the data. The key structural requirement is that these estimating functions, evaluated at the true parameter values, span  the orthogonal complement \(\Lambda^\perp\) of the nuisance tangent space. Since influence functions are orthogonal to \(\Lambda\) by Lemma~\ref{lem:ortho_nuisance}, this ensures that every candidate influence function is representable as an estimating function, and combined with unbiasedness along submodels, allows one to recover the inner-product characterization linking estimating functions to gradients (Lemma~\ref{lem:gradient_characterization}). We collect the precise conditions as follows.

\begin{assumption}[Estimating function representation]
\label{assumption:ef_representation}
Suppose there exists an abstract index set \(\cK\) and a mapping \((k, \beta, \eta) \mapsto D_k(\cdot \mid \beta, \eta)\) from \(\cK \times \Theta\) into functions of \(Z\) such that:
\begin{enumerate}
    \item \textbf{Unbiased estimating function.} \(\bE_{P}[D_k(Z \mid \beta(P), \eta(P))] = 0\) for all \(P \in \cP\) and all \(k \in \cK\).
    \item \textbf{Richness.} The index set \(\cK\) is rich enough that, at \(P_0\),
    \[\Lambda^\perp = \{D_k(\cdot \mid \beta_0, \eta_0) : k \in \cK(P_0)\},\]
    where \(\cK(P_0) \subseteq \cK\) is the index set at \(P_0\) and \(\Lambda^\perp\) denotes the orthogonal complement of the nuisance tangent space \(\Lambda\) (Definition~\ref{def:nuisance}) inside \(L_2^0(P_0)\).
    \item \textbf{Continuity along submodels.} For all \(k \in \cK(P_0)\) and each regular submodel \(\{P_{t, s}\}\) with score \(s \in \cS\),
    \[\|D_k(\cdot \mid \beta(P_{t, s}), \eta(P_{t, s})) - D_k(\cdot \mid \beta_0, \eta_0)\|_{L_2(P_0)} \to 0 \quad \text{as } t \to 0.\]
    \item \textbf{Pathwise differentiability.} \(\beta\) is pathwise differentiable at \(P_0\) with efficient influence function \(\varphi^*\), and \(\bigl\langle \varphi^* \bigr\rangle \subset \cS\), where \(\bigl\langle \varphi^* \bigr\rangle\) denotes the one-dimensional span of \(\varphi^*\).
    \item \textbf{Uniform boundedness.} For all \(k \in \cK(P_0)\), there exist \(C < \infty\) and a neighborhood \(N\) of \((\beta_0, \eta_0)\) such that \[\sup_{z \in \cZ,\, (\beta, \eta) \in N} |D_k(z \mid \beta, \eta)| \le C.\]
\end{enumerate}
\end{assumption}

\paragraph{Gradient characterization (Lemma~1.2 of \citet{van_der_laan_unified_2003}).}
The first result characterizes which estimating functions are gradients. The idea is as follows: using the unbiasedness condition (i) of Assumption~\ref{assumption:ef_representation}, the expectation of \(D_k\) under \(P_{t,s}\) vanishes identically along any regular submodel. Differentiating this identity at $t = 0$ recovers an 
inner-product representation that determines when an estimating function is an influence function.
\begin{lemma}[Gradient characterization; Lemma~1.2 of \citet{van_der_laan_unified_2003}]
\label{lem:gradient_characterization}
Under Assumption~\ref{assumption:ef_representation}, define
\[f_k(s) := \frac{d}{dt}\bE_0[D_k(Z \mid \beta(P_{t, s}), \eta(P_{t, s}))] \bigg|_{t = 0}.\]
Then an element \(D = D_k(\cdot \mid \beta_0, \eta_0) \in \Lambda^\perp\) for \(k \in \cK(P_0)\) is a gradient if and only if
\[f_k(s) = \begin{cases} 0 & \text{if } s \in \cS_{\mathrm{nuis}}, \\ -\frac{d}{dt}\beta(P_{t, s})|_{t = 0} & \text{if } s \in \bigl\langle \varphi^* \bigr\rangle. \end{cases}\]
\end{lemma}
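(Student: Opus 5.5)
The plan is to reduce both directions of the equivalence to a single identity obtained by differentiating the unbiasedness condition.

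\textbf{Step 1: the derivative identity.} Fix a regular submodel \(t \mapsto P_{t,s}\) with score \(s \in \cS\), and write \(D_t := D_k(\cdot \mid \beta(P_{t,s}), \eta(P_{t,s}))\) and \(D := D_0\). By condition (i) of Assumption~\ref{assumption:ef_representation}, \(\bE_{P_{t,s}}[D_t] = 0 = \bE_0[D]\) for all small \(t\). Hence
\[0 = \frac{\bE_0[D_t] - \bE_0[D]}{t} + \frac{1}{t}\int D_t\,(p_{t,s} - p_0)\,d\nu.\]
I would show that the second term converges to \(\bE_0[Ds]\). The argument parallels the proof of Lemma~\ref{lem:differentiation_fixed_f}, but with \(f\) replaced by the varying \(D_t\). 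First factor \(p_{t,s} - p_0 = (\sqrt{p_{t,s}} - \sqrt{p_0})(\sqrt{p_{t,s}} + \sqrt{p_0})\), and write \((\sqrt{p_{t,s}} - \sqrt{p_0})/t = \tfrac12 s\sqrt{p_0} + \Delta_t\) with \(\|\Delta_t\|_{L_2(\nu)} \to 0\). The \(\Delta_t\) piece is then controlled by Cauchy--Schwarz, using the uniform bound \(|D_t| \le C\) from condition (v), valid for small \(t\) since the coordinate path enters \(N\). For the main piece, note that \(D_t\sqrt{p_0} \to D\sqrt{p_0}\) in \(L_2(\nu)\) by condition (iii), and \(D_t(\sqrt{p_{t,s}} - \sqrt{p_0}) \to 0\) in \(L_2(\nu)\) by boundedness and QMD. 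Pairing these with \(s\sqrt{p_0} \in L_2(\nu)\) gives the limit \(\bE_0[Ds]\). This works even when \(s\) is unbounded, because only \(D_t\) needs to be bounded. Consequently the first difference quotient also converges, so \(f_k(s)\) exists and
\[f_k(s) = -\bE_0[D\,s] \qquad \text{for every } s \in \cS.\]

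\textbf{Step 2: the two directions.} By Step 1, \(D\) is a gradient if and only if \(f_k(s) = -\frac{d}{dt}\beta(P_{t,s})|_{t=0}\) for all \(s \in \cS\).

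\((\Rightarrow)\) Suppose \(D\) is a gradient. On \(\langle\varphi^*\rangle\) the displayed condition holds immediately. On \(\cS_{\mathrm{nuis}}\), Lemma~\ref{lem:ortho_nuisance} gives \(f_k(s) = 0\).

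\((\Leftarrow)\) Suppose the two conditions hold. The nuisance condition is automatic, since \(D \in \Lambda^\perp\). The condition on \(\langle\varphi^*\rangle\) gives \(\bE_0[D\varphi^*] = \frac{d}{dt}\beta(P_{t,\varphi^*})|_{t=0} = \bE_0[\varphi^*\varphi^*]\), using that \(\varphi^*\) is a gradient; this requires \(\langle\varphi^*\rangle \subset \cS\), which is condition (iv). Therefore \(\bE_0[(D - \varphi^*)\varphi^*] = 0\).

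Now take any \(s \in \cS \subset \cT\) and decompose it orthogonally as \(s = \lambda + r\) with \(\lambda \in \Lambda\) and \(r \in \Lambda^\perp \cap \cT\). Suppose we know \(\Lambda^\perp \cap \cT = \langle\varphi^*\rangle\), so that \(r = c\,\varphi^*\). Since \(D - \varphi^* \in \Lambda^\perp\), we get \(\bE_0[(D - \varphi^*)s] = c\,\bE_0[(D - \varphi^*)\varphi^*] = 0\). Hence \(\bE_0[Ds] = \bE_0[\varphi^* s] = \frac{d}{dt}\beta(P_{t,s})|_{t=0}\) for every \(s \in \cS\), so \(D\) is a gradient. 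Note that \(D\) may have a component in \(\cT^\perp\); this is harmless by Remark~\ref{remark:uniqueness}.

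\textbf{Main obstacle.} The delicate step is the identification \(\Lambda^\perp \cap \cT = \langle\varphi^*\rangle\), that is, that \(\cT\) contains only one direction beyond \(\Lambda\). The inclusion \(\supseteq\) follows from Lemma~\ref{lem:ortho_nuisance} and \(\varphi^* \in \cT\). For \(\subseteq\), I would argue as follows. Take \(r \in \Lambda^\perp \cap \cT\) orthogonal to \(\varphi^*\), and approximate \(r\) by scores \(s_n\). Along these scores, \(\frac{d}{dt}\beta = \bE_0[\varphi^* s_n] \to 0\). One would then correct each \(s_n\) by a small multiple of \(\varphi^*\) to obtain a nuisance score, which forces \(r \in \Lambda\) and hence \(r = 0\). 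Making this correction into a genuine regular submodel requires additivity and closure properties of \(\cS\) that are not automatic. I expect this is exactly where an implicit structural assumption of the product type enters, and I would flag it explicitly in the proof.
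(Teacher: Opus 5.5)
Your Step~1 and the $(\Rightarrow)$ direction follow the paper's argument. The paper also differentiates the unbiasedness identity to get $f_k(s) = -\bE_0[Ds]$; it splits $D_t = D + (D_t - D)$ where you split the QMD expansion, with (iii) and (v) doing the same work. It then compares with the definition of a gradient.

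For $(\Leftarrow)$, the paper does not prove the identification $\Lambda^\perp \cap \cT = \langle\varphi^*\rangle$ that you isolate. It simply asserts that ``by definition'' being a gradient is equivalent to the inner-product conditions on $\cS_{\mathrm{nuis}}$ and $\langle\varphi^*\rangle$, which silently assumes $\cT = \Lambda \oplus \langle\varphi^*\rangle$. Your suspicion is right: this step cannot be derived from Assumption~\ref{assumption:ef_representation}. Here is a counterexample.
\begin{itemize}
\item Let $\cZ$ have three points and $P_0$ full support, and let $\varphi^*, e$ be an orthonormal basis of $L_2^0(P_0)$.
\item Let $\cP$ be the union of the curves $p_0(1+t\varphi^*)$ and $p_0(1+t(\varphi^*+e))$, with $\beta = \beta_0 + t$ on each.
\item Take $\eta(P) := p$ and $D_f(z \mid \beta, \eta) := f(z) - \bE_\eta[f]$.
\end{itemize}
Then (i)--(v) hold and $\cS = \langle\varphi^*\rangle \cup \langle\varphi^*+e\rangle$. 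Paths that switch between the curves can only have score $0$. So $\beta$ has gradient $\varphi^* \in \cT$, $\cS_{\mathrm{nuis}} = \{0\}$, and $\Lambda = \{0\}$. Now take $D = \varphi^* + e$. It satisfies both displayed conditions, since $\bE_0[D\varphi^*] = 1$. Yet $\bE_0[D(\varphi^*+e)] = 2 \neq 1$, so $D$ is not a gradient. The lemma's ``if'' direction is therefore false as stated, and the paper's proof has the same hole as yours, only hidden.

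The missing ingredient must be added as a hypothesis, not proved. What your correction sketch actually uses is a linearity property of the score set: $s - c\varphi^* \in \cS$ whenever $s \in \cS$ and $c \in \RR$, for instance when $\cS$ is a linear space. Assumption~\ref{assumption:product-structure} only supplies particular coordinate submodels and by itself says nothing about how an arbitrary score relates to them.

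Under the closure property you do not even need the approximation argument. For $s \in \cS$ and $\varphi^* \neq 0$, set $c := \bE_0[\varphi^* s]/\bE_0[(\varphi^*)^2]$. By pathwise differentiability, $s - c\varphi^*$ is a score along which $\beta$ has derivative $0$, so it lies in $\cS_{\mathrm{nuis}}$. Then $D \in \Lambda^\perp$ and the $\langle\varphi^*\rangle$ condition give $\bE_0[Ds] = c\,\bE_0[D\varphi^*] = c\,\bE_0[(\varphi^*)^2] = \bE_0[\varphi^* s]$, which is the gradient identity. With that hypothesis stated explicitly, your proof is complete.
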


\begin{proof}
By Assumption~\ref{assumption:ef_representation} (i), \[\bE_{P_{t, s}}[D_k(Z \mid \beta(P_{t, s}), \eta(P_{t, s}))] = 0\] for all sufficiently small \(t\). Combined with \(\bE_0[D_k(Z \mid \beta_0, \eta_0)] = 0\), we can write
\begin{align*}
\frac{1}{t}\bE_0\left[D_k\left(Z \mid \beta(P_{t, s}), \eta(P_{t, s})\right)\right] 
&= \frac{1}{t}\left\{\bE_0\left[D_k\left(Z \mid \beta(P_{t, s}), \eta(P_{t, s})\right)\right] - \bE_{P_{t, s}}\left[D_k\left(Z \mid \beta(P_{t, s}), \eta(P_{t, s})\right)\right]\right\}  \\
&= \int D_k\left(z \mid \beta(P_{t, s}), \eta(P_{t, s})\right) \frac{dP_0 - dP_{t,s}}{t}(z) 
\end{align*}
Define \(g_t  := D_k\left(z \mid \beta(P_{t, s}), \eta(P_{t, s})\right)\) and  \(g_0  := D_k\left(z \mid \beta_0, \eta_0\right)\). Writing \(dP_0 = p_0\,d\nu\) and \(dP_{t,s} = p_t\,d\nu\),
\begin{align*}
  \int g_t \cdot \frac{p_0 - p_t}{t}\, d\nu
  &= -\int g_t \cdot \frac{\sqrt{p_t} - \sqrt{p_0}}{t}\,
     \bigl(\sqrt{p_t} + \sqrt{p_0}\bigr)\, d\nu \\
  &= -\int g_0 \cdot \frac{\sqrt{p_t} - \sqrt{p_0}}{t}\,
     \bigl(\sqrt{p_t} + \sqrt{p_0}\bigr)\, d\nu
     - \int (g_t - g_0) \cdot \frac{\sqrt{p_t} - \sqrt{p_0}}{t}\,
     \bigl(\sqrt{p_t} + \sqrt{p_0}\bigr)\, d\nu. 
\end{align*}
The first integral converges to $\bE_0[g_0s]$ by the same argument as in the proof of Lemma~\ref{lem:differentiation_fixed_f}. For the second integral, Cauchy--Schwarz gives
\[\left|\int (g_t - g_0) \cdot \frac{\sqrt{p_t} - \sqrt{p_0}}{t}\,
     \bigl(\sqrt{p_t} + \sqrt{p_0}\bigr)\, d\nu\right|
  \;\le\; \left\|(g_t - g_0)\bigl(\sqrt{p_t} + \sqrt{p_0}\bigr)\right\|_{L_2(\nu)}
     \cdot \left\|\frac{\sqrt{p_t} - \sqrt{p_0}}{t}\right\|_{L_2(\nu)},\]
where the second factor is bounded by QMD. For the first factor,
\begin{align*}
  \left\|(g_t - g_0)\bigl(\sqrt{p_t} + \sqrt{p_0}\bigr)\right\|_{L_2(\nu)}^2
  &= \int (g_t - g_0)^2\,\bigl(\sqrt{p_t} + \sqrt{p_0}\bigr)^2\, d\nu \\
  &\le 2\int (g_t - g_0)^2\,(p_t + p_0)\, d\nu \\
  &= 2\left(\bE_{P_t}\left[(g_t - g_0)^{2}\right] + \bE_0\left[(g_t - g_0)^{2}\right]\right).
\end{align*}
The term \(\bE_0\left[(g_t - g_0)^{2}\right] \to 0\) by Assumption~\ref{assumption:ef_representation} (iii). For \(\bE_{P_t}\left[(g_t - g_0)^{2}\right]\), write
\begin{align*}
  \bE_{P_t}\left[(g_t - g_0)^{2}\right]
  &= \bE_0\left[(g_t - g_0)^{2}\right] + \int (g_t - g_0)^2\,(p_t - p_0)\, d\nu \\
  &\le \bE_0\left[(g_t - g_0)^{2}\right] + 4C^2 \int |p_t - p_0|\, d\nu,
\end{align*}
where the second inequality uses Assumption~\ref{assumption:ef_representation} (v), and \(\int |p_t - p_0|\, d\nu \to 0\) again by QMD. Thus,
\[f_k(s) = -\bE_0[D_k(Z \mid \beta_0, \eta_0) \cdot s(Z)] = -\langle D_k(\cdot \mid \beta_0, \eta_0),\, s\rangle_{P_0}.\]
By definition, \(D_k\) is a gradient if and only if the inner product equals zero for all \(s \in \cS_{\mathrm{nuis}}\) and equals \(\frac{d}{dt}\beta(P_{t, s})|_{t = 0}\) for \(s \in \langle \varphi^* \rangle\). This is equivalent to the stated conditions on \(f_k\).
\end{proof}

\paragraph{The negative identity (Lemma~1.3 of 
\citet{van_der_laan_unified_2003}).} The second result builds on Lemma~\ref{lem:gradient_characterization} to establish that if $D_k(\cdot \mid \beta_0, \eta_0)$ is an influence function, then the partial derivative of its expectation $\bE_0[D_k(Z \mid \beta, \eta_0)]$ with respect to $\beta$ at $\beta_0$ equals $-1$. This central result links influence functions to estimating functions and allows efficient estimators to be obtained by solving moment conditions. Below, we reproduce the argument of \citet{van_der_laan_unified_2003} essentially unchanged, with the addition of making explicit a regularity condition it leaves implicit. The argument differentiates the expected estimating function with the nuisance held fixed. In their formulation this step is immediate, since the target and nuisance parameters vary independently thus perturbations of the nuisance alone are available from the outset. What the step requires in addition, however, is less the independence of the two parameters than the regularity of the submodels that realize these perturbations---the local product structure of Assumption~\ref{assumption:product-structure}, which their treatment uses without separately identifying it. Accordingly, we present the proof as given by \citet{van_der_laan_unified_2003} and then proceed to clarify the precise point at which it is warranted.

\begin{lemma}[Negative identity; Lemma~1.3 of \citet{van_der_laan_unified_2003}]
\label{lem:negative_identity}
In addition to Assumption~\ref{assumption:ef_representation}, assume that \(\beta\) and \(\eta\) are locally variation independent at \(P_0\) (Definition~\ref{def:local_variation_independence}), that \(\beta \mapsto \bE_0[D_k(Z \mid \beta, \eta_0)]\) is differentiable at \(\beta_0\) with nonzero derivative for all \(k \in \cK(P_0)\), and that \(\bE_0\left[\varphi^*(Z)^{2}\right] > 0\). If \(D_k(\cdot \mid \beta_0, \eta_0)\) is a gradient, then
\[\frac{d}{d\beta}\bE_0[D_k(Z \mid \beta, \eta_0)]\bigg|_{\beta = \beta_0} = -1.\]
\end{lemma}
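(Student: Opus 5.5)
The plan is to reproduce the van der Laan--Robins argument. It rests on Lemma~\ref{lem:gradient_characterization} applied along a \(\beta\)-moving submodel, together with a step that freezes the nuisance at \(\eta_0\). That freezing step is exactly where local product structure is needed.

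\textbf{Step 1: a nuisance-fixed submodel in the efficient direction.} By Assumption~\ref{assumption:ef_representation}(iv), the score \(s^* := \varphi^*\) belongs to \(\cS\). Take a regular submodel \(t \mapsto P_{t,s^*}\). Pathwise differentiability then gives
\[
\frac{d}{dt}\beta(P_{t,s^*})\Big|_{t=0} = \bE_0[(\varphi^*)^2] > 0 .
\]
I would like this submodel to hold the nuisance fixed, \(\eta(P_{t,s^*}) = \eta_0\), with \(\beta(P_{t,s^*})\) moving at a nonzero rate. Local variation independence only guarantees that the points \(P_{\beta_0+u,\eta_0}\) exist. It does not say that they form a regular submodel, so here I invoke the coordinate regularity (Assumption~\ref{ass:coordinate_qmd}, equivalently the \(\beta\)-coordinate submodel of Assumption~\ref{assumption:product-structure}).

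Concretely, let \(t \mapsto P_t\) be a submodel with \(\dot\beta = 1\), \(\dot\eta = 0\), and score \(s_\beta\). Its score splits as \(s_\beta = c\,\varphi^* + \lambda\), with \(\lambda\) in \(\Lambda\) plus a possible \(\cT^\perp\) part. The gradient property forces \(c\,\bE_0[(\varphi^*)^2] = \bE_0[\varphi^* s_\beta] = 1\), so \(c \neq 0\).

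\textbf{Step 2: differentiate the expected estimating function along this submodel.} By Lemma~\ref{lem:gradient_characterization},
\[
f_k(s_\beta) = \frac{d}{dt}\bE_0\bigl[D_k(Z\mid \beta(P_t),\eta(P_t))\bigr]\Big|_{t=0} = -\bE_0[D_k s_\beta],
\]
and the gradient property gives \(-\bE_0[D_k s_\beta] = -1\).

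On the other hand, \(\eta(P_t)\) is either exactly \(\eta_0\) (if the coordinate submodel is exact) or \(\eta_0 + o(t)\). In either case,
\[
\bE_0\bigl[D_k(Z\mid \beta(P_t),\eta(P_t))\bigr] = \bE_0\bigl[D_k(Z\mid \beta(P_t),\eta_0)\bigr] + o(t).
\]
Applying the chain rule to the differentiable map \(\beta \mapsto \bE_0[D_k(Z\mid\beta,\eta_0)]\) with \(\dot\beta = 1\), the left-hand derivative equals
\[
\frac{d}{d\beta}\bE_0[D_k(Z\mid \beta,\eta_0)]\Big|_{\beta_0}.
\]
Equating the two expressions gives the claim \(=-1\).

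\textbf{Main obstacle.} The hard step is justifying the \(o(t)\) replacement of \(\eta(P_t)\) by \(\eta_0\). If the coordinate path is exact, as in Assumption~\ref{ass:coordinate_qmd}, this is trivial. If only \(\dot\eta = 0\) is available, I would use a Fr\'echet-type differentiability of \(\eta \mapsto \bE_0[D_k(Z\mid\beta,\eta)]\) jointly with \(\beta\), analogous to Lemma~\ref{lem:l2_chain_rule}, so that the nuisance contribution is \(\partial_\eta[\dot\eta] = 0\).

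Without product structure, this route fails. One would have to argue instead that the nuisance contribution vanishes because nuisance-only scores lie in \(\Lambda\), and that argument already presupposes regular submodels moving \(\eta\) alone. So the proof as stated needs local variation independence strengthened by Assumption~\ref{ass:coordinate_qmd}, and I would make that strengthening explicit at this step. The hypothesis that the derivative is nonzero is then not needed beyond ensuring the derivative exists.
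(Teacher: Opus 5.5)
Your proof is correct under the same kind of strengthening the paper itself says is needed. Like the paper, you rightly observe that local variation independence alone does not supply a regular submodel. However, your repair takes a different route from the paper's.

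\textbf{The paper's route.} The paper stays with the van der Laan--Robins submodel with score $c\varphi^*$, along which $\eta$ generally moves. To license replacing $\eta(P_{t,s})$ by $\eta_0$, it first proves the nuisance insensitivity \eqref{eq:nuisance_insensitivity} for every $h\in\dot\cH$. It does this by applying Lemma~\ref{lem:gradient_characterization} to the $\eta$-coordinate submodels of Assumption~\ref{assumption:product-structure}, using joint Fr\'echet differentiability of $(\beta,\eta)\mapsto\bE_0[D_k(Z\mid\beta,\eta)]$. The chain rule and $h_{2,s}'(0)=c\,\bE_0[(\varphi^*)^2]\neq 0$ then give $-1$.

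\textbf{Your route.} You apply Lemma~\ref{lem:gradient_characterization} along the $\beta$-coordinate submodel instead. There the nuisance contribution vanishes because $\dot\eta=0$, not because of any orthogonality, so $h_1'(\beta_0)=-\bE_0[D_k s_\beta]=-1$ follows at once. This is exactly the argument of Part~2 of Theorem~\ref{thm:reverse}.

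\textbf{What each buys.} Your route is leaner.
\begin{itemize}
\item With exact coordinate paths (Assumption~\ref{ass:coordinate_qmd}) it needs no Fr\'echet differentiability at all, since $\bE_0[D_k(Z\mid\beta(P_t),\eta(P_t))]=h_1(\beta_0+t)$ identically.
\item With only the first-order version of Assumption~\ref{assumption:product-structure}, it needs the same joint Fr\'echet differentiability as the paper (cf.\ Remark~\ref{remark:frechet}), as you note.
\item It never uses the $\eta$-coordinate submodels, the richness condition (ii), the requirement $\langle\varphi^*\rangle\subset\cS$, or $\bE_0[(\varphi^*)^2]>0$.
\end{itemize}
So it shows that the $-1$ identity by itself needs only the $\beta$-half of local product structure. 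In exchange, the paper's route yields the Neyman orthogonality \eqref{eq:nuisance_insensitivity} of $D_k$ as a by-product. That is precisely what validates the frozen-$\eta$ step in the original proof along $c\varphi^*$.

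One small point: the decomposition $s_\beta=c\varphi^*+\lambda$ in your Step~1 plays no role, since Step~2 only uses $\bE_0[D_k s_\beta]=1$ from the gradient property. You can drop the detour through $s^*=\varphi^*$ entirely.
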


\begin{proof}[Proof (as given by \citet{van_der_laan_unified_2003})]
Let \(s \in \cS\) be a scalar multiple of \(\varphi^*\), say \(s = c\varphi^*\) for some \(c \neq 0\). Since \(\varphi^* \in \cS\) by Assumption~\ref{assumption:ef_representation} (iv), \(s\) is the score of some regular submodel \(\{P_{t, s}\}\) through \(P_0\). Define
\[h_{2,s}(t) := \beta(P_{t, s}), \qquad h_1(\beta) := \bE_0[D_k(Z \mid \beta, \eta_0)].\]
The map \(t \mapsto \bE_0[D_k(Z \mid \beta(P_{t,s}), \eta_0)]\) is the composition \(h_1(h_{2,s}(t))\). As in the proof of Lemma~\ref{lem:gradient_characterization},
\begin{equation}
\label{eq:key_derivative}
\frac{d}{dt}h_1(h_{2,s}(t))\bigg|_{t = 0} = -\frac{d}{dt}\beta(P_{t, s})\bigg|_{t = 0}.
\end{equation}
By the chain rule, the left-hand side equals \(h_1'(\beta_0) \cdot h_{2,s}'(0)\). Pathwise differentiability gives
\[h_{2,s}'(0) = \frac{d}{dt}\beta(P_{t,s})\bigg|_{t=0} = \bE_0[\varphi^* s] = c\,\bE_0\left[(\varphi^*)^{2}\right] \neq 0.\]
So \(h_1'(\beta_0) \cdot h_{2,s}'(0) = -h_{2,s}'(0)\). Since \(h_{2,s}'(0) \neq 0\), it follows that \(h_1'(\beta_0) = -1\).
\end{proof}

\paragraph{The role of regularity.} The proof invokes ``as in the proof of Lemma~\ref{lem:gradient_characterization}'' to claim~\eqref{eq:key_derivative}, i.e.,
\begin{equation}
\label{eq:frozen_eta}
\frac{d}{dt}\bE_0[D_k(Z \mid \beta(P_{t, s}), \eta_0)]\bigg|_{t = 0} = -\frac{d}{dt}\beta(P_{t, s})\bigg|_{t = 0}.
\end{equation}
However, Lemma~\ref{lem:gradient_characterization} actually established
\begin{equation}
\label{eq:moving_eta}
\frac{d}{dt}\bE_0[D_k(Z \mid \beta(P_{t, s}), \eta(P_{t, s}))]\bigg|_{t = 0} = -\frac{d}{dt}\beta(P_{t, s})\bigg|_{t = 0},
\end{equation}
where \(\eta(P_{t,s})\) varies with \(t\). For~\eqref{eq:frozen_eta} to follow from~\eqref{eq:moving_eta}, one must show that replacing \(\eta(P_{t,s})\) by the fixed value \(\eta_0\) does not affect the derivative, i.e., that
\begin{equation}
\label{eq:nuisance_insensitivity}
\frac{\partial}{\partial \eta}\bE_0[D_k(Z \mid \beta_0, \eta)]\bigg|_{\eta = \eta_0}[h] = 0 \quad \forall h \in \dot{\cH}.
\end{equation}
To see why~\eqref{eq:nuisance_insensitivity} is needed, suppose that the map \((\beta, \eta) \mapsto \bE_0[D_k(Z \mid \beta, 
\eta)]\) is Fr\'echet differentiable at \((\beta_0, \eta_0)\). The chain rule decomposes~\eqref{eq:moving_eta} as 
\begin{align*}
&\frac{d}{dt}\bE_0\!\left[D_k\!\left(Z \mid \beta(P_{t,s}), \eta(P_{t,s})\right)\right]\bigg|_{t=0}\\
= &\frac{d}{dt}\bE_0\!\left[D_k\!\left(Z \mid \beta(P_{t,s}), \eta_0\right)\right]\bigg|_{t=0} + \frac{\partial}{\partial \eta}\bE_0\!\left[D_k\!\left(Z \mid \beta_0, \eta\right)\right]\bigg|_{\eta = \eta_0}\!\left[\frac{d}{dt}\eta(P_{t,s})\bigg|_{t=0}\right]
\end{align*}
so~\eqref{eq:frozen_eta} follows from~\eqref{eq:moving_eta} if 
and only if the second term vanishes. 
\eqref{eq:nuisance_insensitivity} guarantees this by requiring that the nuisance derivative of the expected  estimating function vanishes in every direction 
\(h \in \dot{\cH}\).

We now show that establishing~\eqref{eq:nuisance_insensitivity} 
requires Assumption~\ref{assumption:product-structure}. Apply 
Lemma~\ref{lem:gradient_characterization} to a nuisance score 
\(s_{\mathrm{nuis}} \in \cS_{\mathrm{nuis}}\). Since 
\(D_k(\cdot \mid \beta_0, \eta_0)\) is an influence function, 
\(f_k(s_{\mathrm{nuis}}) = 0\), i.e.,
\[\frac{d}{dt}\bE_0\!\left[D_k\!\left(Z \mid \beta(P_{t, 
s_{\mathrm{nuis}}}), \eta(P_{t, s_{\mathrm{nuis}}})\right)
\right]\bigg|_{t = 0} = 0.\]
Assuming again Fr\'echet differentiability, the chain rule gives
\[\frac{\partial}{\partial \beta}\bE_0\!\left[D_k\!\left(Z \mid \beta, \eta_0\right)\right]\bigg|_{\beta = \beta_0} \cdot \frac{d}{dt}\beta(P_{t, s_{\mathrm{nuis}}})\bigg|_{t=0} + \frac{\partial}{\partial \eta}\bE_0\!\left[D_k\!\left(Z \mid \beta_0,\eta\right)\right]\bigg|_{\eta = \eta_0}\!\left[\frac{d}{dt}\eta(P_{t, s_{\mathrm{nuis}}})\bigg|_{t=0}\right] = 0.\]
Since \(s_{\mathrm{nuis}}\) is a nuisance score, 
\(\frac{d}{dt}\beta(P_{t, s_{\mathrm{nuis}}})|_{t=0} = 0\), 
so the first term vanishes and we obtain
\[\frac{\partial}{\partial \eta}\bE_0\!\left[D_k\!\left(Z \mid \beta_0, \eta\right)\right]\bigg|_{\eta = \eta_0}\!\left[\frac{d}{dt}\eta(P_{t, s_{\mathrm{nuis}}})\bigg|_{t=0}\right] = 0.\]
This establishes~\eqref{eq:nuisance_insensitivity} only for  those directions \(h \in \dot{\cH}\) that arise as nuisance  derivatives of submodels in \(\cS_{\mathrm{nuis}}\). A priori, these nuisance derivatives populate some subset of \(\dot{\cH}\), but there is no reason this subset should exhaust \(\dot{\cH}\). 
Assumption~\ref{assumption:product-structure} closes the 
remaining gap by furnishing for each \(h \in \dot{\cH}\) a regular submodel with \(\frac{d}{dt}\beta(P_t)|_{t=0} = 0\) and \(\frac{d}{dt}\eta(P_t)|_{t=0} = h\). Since \(\frac{d}{dt}\beta(P_t)|_{t=0} = 0\), the score of this submodel is a nuisance score, and its nuisance derivative at \(t = 0\) is exactly \(h\). The argument above then yields~\eqref{eq:nuisance_insensitivity} for this \(h\). Since \(h \in \dot{\cH}\) was arbitrary, \eqref{eq:nuisance_insensitivity} holds in full generality.

With~\eqref{eq:nuisance_insensitivity} in hand, the passage from~\eqref{eq:moving_eta} to~\eqref{eq:frozen_eta} immediately follows. For any score \(s = c\varphi^*\), the same chain-rule 
decomposition used above gives
\begin{align*}
&\frac{d}{dt}\bE_0\!\left[D_k\!\left(Z \mid \beta(P_{t,s}), \eta(P_{t,s})\right)\right]\bigg|_{t=0} \\
= &\frac{\partial}{\partial \beta}\bE_0\!\left[D_k\!\left(Z \mid \beta, \eta_0\right)\right]\bigg|_{\beta = \beta_0} \cdot \frac{d}{dt}\beta(P_{t,s})\bigg|_{t=0} + \frac{\partial}{\partial \eta}\bE_0\!\left[D_k\!\left(Z \mid \beta_0, \eta\right)\right]\bigg|_{\eta = \eta_0}\!\left[\frac{d}{dt}\eta(P_{t,s})\bigg|_{t=0}\right] \\
= &\frac{\partial}{\partial \beta}\bE_0\!\left[D_k\!\left(Z \mid \beta, \eta_0\right)\right]\bigg|_{\beta = \beta_0} \cdot \frac{d}{dt}\beta(P_{t,s})\bigg|_{t=0} \\
= &\frac{d}{dt}\bE_0\!\left[D_k\!\left(Z \mid \beta(P_{t,s}), \eta_0\right)\right]\bigg|_{t=0},
\end{align*}
where the second equality 
follows from~\eqref{eq:nuisance_insensitivity}, and the proof 
of the negative identity then proceeds as written.

\begin{remark}[Fr\'echet differentiability]
\label{remark:frechet}
The chain-rule decompositions above require Fr\'echet differentiability of the map \((\beta, \eta) \mapsto \bE_0[D_k(Z \mid \beta, \eta)]\) at \((\beta_0, \eta_0)\),  which is not explicitly stated in Lemma~1.3 of \citet{van_der_laan_unified_2003}. The paragraph immediately preceding Lemma~1.3 in their exposition, however, suggests that smoothness conditions should be jointly imposed on \(\beta\) and \(\eta\).
\end{remark}

\begin{remark}[Boundedness and the score definition]
\label{remark:score_definition}
The reader may notice Assumption~\ref{assumption:ef_representation} (v) imposes
uniform boundedness on the estimating functions, a condition
not present in the corresponding result of \citet{van_der_laan_unified_2003}. This difference traces to the definition of the score, where \citet{van_der_laan_unified_2003} define the score as the $L_2(P_0)$ limit of the density ratio $(p_t/p_0 - 1)/t$, which is strictly stronger than the quadratic mean differentiability (QMD) formulation of \citet{van_der_vaart_asymptotic_1998} adopted herein. Under their definition, the convergence in the proof of Lemma~\ref{lem:gradient_characterization} follows from Cauchy--Schwarz in $L_2(P_0)$ alone. Under QMD, the same step requires decomposing through $\bigl(\sqrt{p_t} - \sqrt{p_0}\bigr)\bigl(\sqrt{p_t} + \sqrt{p_0}\bigr)$, and bounding the resulting cross term requires introducing the uniform boundedness condition. It should be noted that the uniform boundedness condition is an artifact of the QMD formulation and not a structural requirement of the arguments. We adopt QMD throughout to maintain a single consistent convention, and the distinction between local product structure and variation independence arises independently of which score formulation is adopted.
\end{remark}

\section{Examples}

\subsection{Average Treatment Effect}
\label{appendix:ate}

We first illustrate the equivalence results of Section~\ref{sec:equivalence} through a detailed worked example on the average treatment effect. For each direction of the equivalence, we verify every assumption and construct the required objects explicitly.

\paragraph{Setup.}
Let \(Z = (Y, X, A)\) with confounders \(X\), binary treatment
\(A \in \{0, 1\}\), and outcome \(Y\). We assume that the standard causal assumptions of consistency, positivity, and no unmeasured confounding hold. We work in the nonparametric model \(\cP\) consisting of all densities \(p\) with respect to a \(\sigma\)-finite dominating measure \(\nu\) that satisfy the regularity conditions (R1)--(R2) below. We fix \(P_0 \in \cP\) and define the nuisance quantities
\[
  \mu_a(x) := \bE_0[Y \mid X = x,\, A = a], \qquad
  \pi(x) := P_0(A = 1 \mid X = x),
\]
the treatment effect function \(\tau(x) := \mu_1(x) - \mu_0(x)\), and the conditional outcome variance \(\sigma_a^2(x) := \mathrm{Var}_0(Y \mid X = x,\, A = a)\). The target and nuisance functionals are
\[
  \beta(P) := \bE_P\left[\mu_1^P(X) - \mu_0^P(X)\right], \qquad
  \eta(P) := \bigl(\mu_1^P,\, \mu_0^P,\, \pi^P\bigr),
\]
with \(\beta_0 := \beta(P_0)\) and \(\eta_0 := (\mu_1,\, \mu_0,\, \pi)\).

\paragraph{Regularity conditions.}
We further impose the following conditions, which ensure that the constructed submodels are well-behaved. Note that positivity already appeared as an identification assumption.
\begin{enumerate}
  \item[(R1)] \textbf{Positivity.} There exists \(\varepsilon > 0\)
    such that \(\pi^P(x) \in [\varepsilon,\, 1 - \varepsilon]\) for all \(x\) and \(P \in \cP\).
  \item[(R2)] \textbf{Bounded outcomes.} There exists
    \(C_Y < \infty\) such that \(|y| \le C_Y\) for all \(y\).
  \item[(R3)] \textbf{Positive conditional variance.}
    \(\sigma_a^2(x) \ge \sigma^2 > 0\) for all \(x\) and \(a = 0, 1\).
  \item[(R4)] \textbf{Treatment effect heterogeneity.}
    \(\mathrm{Var}_0(\tau(X)) > 0\).
\end{enumerate}

We also assume an interior positivity margin at \(P_0\): there exists
\(\varepsilon' > \varepsilon\) such that
\(\pi(x) \in [\varepsilon', 1 - \varepsilon']\) for all \(x\). This ensures that for any bounded mean-zero
\(g\), the linear tilt \(P_t\) with density \(p_0(1 + tg)\)
remains in \(\cP\) for sufficiently small \(|t|\), so the tangent
space at \(P_0\) is \(\cT = L_2^0(P_0)\) by the same argument as
Corollary~\ref{cor:saturation}.

Finally, we take the ambient normed space to be
\(\cV := L_\infty(P_{0,X})^3\) with the product supremum norm,
and the nuisance parameter set to be
\[
  \cH := \left\{(\mu_1, \mu_0, \pi) \in \cV \,:\,
  \operatorname{ess\,inf}_{P_{0,X}} \pi > 0 \;\text{ and }\;
  \operatorname{ess\,inf}_{P_{0,X}} (1 - \pi) > 0\right\}.
\]
Since \(\pi(x) \in [\varepsilon, 1 - \varepsilon]\) for all \(x\) 
by~(R1) and \(|\mu_a(x)| \le C_Y\) for all \(x\)\ by~(R2), it follows \(\eta_0 \in \cH\). Moreover, since \(\cH\) is open in \(\cV\), the admissible perturbation space is
\(\dot{\cH} = \cV\).

\paragraph{Estimating function and influence function.} Define the estimating function
\begin{equation}
    m(Z;\, \beta,\, \eta) := \frac{A}{\pi(X)}(Y - \mu_1(X)) - \frac{1 - A}{1 - \pi(X)}(Y - \mu_0(X)) + \mu_1(X) - \mu_0(X) - \beta, 
    \label{eq:ate_m}
\end{equation}
and the influence function at the truth,
\begin{equation}
    \varphi(Z) := m(Z;\, \beta_0,\, \eta_0) = \frac{A}{\pi(X)}(Y - \mu_1(X)) - \frac{1 - A}{1 - \pi(X)}(Y - \mu_0(X)) + \tau(X) - \beta_0.
    \label{eq:ate_phi}
\end{equation}

\subsubsection{Forward direction}
\label{subsubsec:ate_forward}

We verify Assumptions~\ref{assumption:correct_spec}--\ref{assumption:hellinger} and apply Theorem~\ref{thm:forward} to conclude that \(\beta\) is pathwise differentiable with influence function \(\varphi(Z) = -G^{-1}m(Z;\, \beta_0,\, \eta_0)\).

\textbf{Assumption~\ref{assumption:correct_spec}.}
Let \(P\) be any distribution with \(\beta(P) = \beta\) and \(\eta(P) = (\mu_1, \mu_0, \pi)\). We show \(\bE_P[m(Z;\, \beta,\, \eta)] = 0\). By the tower property, conditioning first on \(X\) and then on \((X, A)\), and using the definition \(\mu_1(x) = \bE_P[Y \mid X = x,\, A = 1]\),
\begin{align*}
  \bE_P\left[\frac{A}{\pi(X)}(Y - \mu_1(X))\right] &= \bE_P\left[\bE_P\left[\frac{A}{\pi(X)}(Y - \mu_1(X))\middle| X\right]\right]\\
  &= \bE_P\left[\frac{\pi(X)}{\pi(X)} \cdot \bE_P[Y - \mu_1(X) \mid X,\, A = 1]\right] = 0,
\end{align*}
where the second equality uses \(\bE_P[A \cdot f(Z) \mid X] = \pi(X) \cdot \bE_P[f(Z) \mid X,\, A = 1]\). The second IPW term vanishes identically by the same argument with \(a = 0\). The remaining terms contribute \(\bE_P[\mu_1(X) - \mu_0(X)] - \beta = \beta - \beta = 0\). 

\textbf{Assumption~\ref{assumption:jacobian}.}
Since \(m\) is linear in \(\beta\) with coefficient \(-1\), we have \(\partial_\beta m(Z;\, \beta_0,\, \eta_0) = -1\) identically, so
\[
  G := \bE_0[\partial_\beta m(Z;\, \beta_0,\, \eta_0)] = -1 \neq 0. 
\]

\textbf{Assumption~\ref{assumption:neyman}.}
We verify that the G\^ateaux derivative of \(\eta \mapsto \bE_0[m(Z;\, \beta_0,\, \eta)]\) vanishes at \(\eta_0\) in each coordinate direction of \(\dot{\cH}\). Since \(\eta = (\mu_1, \mu_0, \pi)\) and the admissible perturbation space \(\dot{\cH}\) is a product, linearity allows us to check each component separately. Recall that throughout, the expectation \(\bE_0\) is taken under the fixed measure \(P_0\) and only the function arguments inside \(m\) are being varied.

\textbf{Perturbation \(\mu_1 \to \mu_1 + th_1\).}
Substituting \(\mu_1 + th_1\) into~\eqref{eq:ate_m} with \(\beta = \beta_0\) and \((\mu_0, \pi)\) held at their true values, the only terms affected are the first IPW term \[\frac{A}{\pi(X)}(Y - \mu_1(X) - th_1(X))\] and the outcome regression \[\mu_1(X) + th_1(X) - \mu_0(X).\] Taking the expectation under \(P_0\) and differentiating at \(t = 0\):
\[
  \frac{d}{dt}\bE_0[m(Z; \beta_0, (\mu_1 + th_1, \mu_0,\pi))]\bigg|_{t=0} = \bE_0\left[-\frac{A h_1(X)}{\pi(X)} + h_1(X)\right] = \bE_0\left[h_1(X)\left(1 - \frac{A}{\pi(X)}\right)\right].
\]
Conditioning on \(X\) and using \(\bE_0[A \mid X] = \pi(X)\):
\[
  \bE_0\left[1 - \frac{A}{\pi(X)} \middle| X\right] = 1 - \frac{\pi(X)}{\pi(X)} = 0.
\]
By the tower property, the derivative vanishes for all \(h_1 \in L_\infty(P_{0,X})\). The perturbation \(\mu_0 \to \mu_0 + th_0\) follows by an identical argument. 

\textbf{Perturbation \(\pi \to \pi + th_\pi\).}
Substituting \(\pi + th_\pi\) affects only the denominators of the two IPW terms. Since the outcome regression \(\mu_1(X) - \mu_0(X) - \beta_0\) does not involve \(\pi\), we differentiate only the IPW terms. Using \[\frac{d}{dt}\frac{1}{\pi + th_\pi}\bigg|_{t=0} = -h_\pi / \pi^2 \quad \text{and} \quad \frac{d}{dt}\frac{1}{1-\pi-th_\pi}\bigg|_{t=0} = h_\pi/(1-\pi)^2,\] we can write
\[
  \frac{d}{dt}\bE_0[m(Z;\beta_0,(\mu_1,\mu_0,\pi + th_\pi))]\bigg|_{t=0} = \bE_0\left[-\frac{A\, h_\pi(X)}{\pi(X)^2}(Y - \mu_1(X)) - \frac{(1-A)\, h_\pi(X)}{(1-\pi(X))^2}(Y - \mu_0(X))\right].
\]
For the first term, we condition on \(X\):
\[
  \bE_0\left[\frac{A\,(Y-\mu_1(X))}{\pi(X)^2}\middle| X\right] = \frac{1}{\pi(X)^2}\,\bE_0[A(Y-\mu_1(X)) \mid X] = \frac{\pi(X)}{\pi(X)^2}\, \bE_0[Y - \mu_1(X) \mid X,\, A=1] = 0,
\]
where we used \(\bE_0[A \cdot f(Z) \mid X] = \pi(X)\, \bE_0[f(Z) \mid X,\, A=1]\) and the definition of \(\mu_1\). The second term vanishes identically by the same argument with \(a = 0\).

\textbf{Assumptions~\ref{assumption:coord_smooth_all}--\ref{assumption:regularity_submodels}.}
Under (R1)--(R2), the map \((\beta,\, \eta) \mapsto m(\cdot\,;\, \beta,\, \eta) \in L_2(P_0)\) is Fr\'echet differentiable at \((\beta_0,\, \eta_0)\). The partial derivatives computed above are bounded linear maps into \(L_2(P_0)\), with boundedness following from \(\pi \ge \varepsilon\) and \(|Y| \le C_Y\), which ensure all IPW-weighted terms lie in \(L_\infty(P_0)\). We take \(S = L_\infty(P_0) \cap L_2^0(P_0)\), which is dense in \(\cT = L_2^0(P_0)\) by the same argument as Corollary~\ref{cor:saturation}, and for each \(s \in S\) we use the linear tilt submodel from Lemma~\ref{lem:linear_tilt_submodel}. The induced coordinate paths \(t \mapsto (\beta_{t,s},\, \eta_{t,s})\) are differentiable at \(t = 0\), which follows from the explicit derivative formulas
\begin{align*}
  \frac{\partial}{\partial t}\,\mu_{a,t}(x)\bigg|_{t=0} &= \bE_0[(Y - \mu_a(X))\, s(Z) \mid X = x,\, A = a],\\
  \frac{\partial}{\partial t}\,\pi_{t}(x)\bigg|_{t=0} &= \bE_0[(A - \pi(X))\, s(Z) \mid X = x], 
\end{align*}
which are derived in the pathwise differentiability verification below via the quotient rule. The uniform second moment bound of Lemma~\ref{lem:differentiation_varying_f} holds since for linear tilt submodels with bounded scores, the nuisance difference quotients \((\mu_{a,t} - \mu_a)/t\) and \((\pi_t - \pi)/t\) admit closed-form expressions via the change-of-measure identity \(\mu_{a,t}(x) = \bE_0[Y(1+ts) \mid X{=}x, A{=}a] / \bE_0[(1+ts) \mid X{=}x, A{=}a]\) and similarly for \(\pi_t\), which are uniformly bounded in \(x\) for small \(t\) under (R1)--(R2). Combined with the boundedness of \((\beta_t - \beta_0)/t\) from coordinate smoothness, this yields a uniform \(L_\infty\) bound on the full difference quotient \((f_{t,s} - f_0)/t\), which dominates the \(L_2(P_{t,s})\) norm for any \(t\).

\textbf{Assumption~\ref{assumption:hellinger}.}
We show that \[|\beta(P_1) - \beta(P_2)| \le cH(P_1, P_2) \quad \text{ for all } P_1, P_2 \in \cP, \text{ where }c = 4\sqrt{2}C_Y(1 + 1/\varepsilon).\]
To start, write $p_j = dP_j/d\nu$ and recall $\tau^{P_j}(x) = \mu_1^{P_j}(x) - \mu_0^{P_j}(x)$ and $\beta(P_j) = \int \tau^{P_j}(x)\,dP_{j,X}(x)$.
We decompose
\[
\beta(P_1) - \beta(P_2)
= \underbrace{\int \left(\tau^{P_1}(x) - \tau^{P_2}(x)\right)\,dP_{1,X}(x)}_{(\mathrm{I})}
\;+\; \underbrace{\int \tau^{P_2}(x)\,d(P_{1,X} - P_{2,X})(x)}_{(\mathrm{II})}.
\]
By~(R2), $|\tau^{P_2}(x)| \le 2C_Y$, so
\[
|\mathrm{II}| \le 2C_Y \int |p_{1,X}(x) - p_{2,X}(x)|\,d\nu_X
= 4C_Y\,\mathrm{TV}(P_{1,X}, P_{2,X}).
\]
Since the marginal density is obtained by integrating out $(y,a)$,
\[
|p_{1,X}(x) - p_{2,X}(x)|
= \left|\sum_{a \in \{0,1\}} \int (p_1 - p_2)(y,x,a)\,d\nu_Y\right|
\le \sum_{a \in \{0,1\}} \int |p_1 - p_2|(y,x,a)\,d\nu_Y,
\]
where the inequality holds by the triangle inequality. Integrating over $\nu_X$, we obtain
\begin{align*}
    \mathrm{TV}(P_{1,X}, P_{2,X})
&= \frac{1}{2}\int |p_{1,X}(x) - p_{2,X}(x)|\,d\nu_X\\
&\le \frac{1}{2}\sum_{a \in \{0,1\}} \iint |p_1 - p_2|(y,x,a)\,d\nu_Y\,d\nu_X\\
&= \frac{1}{2}\int |p_1 - p_2|\,d\nu\\
&= \mathrm{TV}(P_1, P_2),
\end{align*}
hence $|\mathrm{II}| \le 4C_Y\,\mathrm{TV}(P_1, P_2)$. Next, by the triangle inequality,
\[
|\tau^{P_1}(x) - \tau^{P_2}(x)|
\le |\mu_1^{P_1}(x) - \mu_1^{P_2}(x)| + |\mu_0^{P_1}(x) - \mu_0^{P_2}(x)|,
\]
so it suffices to bound each $\int |\mu_a^{P_1}(x) - \mu_a^{P_2}(x)|\,dP_{1,X}(x)$ separately. Fix $a \in \{0,1\}$. By definition, $\mu_a^{P_j}(x) = \int y\,dP_j(y \mid x, a)$, so $\int (y - \mu_a^{P_2}(x))\,dP_2(y \mid x, a) = 0$. It follows that for $P_{1,X}$-a.s.\ $x$,
\begin{align*}
\mu_a^{P_1}(x) - \mu_a^{P_2}(x)
&= \int y\,dP_1(y \mid x, a) - \mu_a^{P_2}(x) \\
&= \int (y - \mu_a^{P_2}(x))\,dP_1(y \mid x, a) \\
&= \int (y - \mu_a^{P_2}(x))\,dP_1(y \mid x, a)
   - \int (y - \mu_a^{P_2}(x))\,dP_2(y \mid x, a) \\
&= \frac{\int (y - \mu_a^{P_2}(x))(p_1 - p_2)(y,x,a)\,d\nu_Y}{p_1(x,a)},
\end{align*}
where the last equality writes $dP_j(y \mid x, a) = p_j(y,x,a)\,d\nu_Y / p_j(x,a)$. By~(R2), $|y - \mu_a^{P_2}(x)| \le 2C_Y$, so
\[
|\mu_a^{P_1}(x) - \mu_a^{P_2}(x)|
\le \frac{2C_Y}{p_1(x,a)} \int |p_1 - p_2|(y,x,a)\,d\nu_Y.
\]
By~(R1), $p_1(x,a) \ge \varepsilon\,p_{1,X}(x)$, since $p_1(x,a) = \pi_a^{P_1}(x)\,p_{1,X}(x)$ and $\pi_a^{P_1}(x) \ge \varepsilon$. Multiplying both sides by $p_{1,X}(x)$:
\[
|\mu_a^{P_1}(x) - \mu_a^{P_2}(x)|\,p_{1,X}(x)
\le \frac{2C_Y}{\varepsilon} \int |p_1 - p_2|(y,x,a)\,d\nu_Y.
\]
Integrating over $\nu_X$ and summing over $a \in \{0,1\}$,
\[
\mathrm{I}
\le \sum_{a=0}^{1} \int |\mu_a^{P_1}(x) - \mu_a^{P_2}(x)|\,dP_{1,X}(x)
\le \frac{2C_Y}{\varepsilon}
    \sum_{a=0}^{1} \iint |p_1 - p_2|(y,x,a)\,d\nu_Y\,d\nu_X
= \frac{4C_Y}{\varepsilon}\,\mathrm{TV}(P_1, P_2).
\]
Combining the above, we arrive at
\begin{align*}
|\beta(P_1) - \beta(P_2)|
&\le \left(4C_Y + \frac{4C_Y}{\varepsilon}\right)\mathrm{TV}(P_1, P_2)\\
&= 4C_Y\!\left(1 + \frac{1}{\varepsilon}\right)\mathrm{TV}(P_1, P_2) \\
&\le 4\sqrt{2}C_Y \left(1 + \frac{1}{\varepsilon}\right) H(P_1, P_2),
\end{align*}
where the last step uses \(\mathrm{TV}(P_1, P_2)
\le \sqrt{2}H(P_1, P_2).\) Therefore, Assumption~\ref{assumption:hellinger} holds with $c = 4\sqrt{2}C_Y(1 + 1/\varepsilon)$ and any $\delta > 0$.

Since the assumptions of Theorem~\ref{thm:forward} hold, we conclude that \(\beta\) is pathwise differentiable at \(P_0\) with influence function
\[
  \varphi(Z) = -G^{-1}\, m(Z;\, \beta_0,\, \eta_0) = -(-1)^{-1}\, m(Z;\, \beta_0,\, \eta_0) = m(Z;\, \beta_0,\, \eta_0).
\]

\subsubsection{Reverse direction}
\label{subsubsec:ate_reverse}

We verify Assumptions~\ref{assumption:pathwise}--\ref{assumption:regularity_coordinate} of Theorem~\ref{thm:reverse}, as Assumptions~\ref{assumption:correct_spec} and~\ref{assumption:frechet_all} have already been verified in the forward direction.

\textbf{Assumption~\ref{assumption:pathwise}.}
\label{subsubsec:ate_pathwise}

We show that for every linear tilt submodel \(p_t(z) = p_0(z)(1 + t g(z))\) with \(\bE_0[g] = 0\) and \(\|g\|_\infty \le M\), whose score is \(s \equiv g\) by Lemma~\ref{lem:linear_tilt_submodel},
\[\frac{d}{dt}\beta(P_t)\bigg|_{t=0} = \bE_0[\varphi(Z)\, g(Z)].\]
We first establish this identity for all bounded mean-zero scores below, and then extend the conclusion to all regular submodels via the approximation step used in the argument of Theorem~\ref{thm:forward}.

The derivative of \(\beta(P_t) = \bE_{P_t}[\mu_{1,t}(X) - \mu_{0,t}(X)]\) decomposes by the product rule into three terms:
\[\frac{d}{dt}\,\beta(P_t)\bigg|_{t=0} = \underbrace{\int_\cX \frac{\partial}{\partial t}\,\mu_{1,t}(x)\bigg|_{t=0} dP_0(x)}_{\text{(I)}} - \underbrace{\int_\cX \frac{\partial}{\partial t}\,\mu_{0,t}(x)\bigg|_{t=0} dP_0(x)}_{\text{(II)}} + \underbrace{\int_\cX \tau(x)\frac{\partial}{\partial t}\,dP_{t,X}(x)\bigg|_{t=0}}_{\text{(III)}}.\]
For the first two terms, 
\[\mu_{a,t}(x) = \int_{\cY} y\, dP_t(y,x,a) \big/ \int_{\cY} dP_t(y,x,a)\] 
where \(dP_t = (1 + t g)\, dP_0\). Define \[N_{\mu_a}(t) := \int_{\cY} y\, dP_0(y,x,a)(1 + t g(y,x,a)), \quad \text{and} \quad D_{\mu_a}(t) := \int_{\cY} dP_0(y,x,a)(1 + t g(y,x,a)).\] 
Then it follows
\begin{align*}
  N_{\mu_a}(0) &= \mu_a(x)\, dP_0(x,a), &N_{\mu_a}'(0) &= \bE_0[Yg(Z) \mid X=x, A=a]\, dP_0(x,a),\\
  D_{\mu_a}(0) &= dP_0(x,a), &D_{\mu_a}'(0) &= \bE_0[g(Z) \mid X=x, A=a]\, dP_0(x,a).
\end{align*}
By the quotient rule, we obtain
\begin{align}
\label{eq:mu_a_deriv}
  \frac{\partial}{\partial t}\,\mu_{a,t}(x)\bigg|_{t=0} 
  &= \biggl[N_{\mu_a}'(0)D_{\mu_a}(0) - N_{\mu_a}(0)D_{\mu_a}'(0)\biggr]\bigg/{D_{\mu_a}(0)^2} \nonumber\\
  &= \bE_0[Yg(Z) \mid X=x, A=a] - \mu_a(x)\,\bE_0[g(Z) \mid X=x, A=a] \nonumber \\
  &= \bE_0[(Y - \mu_a(X))\, g(Z) \mid X=x, A=a].
\end{align}

For $\pi_t(x) = p_t(x, A=1)/p_t(x)$, set
\[
  N_{\pi}(t) := \int_{\cY} dP_0(y, x, 1)(1 + t\,g(y,x,1)), \qquad
  D_{\pi}(t) := \int_{\cY} \sum_{a} dP_0(y, x, a)(1 + t\,g(y,x,a)).
\]
Then it follows
\begin{align*}
  &N_{\pi}(0) = dP_0(x, A{=}1), &N_{\pi}'(0) &= \bE_0[g(Z) \mid X = x,\, A = 1]\,dP_0(x, A{=}1),\\
  &D_{\pi}(0) = dP_0(x), &D_{\pi}'(0) &= \bE_0[g(Z) \mid X = x]\,dP_0(x).
\end{align*}
Again by the quotient rule, we obtain
\begin{align}
  \left.\frac{\partial}{\partial t}\,\pi_t(x)\right|_{t=0}
  &= \biggl[N_{\pi}'(0)D_{\pi}(0) - N_{\pi}(0)D_{\pi}'(0)\biggr]\bigg/{D_{\pi}(0)^2} \nonumber\\
  &= \pi(x)\left(\bE_0[g(Z) \mid X = x,\, A = 1] - \bE_0[g(Z) \mid X = x]\right) \nonumber\\
  &= \bE_0[(A - \pi(X)) \, g(Z) \mid X = x],
  \label{eq:pi_deriv}
\end{align}
where the second equality follows from
$\bE_0[A \,g(Z) \mid X = x] = \pi(x)\,\bE_0[g(Z) \mid X = x,\, A = 1]$.

We now assemble the terms. For term~(I), recall the identity
\[\bE_0[W \cdot \mathbf{1}\{A=1\} \mid X] = \pi(X) \cdot \bE_0[W \mid X,\, A=1].\]
By the law of total expectation,
\begin{align*}
  \int_\cX \bE_0[(Y - \mu_1(X))\, g(Z) \mid X=x, A=1]\, dP_0(x) &= \bE_0\left[\pi(X) \cdot \bE_0\left[\frac{(Y-\mu_1(X))g(Z)}{\pi(X)} \,\bigg|\, X, A=1\right]\right] \nonumber \\
  &= \bE_0\left[\frac{A\,(Y - \mu_1(X))\, g(Z)}{\pi(X)}\right].  
\end{align*}
By the same reasoning, term~(II) gives \[\bE_0\left[\frac{(1-A)(Y - \mu_0(X))\, g(Z)}{1-\pi(X)}\right].\] For term~(III), by the law of total expectation
\[
  \int_\cX \tau(x)\, \bE_0[g(Z) \mid X=x]\, dP_0(x) = \bE_0[\tau(X)\, g(Z)].
\]
Collecting all three terms shows that $\frac{d}{dt}\beta(P_{t,g})|_{t=0} = \bE_0[\varphi(Z)\, g(Z)]$ for every linear tilt submodel with bounded mean-zero score $g$. We know bounded mean-zero functions are dense in $\cT = L_2^0(P_0)$ by the same argument as Corollary~\ref{cor:saturation}, and the ATE is Hellinger Lipschitz as verified in Section~\ref{subsubsec:ate_forward} for Assumption~\ref{assumption:hellinger}. Therefore, the same three-term approximation argument used in the proof of Theorem~\ref{thm:forward} extends this identity to all regular submodels with score $s' \in \cS$, which establishes pathwise differentiability at $P_0$ with influence function $\varphi$.

\textbf{Assumption~\ref{assumption:product_reverse}.}

To verify this assumption, we construct explicit QMD submodels along each coordinate of the parameter space.

\textbf{\(\beta\)-coordinate submodel.}
We construct a QMD path \(t \mapsto P_t\) with \(\beta(P_t) = \beta_0 + t\) and \(\eta(P_t) = \eta_0\). Define the function
\[g_\beta(x) := \frac{\tau(x) - \beta_0}{\mathrm{Var}_0(\tau(X))},\]
which depends on \(z\) only through \(x\). By (R4), \(\mathrm{Var}_0(\tau(X)) > 0\), and (R2) gives \(\|g_\beta\|_\infty < \infty\). Clearly, \(\bE_0[g_\beta] = 0\). By Lemma~\ref{lem:linear_tilt_submodel}, the linear tilt \(dP_t(z) = (1 + t g_\beta(x))\, dP_0(z)\) defines a regular QMD submodel through \(P_0\) with score \(g_\beta\) for \(|t| < 1/\|g_\beta\|_\infty\). Since \(g_\beta\) depends only on \(x\), the conditional densities are undisturbed by the tilt:
\[
  dP_t(y \mid x,\, a) = \frac{dP_t(y,x,a)}{dP_t(x,a)} = \frac{(1 + t g_\beta(x))\,dP_0(y,x,a)}{(1 + t g_\beta(x))\,dP_0(x,a)} = dP_0(y \mid x,\, a),
\]
so \(\mu_{a,t}(x) = \mu_a(x)\) for all small \(t\). Similarly, \(\pi_t(x) = dP_t(x,1)/dP_t(x) = \pi(x)\) since the \((1 + t g_\beta(x))\) factors cancel in the ratio. Hence \(\eta(P_t) = \eta_0\).

Furthermore, \(\beta\) increases at unit rate:
\begin{align*}
  \beta(P_t) &= \bE_{P_t}[\tau(X)] = \bE_0[\tau(X)(1 + t g_\beta(X))] = \beta_0 + t \cdot \frac{\bE_0[\tau(X)\,(\tau(X) - \beta_0)]}{\mathrm{Var}_0(\tau(X))} = \beta_0 + t,
\end{align*}
since \(\bE_0[\tau(X)(\tau(X) - \beta_0)] = \mathrm{Var}_0(\tau(X))\).

\textbf{\(\eta\)-coordinate submodels.}

For each admissible direction \(h = (h_1, h_0, h_\pi) \in \dot{\cH}\), we construct a regular (QMD) submodel through \(P_0\) satisfying \(\dot\beta_{0,s_h} = 0\) and \(\dot\eta_{0,s_h} = h\). Since Assumption~\ref{assumption:product-structure} requires only first-order coordinate control, a linear tilt submodel suffices.

Define the perturbation functions
\[
  g_a(y, x) := \frac{h_a(x)\,(y - \mu_a(x))}{\sigma_a^2(x)}, \quad
  g_{h_\pi}(x, a) := \frac{h_\pi(x)\,(a - \pi(x))}{\pi(x)(1 - \pi(x))}, \quad
  g_\beta(x) := \frac{\tau(x) - \beta_0}{\mathrm{Var}_0(\tau(X))},
\]
and the score
\begin{equation}\label{eq:eta_score}
  s_h(z) := g_a(y, x) + g_{h_\pi}(x, a) + \alpha_0\, g_\beta(x), \qquad \alpha_0 := -\bE_0[h_1(X) - h_0(X)].
\end{equation}
Under (R1)--(R3), each summand is bounded. Indeed, \(|g_a| \le \|h_a\|_\infty \cdot 2C_Y / \sigma^2\), \(|g_{h_\pi}| \le \|h_\pi\|_\infty / (\varepsilon(1 - \varepsilon))\), and \(|g_\beta| \le 4C_Y / \mathrm{Var}_0(\tau(X))\). Each summand also has mean zero. For the outcome perturbation, the law of iterated expectation and \(\bE_0[Y - \mu_a(X) \mid X, A] = 0\) give \(\bE_0[g_a] = 0\). For the propensity perturbation, \(\bE_0[A - \pi(X) \mid X] = 0\) gives \(\bE_0[g_{h_\pi}] = 0\). Finally, \(\bE_0[g_\beta] = 0\) by construction. Hence \(s_h\) is bounded and mean-zero, and by Lemma~\ref{lem:linear_tilt_submodel}, the linear tilt
\[
  p_t(z) := p_0(z)(1 + t\, s_h(z))
\]
defines a regular QMD submodel through \(P_0\) with score \(s_h\) for \(|t| < 1/\|s_h\|_\infty\).

We now verify the first-order coordinate derivatives using the quotient rule formulas~\eqref{eq:mu_a_deriv} and~\eqref{eq:pi_deriv}, applied to the linear tilt with score \(s_h\).

\textbf{Derivative of \(\mu_a\).}
By~\eqref{eq:mu_a_deriv},
\[
  \dot\mu_{a,0}(x) = \bE_0[(Y - \mu_a(X))\, s_h(Z) \mid X = x,\, A = a].
\]
We expand \(s_h = g_a + g_{h_\pi} + \alpha_0\, g_\beta\) and compute each contribution separately. For the outcome perturbation,
\begin{align*}
  \bE_0[(Y - \mu_a(X))\, g_a(Y, X) \mid X = x,\, A = a]
  &= \frac{h_a(x)}{\sigma_a^2(x)}\, \bE_0\left[(Y - \mu_a(X))^{2} \mid X = x,\, A = a\right] \\
  &= \frac{h_a(x)}{\sigma_a^2(x)} \cdot \sigma_a^2(x) = h_a(x).
\end{align*}
For the propensity perturbation, since \(g_{h_\pi}(x, a)\) does not depend on \(y\), it factors out of the conditional expectation and the remaining factor \(\bE_0[Y - \mu_a(X) \mid X = x,\, A = a]\) vanishes by definition of \(\mu_a\). The same reasoning applies to \(\alpha_0\, g_\beta(x)\), which also does not depend on \(y\). That is,
\begin{align*}
  \bE_0[(Y - \mu_a(X))\, g_{h_\pi}(X, A) \mid X = x,\, A = a] &= g_{h_\pi}(x, a) \cdot \bE_0[Y - \mu_a(X) \mid X = x,\, A = a] = 0, \\
  \bE_0[(Y - \mu_a(X))\, \alpha_0\, g_\beta(X) \mid X = x,\, A = a] &= \alpha_0\, g_\beta(x) \cdot \bE_0[Y - \mu_a(X) \mid X = x,\, A = a] = 0.
\end{align*}
Combining the three contributions gives \(\dot\mu_{a,0}(x) = h_a(x)\).

\textbf{Derivative of \(\pi\).}
By~\eqref{eq:pi_deriv},
\[
  \dot\pi_0(x) = \bE_0[(A - \pi(X))\, s_h(Z) \mid X = x].
\]
For the outcome perturbation, we condition on \(A\) and use the conditional mean-zero property of \(g_a\) to obtain
\begin{align*}
  \bE_0[(A - \pi(X))\, g_a(Y, X) \mid X = x]
  &= \sum_{a' \in \{0,1\}} P_0(A = a' \mid x)\,(a' - \pi(x))\, \bE_0[g_{a'}(Y, X) \mid X = x,\, A = a'].
\end{align*}
Each inner expectation evaluates to
\[
  \bE_0[g_{a'}(Y, X) \mid X = x,\, A = a'] = \frac{h_{a'}(x)}{\sigma_{a'}^2(x)}\, \bE_0[Y - \mu_{a'}(X) \mid X = x,\, A = a'] = 0,
\]
so the entire sum vanishes. For the propensity perturbation, recalling that \(g_{h_\pi}(x, a) = h_\pi(x)(a - \pi(x)) / [\pi(x)(1-\pi(x))]\),
\begin{align*}
  \bE_0[(A - \pi(X))\, g_{h_\pi}(X, A) \mid X = x]
  &= \frac{h_\pi(x)}{\pi(x)(1-\pi(x))}\, \bE_0\left[(A - \pi(X))^{2} \mid X = x\right] \\
  &= \frac{h_\pi(x)}{\pi(x)(1-\pi(x))} \cdot \pi(x)(1-\pi(x)) = h_\pi(x).
\end{align*}
For the marginal correction, since \(\alpha_0\, g_\beta(x)\) does not depend on \(a\), it factors out and the remaining expectation vanishes:
\[
  \bE_0[(A - \pi(X))\, \alpha_0\, g_\beta(X) \mid X = x] = \alpha_0\, g_\beta(x) \cdot \bE_0[A - \pi(X) \mid X = x] = 0.
\]
Combining the three contributions gives \(\dot\pi_0(x) = h_\pi(x)\), and therefore \(\dot\eta_{0, s_h} = (h_1, h_0, h_\pi) = h\).

\textbf{Derivative of \(\beta\).}
By Assumption~\ref{assumption:pathwise},
\[
  \dot\beta_{0, s_h} = \bE_0[\varphi(Z)\, s_h(Z)].
\]
Expanding \(\varphi\) from~\eqref{eq:ate_phi} and using linearity of expectation, this becomes
\begin{align}
\label{eq:beta_dot_expansion}
  \bE_0[\varphi(Z)\; s_h(Z)]
  = & \, \bE_0\!\left[\frac{A\,(Y - \mu_1(X))\, s_h(Z)}{\pi(X)}\right]
  - \bE_0\!\left[\frac{(1-A)(Y - \mu_0(X))\, s_h(Z)}{1-\pi(X)}\right] \nonumber \\
  + & \; \bE_0[\tau(X)\, s_h(Z)]
  - \beta_0\, \bE_0[s_h(Z)].
\end{align}
The last term vanishes since \(s_h \in L_2^0(P_0)\). We evaluate the remaining three terms in order.

For the first term, the identity \(\bE_0[A \cdot f(Z) \mid X] = \pi(X)\,\bE_0[f(Z) \mid X, A=1]\) allows us to write
\begin{align*}
  \bE_0\!\left[\frac{A\,(Y - \mu_1(X))\, s_h(Z)}{\pi(X)}\right]
  &= \bE_0\!\left[\bE_0\!\left[\frac{A\,(Y - \mu_1(X))\, s_h(Z)}{\pi(X)} \,\middle|\, X\right]\right] \\
  &= \bE_0\!\left[\bE_0\!\left[(Y - \mu_1(X))\, s_h(Z) \mid X,\, A = 1\right]\right] \\
  &= \bE_0\bigl[\dot\mu_{1,0}(X)\bigr]
  = \bE_0[h_1(X)],
\end{align*}
where the penultimate equality uses~\eqref{eq:mu_a_deriv} and the final equality uses \(\dot\mu_{1,0}(x) = h_1(x)\) as established above. The second term in~\eqref{eq:beta_dot_expansion} follows by the same argument with \(a = 0\), giving \(\bE_0[h_0(X)]\).

For the third term, we apply the tower property to condition on \(X\):
\[
  \bE_0[\tau(X)\, s_h(Z)] = \bE_0\!\left[\tau(X) \cdot \bE_0[s_h(Z) \mid X]\right].
\]
To evaluate the inner conditional expectation, we treat each component of \(s_h\) separately. For \(g_a\), identical to the above, conditioning further on \(A\) gives
\begin{align*}
  \bE_0[g_a(Y, X) \mid X = x]
  &= \sum_{a' \in \{0,1\}} P_0(A = a' \mid x)\, \bE_0[g_{a'}(Y, X) \mid X = x,\, A = a'] \\
  &= \sum_{a' \in \{0, 1\}} P_0(A = a' \mid x) \cdot \frac{h_{a'}(x)}{\sigma_{a'}^2(x)}\, \bE_0[Y - \mu_{a'}(X) \mid X = x,\, A = a'] = 0,
\end{align*}
since the conditional mean of \(Y - \mu_{a'}(X)\) vanishes by definition. For \(g_{h_\pi}\),
\[
  \bE_0[g_{h_\pi}(X, A) \mid X = x] = \frac{h_\pi(x)}{\pi(x)(1-\pi(x))}\, \bE_0[A - \pi(X) \mid X = x] = 0.
\]
Since \(\alpha_0\, g_\beta(x)\) is already a function of \(x\) alone, it passes through the conditional expectation unchanged. Combining these three observations,
\[
  \bE_0[s_h(Z) \mid X = x] = 0 + 0 + \alpha_0\, g_\beta(x) = \alpha_0\, g_\beta(x).
\]
Substituting back and using the identity \(\bE_0[\tau(X)\, g_\beta(X)] = 1\) (established for the \(\beta\)-coordinate submodel),
\[
  \bE_0[\tau(X)\, s_h(Z)] = \bE_0[\tau(X) \cdot \alpha_0\, g_\beta(X)] = \alpha_0 \cdot \bE_0[\tau(X)\, g_\beta(X)] = \alpha_0.
\]
Collecting the three terms of~\eqref{eq:beta_dot_expansion},
\begin{align*}
  \dot\beta_{0, s_h}
  &= \bE_0[h_1(X)] - \bE_0[h_0(X)] + \alpha_0 \\
  &= \bE_0[h_1(X) - h_0(X)] + \left(-\bE_0[h_1(X) - h_0(X)]\right) = 0.
\end{align*}

Combined with the \(\beta\)-coordinate submodel above, we see that Assumption~\ref{assumption:product-structure} holds for the average treatment effect.

\begin{remark}[When no marginal correction is needed]
\label{remark:no_correction}
When \(\bE_0[h_1(X) - h_0(X)] = 0\), which holds for instance when \(h_1 = h_0\) pointwise or whenever the outcome perturbations are mean-balanced across treatment arms, we have \(\alpha_0 = 0\) and the score simplifies to \(s_h = g_a + g_{h_\pi}\). The marginal correction \(\alpha_0\, g_\beta\) is driven entirely by the imbalance \(\bE_0[h_1 - h_0]\) of the outcome perturbations.
\end{remark}

\textbf{Assumption~\ref{assumption:regularity_coordinate}.} Under (R1)--(R2), \(\|\varphi\|_\infty < \infty\) since \(\mu_a\) and \(\pi\) are bounded and \(\pi\) is bounded away from \(0\) and \(1\). The scores \(s_\beta\) and \(s_h\) are bounded by construction, so each coordinate submodel is a linear tilt with bounded score and the verification of Lemma~\ref{lem:differentiation_varying_f} proceeds identically to the forward direction.

Since all assumptions of Theorem~\ref{thm:reverse} hold, with
local product structure
(Assumption~\ref{assumption:product_reverse}) established by
the explicit coordinate submodel constructions above,
Theorem~\ref{thm:reverse} gives that \(m\) is Neyman orthogonal
with \(G = \bE_0[\partial_\beta m(Z;\, \beta_0,\, \eta_0)] = -1\).

\subsection{Partially Linear Model}
\label{appendix:plm}

We now illustrate the equivalence results of Section~\ref{sec:equivalence} through a detailed worked example on estimating the slope in the partially linear model. Unlike the ATE example, the tangent space in the partially linear model is a proper subspace of \(L_2^0(P_0)\), where generic linear tilts of the full density leave the model. Therefore, we construct the needed regular submodels explicitly, characterize the geometry of the tangent space, and verify every assumption for both directions of the equivalence. We recover the known result that the classical residual-on-residual moment is Neyman orthogonal but generally inefficient, with the inefficiency gap attributed to a component lying in \(\cT^\perp\).

\paragraph{Setup.}
We work with the partially linear regression model of \citet{robinson_root-n-consistent_1988}. Let \(Z = (Y, D, X)\), where \(D\) is a scalar treatment or exposure variable and \(X\) is an arbitrary covariate vector. We assume all distributions under consideration are dominated by a fixed \(\sigma\)-finite product measure
\[
\nu = \nu_Y \otimes \nu_D \otimes \nu_X,
\]
with density factorization
\[
p(y, d, x) = p_X(x)\, p_{D \mid X}(d \mid x)\, q(y \mid d, x).
\]
The partially linear model assumes that for each distribution $P$ in the model,
\[
\bE_P[Y \mid D, X] = \ell_P(X) + \theta(P)(D - m_P(X)),
\]
where the target parameter $\theta(P)$ is a scalar, $m_P(X) = \bE_P[D \mid X]$, and $\ell_P(X) = \bE_P[Y \mid X]$. Under the true distribution $P_0$, write $\theta_0 := \theta(P_0)$, $\ell_0 := \ell_{P_0}$, $m_0 := m_{P_0}$, and define
\[
V := D - m_0(X), \qquad U := Y - \ell_0(X) - \theta_0 V.
\]
Then $Y = \ell_0(X) + \theta_0 V + U$. The model restriction also gives $\bE_0[U \mid D, X] = 0$, and the definition of \(m_0\) yields $\bE_0[V \mid X] = 0$. 

\begin{assumption}[Regularity conditions for the PLM]
\label{ass:plm_standing}
There exist finite constants \(C_Y, C_D > 0\) and positive constants \(\underline\sigma^2, \overline\sigma^2, \underline\tau^2, \overline\tau^2\) such that 
\[
|Y| \le C_Y, \qquad |D| \le C_D,
\]
and the conditional variances
\[
\sigma_0^2(D, X) := \bE_0\bigl[U^{2} \mid D, X\bigr], \qquad \tau_0^2(X) := \bE_0\bigl[V^{2} \mid X\bigr]
\]
satisfy
\[
0 < \underline\sigma^2 \le \sigma_0^2(D, X) \le \overline\sigma^2 < \infty, \qquad 0 < \underline\tau^2 \le \tau_0^2(X) \le \overline\tau^2 < \infty.
\]
In particular,
\[
J_0 := \bE_0\bigl[V^{2}\bigr] > 0.
\]
\end{assumption}

The nuisance functionals are
\[
\eta(P) := (\ell_P, m_P), \qquad \eta_0 = (\ell_0, m_0),
\]
and the nuisance parameter set is
\[
\cH := L_\infty(P_{0,X}) \times L_\infty(P_{0,X})
\]
endowed with the product supremum norm. The admissible perturbation space is given by \(\dot \cH = L_\infty(P_{0, X}) \times L_\infty(P_{0, X})\).

\subsubsection{Submodel Constructions}
\label{subsubsec:plm_submodels}

In the partially linear model, a generic linear tilt of the full density can violate the restriction that the conditional mean of \(Y\) is linear in \(D\). Therefore, we construct the needed regular submodels explicitly in the section below. We first state the following lemma, which will be used later.

\begin{lemma}[QMD implies an \(L_1\) first-order density expansion]
\label{lem:plm_qmd_l1}
Suppose \(t \mapsto p_t\) is QMD at \(0\) with score \(s \in L_2^0(P_0)\), so that
\[
\sqrt{p_t} = \sqrt{p_0}\left(1 + \frac{t}{2}s\right) + r_t, \qquad \|r_t\|_{L_2(\nu)} = o(|t|).
\]
Let \(\Lambda_t := p_t / p_0\). Then
\begin{equation}
\label{eq:plm_l1_expansion}
\|\Lambda_t - 1 - ts\|_{L_1(P_0)} = o(|t|).
\end{equation}
Hence, for every bounded measurable \(f\),
\begin{equation}
\label{eq:plm_bounded_diff}
\bE_{P_t}[f(Z)] = \bE_0[f(Z)] + t\,\bE_0[f(Z)s(Z)] + o(|t|).
\end{equation}
\end{lemma}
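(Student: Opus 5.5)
The plan is to square the QMD expansion and read off the density ratio directly. From the representation in the statement we have
\[
p_t = \Bigl(\sqrt{p_0}\bigl(1 + \tfrac{t}{2}s\bigr) + r_t\Bigr)^2
= p_0\bigl(1 + ts\bigr) + \tfrac{t^2}{4}s^2 p_0 + 2\sqrt{p_0}\bigl(1 + \tfrac{t}{2}s\bigr) r_t + r_t^2 .
\]
We will only work on $\{p_0 > 0\}$, where $\Lambda_t = p_t/p_0$ is defined. Note that $\|g\|_{L_1(P_0)} = \int_{\{p_0>0\}} |g|\, p_0 \, d\nu$. Hence
\[
\|\Lambda_t - 1 - ts\|_{L_1(P_0)} \le \int |p_t - p_0(1+ts)|\, d\nu .
\]
We bound the three remainder terms separately in $L_1(\nu)$:
\begin{itemize}
\item The first is $\tfrac{t^2}{4}\int s^2 p_0\, d\nu = \tfrac{t^2}{4}\bE_0[s^2] = O(t^2)$, using $s \in L_2(P_0)$.
\item The third is $\int r_t^2\, d\nu = \|r_t\|_{L_2(\nu)}^2 = o(t^2)$.
\item For the cross term we use Cauchy--Schwarz in $L_2(\nu)$:
\[
\int 2\sqrt{p_0}\,\bigl|1 + \tfrac{t}{2}s\bigr|\,|r_t|\, d\nu \le 2\,\bigl\|1 + \tfrac{t}{2}s\bigr\|_{L_2(P_0)}\,\|r_t\|_{L_2(\nu)} = O(1)\cdot o(|t|).
\]
\end{itemize}
Summing the three bounds gives \eqref{eq:plm_l1_expansion}.

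One subtlety needs care. $P_t$ may charge the set $\{p_0 = 0\}$, and there $\Lambda_t$ does not see that mass. On this set the expansion reduces to $p_t = r_t^2$, because $\sqrt{p_t} = r_t$ there. Therefore
\[
P_t(p_0 = 0) \le \|r_t\|_{L_2(\nu)}^2 = o(t^2).
\]
The above bound was written as an integral of $|p_t - p_0(1+ts)|$ over all of $\nu$, so it already covers this contribution. This step is the only point requiring attention; the rest is routine.

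For \eqref{eq:plm_bounded_diff}, fix a bounded measurable $f$ with $B := \sup|f|$. Write
\[
\bE_{P_t}[f] - \bE_0[f] - t\,\bE_0[fs] = \int f\,\bigl(p_t - p_0 - t s p_0\bigr)\, d\nu .
\]
Its absolute value is at most $B \int |p_t - p_0(1+ts)|\, d\nu$, which is $o(|t|)$ by the first part. This gives the second claim.
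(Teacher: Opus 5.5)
Your proposal is correct and follows essentially the paper's route: square the QMD expansion, bound the $t^2 s^2 p_0$ and $r_t^2$ terms directly, and handle the cross term with Cauchy--Schwarz. The one difference is that you bound $\|p_t - p_0(1+ts)\|_{L_1(\nu)}$ over the whole space, which absorbs the mass on $\{p_0 = 0\}$ automatically and makes the bounded-$f$ claim a one-liner. The paper instead works with $u_t = \mathbf{1}_A r_t/\sqrt{p_0}$ on $A = \{p_0 > 0\}$ and handles $\int_{A^c} f\,p_t\,d\nu$ separately.
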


\begin{proof}
Define $A := \{p_0 > 0\}$ and
\[
u_t := \mathbf{1}_A \frac{r_t}{\sqrt{p_0}}.
\]
Since $\int_A r_t^2\,d\nu \le \|r_t\|_{L_2(\nu)}^2 = o(t^2)$, we have $\|u_t\|_{L_2(P_0)} = o(|t|)$. On $A$, the QMD expansion reads
\[
\sqrt{\frac{p_t}{p_0}} = 1 + \frac{t}{2}s + u_t \qquad P_0\text{-a.s.},
\]
so squaring gives the $P_0$-a.s.\ identity
\[
\Lambda_t = \left(1 + \frac{t}{2}s + u_t\right)^2 = 1 + ts + 2u_t + ts\,u_t + u_t^2 + \frac{t^2}{4}s^2.
\]

By the triangle inequality,
\begin{align*}
\|\Lambda_t - 1 - ts\|_{L_1(P_0)} 
&\le 2\|u_t\|_{L_1(P_0)} + |t|\,\|s\,u_t\|_{L_1(P_0)} + \|u_t^2\|_{L_1(P_0)} + \frac{t^2}{4}\|s^2\|_{L_1(P_0)}\\
&\le 2\|u_t\|_{L_2(P_0)} + |t|\,\|s\|_{L_2(P_0)}\|u_t\|_{L_2(P_0)} + \|u_t\|_{L_2(P_0)}^2 + \frac{t^2}{4}\|s\|_{L_2(P_0)}^2 \\
&= o(|t|),
\end{align*}
which establishes~\eqref{eq:plm_l1_expansion}. Next, we decompose the expectation over $A$ and its complement,
\[
\bE_{P_t}[f] = \int_A f\,p_t\,d\nu + \int_{A^c} f\,p_t\,d\nu.
\]
The first integral equals $\bE_0[f\,\Lambda_t]$, so 
\[
\int_A f\,p_t\,d\nu - \bE_0[f] - t\,\bE_0[fs] = \bE_0\left[f(\Lambda_t - 1 - ts)\right],
\]
whose absolute value is at most $\|f\|_\infty\,\|\Lambda_t - 1 - ts\|_{L_1(P_0)} = o(|t|)$ by~\eqref{eq:plm_l1_expansion}. For the second integral, note that $p_0 = 0$ on $A^c$, so the QMD expansion reduces to $\sqrt{p_t} = r_t$ on $A^c$. Therefore,
\[
\biggl|\int_{A^c} f\,p_t\,d\nu\biggr| \le \|f\|_\infty \int_{A^c} r_t^2\,d\nu \le \|f\|_\infty\,\|r_t\|_{L_2(\nu)}^2 = o(t^2).
\]
Combining the two bounds gives~\eqref{eq:plm_bounded_diff}.
\end{proof}

\paragraph{Perturbing the Marginal Distribution of \(X\).}

\begin{proposition}[Marginal \(X\)-submodels]
\label{prop:plm_X_submodel}
Let \(s_X \in L_\infty(P_{0,X})\) with \(\bE_0[s_X(X)] = 0\). For \(|t| < \|s_X\|_\infty^{-1}/2\), define
\[
p_t(y, d, x) := p_0(y, d \mid x)\, p_{0,X}(x)(1 + ts_X(x)).
\]
Then,
\begin{enumerate}[label=(\roman*)]
\item \(p_t\) is a density and the corresponding distribution remains in the partially linear model.
\item The perturbation preserves the conditional distributions \(p_t(d \mid x) = p_0(d \mid x)\) and \(p_t(y \mid d, x) = p_0(y \mid d, x)\).
\item Consequently,
\[
\bE_{P_t}[D \mid X] = m_0(X), \qquad \bE_{P_t}[Y \mid D, X] = \ell_0(X) + \theta_0(D - m_0(X)).
\]
\item The path is QMD with score \(s_X(X)\).
\end{enumerate}
\end{proposition}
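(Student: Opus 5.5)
The proof reduces to the linear tilt of Lemma~\ref{lem:linear_tilt_submodel} applied to a score that depends only on \(x\). Write \(p_t(y,d,x) = p_0(y,d,x)\{1 + t s_X(x)\}\), using \(p_0(y,d \mid x)\,p_{0,X}(x) = p_0(y,d,x)\). With \(g(z) := s_X(x)\) we have \(g \in L_\infty(P_0)\), \(\|g\|_\infty \le \|s_X\|_\infty\), and \(\bE_0[g] = \bE_0[s_X(X)] = 0\). For \(|t| < \|s_X\|_\infty^{-1}/2\) the factor \(1 + t s_X \ge 1/2\) is positive.

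\textbf{Parts (i) and (iv).} Lemma~\ref{lem:linear_tilt_submodel} shows directly that \(p_t \ge 0\), that \(\int p_t\,d\nu = 1\), and that \(t \mapsto P_t\) is QMD at \(0\) with score \(g(Z) = s_X(X)\). This settles (iv) and the density part of (i).

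\textbf{Part (ii).} Integrate out \(y\) and then \(d\). This gives \(p_t(d,x) = p_0(d,x)\{1+ts_X(x)\}\) and \(p_t(x) = p_{0,X}(x)\{1 + t s_X(x)\}\). Because the tilt factor is strictly positive, it cancels in both ratios. Hence \(p_t(d\mid x) = p_0(d \mid x)\) on \(\{p_{0,X}>0\}\), and \(p_t(y \mid d,x) = p_0(y\mid d,x)\) on \(\{p_0(d,x)>0\}\). These sets coincide with the supports under \(P_t\), since \(P_t\) and \(P_0\) are mutually absolutely continuous.

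\textbf{Part (iii) and model membership.} By (ii), \(\bE_{P_t}[D\mid X] = m_0(X)\) and \(\bE_{P_t}[Y \mid D,X] = \bE_0[Y\mid D,X] = \ell_0(X) + \theta_0(D - m_0(X))\), holding \(P_t\)-a.s. This uses the PLM restriction at \(P_0\). Integrating the second identity over \(D\) given \(X\), which is unchanged, gives \(\ell_{P_t} = \ell_0\). So \(P_t\) satisfies the partially linear structure with \(\theta(P_t) = \theta_0\), \(m_{P_t} = m_0\) and \(\ell_{P_t} = \ell_0\).

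The remaining conditions are the bounds in Assumption~\ref{ass:plm_standing}, namely bounded \(Y,D\) and the conditional variance bounds. They transfer because they are statements about conditional laws, which are unchanged, and about supports, which are preserved by mutual absolute continuity. This completes (i).

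The one delicate point is the support and a.s.\ bookkeeping in dividing by \(p_t(x)\) and \(p_t(d,x)\). Positivity of \(1 + t s_X\) handles it, and that is exactly why the restriction on \(|t|\) is imposed.
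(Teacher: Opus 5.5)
Your proposal is correct and follows essentially the same route as the paper. You recognize $p_t = p_0(1+ts_X)$ as a linear tilt, so Lemma~\ref{lem:linear_tilt_submodel} gives normalization and QMD with score $s_X(X)$, and you cancel the positive $x$-only factor to show that the conditional laws of $D \mid X$ and $Y \mid D, X$ are unchanged. Your extra bookkeeping (mutual absolute continuity, $\ell_{P_t} = \ell_0$, and transfer of the standing bounds) just makes explicit the model-membership claim that the paper leaves implicit.
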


\begin{proof}
The density property immediately follows,
\[
\int p_t\,d\nu = \int p_0(y, d \mid x)\, p_{0,X}(x)(1 + ts_X(x))\,d\nu = 1 + t\,\bE_0[s_X(X)] = 1.
\]
We can similarly compute the conditional distributions. First, we write
\[
p_t(x) = p_{0,X}(x)(1 + ts_X(x)).
\]
Thus,
\[
p_t(y, d \mid x) = \frac{p_0(y, d \mid x)\, p_{0,X}(x)(1 + ts_X(x))}{p_{0,X}(x)(1 + ts_X(x))} = p_0(y, d \mid x).
\]
Taking marginals gives
\[
p_t(d \mid x) = p_0(d \mid x), \qquad p_t(y \mid d, x) = p_0(y \mid d, x).
\]
Therefore, it follows that
\[
\bE_{P_t}[D \mid X] = \bE_0[D \mid X] = m_0(X), \qquad \bE_{P_t}[Y \mid D, X] = \bE_0[Y \mid D, X] = \ell_0(X) + \theta_0(D - m_0(X)).
\]
Finally,
\(
p_t = p_0(1 + ts_X),
\)
so QMD with score \(s_X(X)\) follows from Lemma~\ref{lem:linear_tilt_submodel}.
\end{proof}

\paragraph{Perturbing the Conditional Distribution of \(D \mid X\).}

\begin{proposition}[Conditional \(D \mid X\)-submodels]
\label{prop:plm_D_submodel}
Let \(s_D \in L_\infty(P_{0,DX})\) satisfy
\[
\bE_0[s_D(D, X) \mid X] = 0.
\]
For \(|t| < \|s_D\|_\infty^{-1}/2\), define
\[
p_t(y, d, x) := q_0(y \mid d, x)\, p_{0,D \mid X}(d \mid x)(1 + ts_D(d, x))\, p_{0,X}(x).
\]
Then:
\begin{enumerate}[label=(\roman*)]
\item \(p_t\) is a density and the corresponding distribution remains in the partially linear model.
\item The perturbation preserves both the conditional distribution \(p_t(y \mid d, x) = p_0(y \mid d, x)\) and the marginal distribution \(p_t(x) = p_{0,X}(x)\) exactly.
\item Consequently,
\[
\bE_{P_t}[Y \mid D, X] = \ell_0(X) + \theta_0(D - m_0(X)).
\]
\item The path is QMD with score \(s_D(D, X)\).
\end{enumerate}
Moreover, if \(h_m \in L_\infty(P_{0,X})\) and
\begin{equation}
\label{eq:plm_D_shift_score}
s_D(d, x) := \frac{h_m(x)(d - m_0(x))}{\tau_0^2(x)},
\end{equation}
    then the perturbed conditional mean of \(D\) is given by
\begin{equation}
\label{eq:plm_mt_exact}
m_t(x) := \bE_{P_t}[D \mid X = x] = m_0(x) + th_m(x).
\end{equation}
\end{proposition}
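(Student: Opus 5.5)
The proof mirrors Proposition~\ref{prop:plm_X_submodel}, except that now the factor $D\mid X$ is tilted instead of the $X$-marginal.

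\textbf{Step 1: $p_t$ is a density.} For $|t| < \|s_D\|_\infty^{-1}/2$ we have $1+ts_D \ge 1/2 > 0$, so $p_t \ge 0$. Integrating out $y$ uses $\int q_0(y\mid d,x)\,d\nu_Y = 1$. Integrating out $d$ then gives
\[
\int p_{0,D\mid X}(d\mid x)\bigl(1+ts_D(d,x)\bigr)\,d\nu_D = 1 + t\,\bE_0[s_D(D,X)\mid X=x] = 1 .
\]
This shows at once that the $X$-marginal is $p_{0,X}$ and that the total mass is one.

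\textbf{Step 2: conditionals (part (ii)).} The $(D,X)$-marginal is $p_{0,D\mid X}(1+ts_D)\,p_{0,X}$. Dividing $p_t$ by it gives $p_t(y\mid d,x) = q_0(y\mid d,x)$, and we already have $p_t(x) = p_{0,X}(x)$.

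\textbf{Step 3: conditional means and model membership (parts (i) and (iii)).} Since $q$ is unchanged, $\bE_{P_t}[Y\mid D,X] = \ell_0(X) + \theta_0(D-m_0(X))$. To check that $P_t$ satisfies the PLM restriction in its own coordinates, set $m_t := \bE_{P_t}[D\mid X]$ and rewrite
\[
\bE_{P_t}[Y\mid D,X] = \ell_t(X) + \theta_0\bigl(D - m_t(X)\bigr), \qquad \ell_t := \ell_0 + \theta_0(m_t - m_0).
\]
Taking $\bE_{P_t}[\,\cdot\mid X]$ of both sides shows $\ell_t(X) = \bE_{P_t}[Y\mid X]$. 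So $P_t$ lies in the model with $\theta(P_t) = \theta_0$. Boundedness of $Y$ and $D$ is inherited, since the support does not grow. The variance bounds in Assumption~\ref{ass:plm_standing} refer only to $P_0$ and need no check.

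\textbf{Step 4: QMD (part (iv)).} Write $p_t = p_0\bigl(1+ts_D(d,x)\bigr)$ with $s_D$ bounded. Then $\bE_0[s_D] = \bE_0\bigl[\bE_0[s_D\mid X]\bigr] = 0$, and Lemma~\ref{lem:linear_tilt_submodel} gives QMD with score $s_D$.

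\textbf{Step 5: the shift formula \eqref{eq:plm_mt_exact}.} First check that the specific $s_D$ in \eqref{eq:plm_D_shift_score} is admissible. It is bounded because $h_m$ is bounded, $|d - m_0| \le 2C_D$, and $\tau_0^2 \ge \underline\tau^2$. It satisfies $\bE_0[s_D\mid X] = h_m\,\bE_0[V\mid X]/\tau_0^2 = 0$. Next, using that the $X$-marginal is unchanged,
\[
m_t(x) = \bE_0\bigl[D(1+ts_D)\mid X=x\bigr] = m_0(x) + t\,\frac{h_m(x)}{\tau_0^2(x)}\,\bE_0\bigl[(V+m_0(X))\,V\mid X=x\bigr].
\]
The conditional expectation on the right equals $\tau_0^2(x)$ because $\bE_0[V\mid X] = 0$. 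Hence $m_t = m_0 + t h_m$ exactly.

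The only step that is not routine bookkeeping is Step 3, the verification of model membership. The key observation there is that holding $q$ fixed keeps the regression linear in $D$ with slope $\theta_0$. The intercept only needs to be re-expressed through $\ell_t$ and $m_t$.
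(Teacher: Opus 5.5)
Your proposal is correct and follows the paper's proof essentially step for step. Both arguments get the total mass and the $X$-marginal from $\bE_0[s_D \mid X]=0$, preserve $q_0$ by cancelling the $(d,x)$-factor, obtain QMD from the linear-tilt lemma, and derive the shift formula from $\bE_0[DV\mid X]=\tau_0^2(X)$. Your Step 3, which rewrites the regression as $\ell_t(X)+\theta_0(D-m_t(X))$, and your explicit boundedness check for the specific $s_D$ make precise two points the paper leaves implicit.
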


\begin{proof}
Because \(\bE_0[s_D \mid X] = 0\),
\begin{align*}
\int p_t\,d\nu &= \int p_{0,X}(x)\left\{\int p_{0,D \mid X}(d \mid x)(1 + ts_D(d, x))\,d\nu_D\right\}\left\{\int q_0(y \mid d, x)\,d\nu_Y\right\}\,d\nu_X \\
&= \int p_{0,X}(x)(1 + t\,\bE_0[s_D \mid X = x])\,d\nu_X = 1.
\end{align*}
Thus \(p_t\) is a density. Next,
\[
p_t(x) = p_{0,X}(x)\int p_{0,D \mid X}(d \mid x)(1 + ts_D(d, x))\,d\nu_D = p_{0,X}(x).
\]
Also,
\[
p_t(d \mid x) = p_{0,D \mid X}(d \mid x)(1 + ts_D(d, x)).
\]
Since the perturbation factor depends only on \((d, x)\),
\[
p_t(y \mid d, x) = \frac{q_0(y \mid d, x)\, p_{0,D \mid X}(d \mid x)(1 + ts_D(d, x))\, p_{0,X}(x)}{p_{0,D \mid X}(d \mid x)(1 + ts_D(d, x))\, p_{0,X}(x)} = q_0(y \mid d, x).
\]
Hence,
\[
\bE_{P_t}[Y \mid D, X] = \bE_0[Y \mid D, X] = \ell_0(X) + \theta_0(D - m_0(X)).
\]
Finally,
\(
p_t = p_0(1 + ts_D),
\)
so QMD with score \(s_D(D, X)\) follows from Lemma~\ref{lem:linear_tilt_submodel}. For the score \eqref{eq:plm_D_shift_score}, note first that
\[
\bE_0[s_D(D, X) \mid X] = \frac{h_m(X)}{\tau_0^2(X)}\,\bE_0[V \mid X] = 0,
\]
and the construction is valid. Under \(P_t\), we can write
\begin{align*}
m_t(X) 
&= \bE_{P_t}[D \mid X] \\
&= \bE_0[D(1 + ts_D(D, X)) \mid X] \\
&= m_0(X) + t\,\bE_0[D\,s_D(D, X) \mid X] \\
&= m_0(X) + t\,\frac{h_m(X)}{\tau_0^2(X)}\,\bE_0[DV \mid X].
\end{align*}
Since \(D = m_0(X) + V\) and \(\bE_0[V \mid X] = 0\),
\[
\bE_0[DV \mid X] = \bE_0\bigl[V^{2} \mid X\bigr] = \tau_0^2(X),
\]
which yields~\eqref{eq:plm_mt_exact}.
\end{proof}

\paragraph{Perturbing the Conditional Distribution of \(Y \mid D, X\).}

\begin{proposition}[Conditional \(Y \mid D, X\)-submodels]
\label{prop:plm_Y_submodel}
Let \(w \in L_\infty(P_0)\) satisfy
\[
\bE_0[w(Y, D, X) \mid D, X] = 0, \qquad \bE_0[Uw(Y, D, X) \mid D, X] = 0.
\]
For each \(|t| < \varepsilon\), let \(\mu_t(d,x)\) be a bounded measurable function such that:
\begin{enumerate}[label=(\alph*)]
\item \(\mu_0(d, x) = \ell_0(x) + \theta_0(d - m_0(x))\).
\item Each \(\mu_t\) has partially linear form,
\begin{equation}
\label{eq:plm_mu_prescribed}
\mu_t(d, x) = \ell_t(x) + \theta_t(d - m_t(x))
\end{equation}
for some measurable \(\ell_t, m_t, \theta_t\).
\item There exists a bounded measurable \(r(d, x)\) such that
\begin{equation}
\label{eq:plm_mu_derivative}
\|\mu_t - \mu_0 - tr\|_\infty = o(|t|).
\end{equation}
\end{enumerate}
Then there exists \(\varepsilon > 0\) such that for all \(|t| < \varepsilon\), one can choose a bounded measurable function \(\lambda_t(d, x)\) with \(\lambda_0 \equiv 0\) and define
\begin{equation}
\label{eq:plm_Y_submodel_density}
\begin{aligned}
q_t(y \mid d, x) &:= q_0(y \mid d, x)\,\frac{\exp\{\lambda_t(d, x)y + tw(y, d, x)\}}{M_t(d, x)}, \\
M_t(d, x) &:= \bE_0[\exp\{\lambda_t(d, x)Y + tw(Y, D, X)\} \mid D = d,\, X = x],
\end{aligned}
\end{equation}
so that:
\begin{enumerate}[label=(\roman*)]
\item \(q_t(\cdot \mid d, x)\) is a conditional density for every \((d, x)\).
\item The conditional mean equals the prescribed one exactly,
\[
\bE_{P_t}[Y \mid D = d,\, X = x] = \mu_t(d, x) \qquad \text{for all } |t| < \varepsilon.
\]
\item The full path \(p_t(y, d, x) := q_t(y \mid d, x)\, p_{0,D,X}(d, x)\) remains in the partially linear model for every \(|t| < \varepsilon\),
\item The path is QMD with score
\[
s_Y(Y, D, X) = \frac{U\,r(D, X)}{\sigma_0^2(D, X)} + w(Y, D, X).
\]
\end{enumerate}
\end{proposition}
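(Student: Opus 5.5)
The plan is to construct $\lambda_t$ pointwise in $(d,x)$ by the implicit function theorem applied to the conditional-mean equation, and then to verify QMD through a second-order expansion of the exponential tilt. Everything is uniformly bounded, because $|Y|\le C_Y$, $w$, $\mu_t$ and $r$ are bounded, and $\sigma_0^2\ge\underline\sigma^2$. Fix $(d,x)$ and define
\[
F(\lambda,t;d,x) := \frac{\bE_0\bigl[Y e^{\lambda Y + t w}\mid D=d,X=x\bigr]}{\bE_0\bigl[e^{\lambda Y + tw}\mid D=d,X=x\bigr]} - \mu_t(d,x).
\]
At $(\lambda,t)=(0,0)$ we have $F=\mu_0-\mu_0=0$, using property (a) and the partially linear form of $\bE_0[Y\mid D,X]$. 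Moreover, $\partial_\lambda F$ is the conditional variance of $Y$ under the tilted law. At $(0,0)$ this equals $\sigma_0^2(d,x)\ge\underline\sigma^2>0$, and it is uniformly continuous in $(\lambda,t)$ on a fixed box with bounds independent of $(d,x)$. A quantitative (uniform) implicit function theorem therefore gives $\varepsilon>0$, a constant $K$, and a unique $\lambda_t(d,x)$ with $|\lambda_t|\le K|t|$ and $F(\lambda_t,t)=0$. Measurability in $(d,x)$ follows because $\lambda_t$ is the unique zero of a jointly measurable function that is strictly increasing in $\lambda$; one can write it as a limit of bisection iterates.

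With $\lambda_t$ in hand, (i) is immediate from the normalization by $M_t$, and (ii) is just $F=0$. For (iii), the $(D,X)$ marginal is untouched and the conditional mean is $\mu_t=\ell_t+\theta_t(d-m_t)$ by (b). The remaining step is to check that $\ell_t$ and $m_t$ coincide with $\bE_{P_t}[Y\mid X]$ and $\bE_{P_t}[D\mid X]$. Since the $D\mid X$ law is unchanged, $\bE_{P_t}[D\mid X]=m_0$. I would therefore read (b) with $m_t=m_0$, in which case $\bE_{P_t}[Y\mid X]=\ell_t$ automatically. I expect the paper's intended use to supply $\mu_t$ of exactly this form, and I will note this explicitly.

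For the derivative of $\lambda_t$, differentiate $F(\lambda_t,t)=0$ at $t=0$:
\[
\sigma_0^2\,\dot\lambda + \operatorname{Cov}_0(Y,w\mid D,X) - r = 0.
\]
The covariance term is zero because $\bE_0[Uw\mid D,X]=0$ and $\bE_0[w\mid D,X]=0$. Hence $\dot\lambda=r/\sigma_0^2$. I need this as a uniform statement, $\|\lambda_t - t r/\sigma_0^2\|_\infty=o(|t|)$. It comes from a uniform Taylor expansion of $F$ (bounded third derivatives) together with (c).

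For (iv), write $\log(q_t/q_0)=\lambda_t y+tw-\log M_t$, and note $\log M_t = \lambda_t\mu_0 + O(t^2)$ uniformly. This gives
\[
\frac{p_t}{p_0} = 1 + t\Bigl(\frac{r}{\sigma_0^2}U + w\Bigr) + R_t, \qquad \|R_t\|_\infty=o(|t|).
\]
Then $\sqrt{p_t/p_0}=1+\tfrac t2 s_Y+o(|t|)$ uniformly, which yields the QMD integral bound exactly as in the proof of Lemma~\ref{lem:linear_tilt_submodel}. The main obstacle is making the implicit-function step uniform in $(d,x)$, so that both $\varepsilon$ and the $o(|t|)$ remainders do not depend on the conditioning point. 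I would handle this with the explicit bounds $|Y|\le C_Y$, $\|w\|_\infty<\infty$ and $\sigma_0^2\ge\underline\sigma^2$, which control all $\lambda$- and $t$-derivatives of $F$ on $|\lambda|,|t|\le 1$ by constants depending only on $C_Y$ and $\|w\|_\infty$. I would also choose $\varepsilon$ small enough that $\partial_\lambda F\ge\underline\sigma^2/2$ on the relevant box.
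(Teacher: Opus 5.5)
Your construction of $\lambda_t$, the uniform expansion $\|\lambda_t - t r/\sigma_0^2\|_\infty = o(|t|)$, and the QMD step via $\log(q_t/q_0) = \lambda_t U + tw + O(t^2)$ are essentially the paper's argument.

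Your ``quantitative IFT'' should be read as the paper's bracketing argument. The $\lambda$-derivative of the tilted mean is at least $\underline\sigma^2/2$ on a fixed box, and $F(0,t) = O(|t|)$ uniformly in $(d,x)$, so each $t$ has a unique root. A classical implicit function theorem is not directly applicable, because (c) controls $t \mapsto \mu_t$ only at $t=0$. For the same reason, $\dot\lambda$ should come from Taylor-expanding the tilted mean alone, not from differentiating $F(\lambda_t,t)=0$.

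The real problem is part (iii). You claim that membership in the partially linear model requires the particular $\ell_t, m_t$ in (b) to equal $\bE_{P_t}[Y\mid X]$ and $\bE_{P_t}[D\mid X]$, and you respond by reading (b) with $m_t = m_0$. That proves a strictly weaker proposition than the one stated. Your expectation about the intended use is also wrong. Proposition~\ref{prop:plm_bounded_submodels} supplies the mean path with $m_t = m_0 + tk(X)$, which differs from $m_0$ whenever $k \neq 0$. The $\eta$-coordinate submodel in the reverse direction likewise uses $m_t = m_0 + th_m$. Your restricted version covers neither.

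The missing observation is that no such matching is needed.
\begin{itemize}
\item Because $\theta_t$ is a scalar, (b) says exactly that $\mu_t(d,x) = \kappa_t(x) + \theta_t d$ with $\kappa_t := \ell_t - \theta_t m_t$.
\item Suppose $\bE_{P_t}[Y\mid D,X] = \mu_t$. Conditioning on $X$ gives $\ell_{P_t}(X) = \kappa_t(X) + \theta_t\, m_{P_t}(X)$, where $\ell_{P_t}, m_{P_t}$ are the actual conditional means under $P_t$.
\item Hence $\bE_{P_t}[Y\mid D,X] = \ell_{P_t}(X) + \theta_t\,(D - m_{P_t}(X))$. For the path in (iii), $m_{P_t} = m_0$ and $\ell_{P_t} = \ell_t + \theta_t(m_0 - m_t)$.
\end{itemize}
So $P_t$ lies in the model with $\theta(P_t) = \theta_t$, whatever $\ell_t, m_t$ appear in (b). The same one-line argument works for any $(D,X)$ marginal, which is what the later product construction needs. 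It is also what the paper's sentence ``since every $\mu_t$ has the partially linear form'' implicitly relies on. With this in place of your restricted reading of (b), the rest of your proof goes through.
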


\begin{proof}
Write \(\mu_0(d, x) = \ell_0(x) + \theta_0(d - m_0(x))\) and \(U = Y - \mu_0(D, X)\). For fixed \((d, x)\) define
\[
\widetilde{M}(\lambda, t; d, x) := \bE_0\bigl[e^{\lambda U + tw} \mid D = d,\, X = x\bigr],
\]
so that \[M_t(d, x) = e^{\lambda_t(d,x)\mu_0(d,x)}\,\widetilde{M}(\lambda_t(d,x), t; d, x).\] Also define
\[
F(\lambda, t; d, x) := \frac{\bE_0\bigl[Ye^{\lambda Y + tw} \mid D = d,\, X = x\bigr]}{\bE_0\bigl[e^{\lambda Y + tw} \mid D = d,\, X = x\bigr]} = \mu_0(d, x) + G(\lambda, t; d, x),
\]
where
\[
G(\lambda, t; d, x) := \frac{\bE_0\bigl[Ue^{\lambda U + tw} \mid D = d,\, X = x\bigr]}{\bE_0\bigl[e^{\lambda U + tw} \mid D = d,\, X = x\bigr]}.
\]
Since \(U\) and \(w\) are bounded, all these conditional moment functions are finite and jointly continuous in \((\lambda, t)\), uniformly over \((d, x)\) on compact neighborhoods of \((0, 0)\).

For fixed \((d, x)\) and \(t\), the map \(\lambda \mapsto F(\lambda, t; d, x)\) is differentiable with
\[
\partial_\lambda F(\lambda, t; d, x) = \mathrm{Var}_{\lambda, t}(Y \mid d, x),
\]
where \(\mathrm{Var}_{\lambda, t}\) denotes variance under the tilted conditional distribution proportional to \(e^{\lambda Y + tw}\,q_0(y \mid d, x)\). At \((\lambda, t) = (0, 0)\),
\[
\partial_\lambda F(0, 0; d, x) = \mathrm{Var}_0(Y \mid d, x) = \sigma_0^2(d, x) \ge \underline\sigma^2.
\]
By continuity and the lower variance bound, there exists \(\rho > 0\) such that whenever \(|\lambda|, |t| \le \rho\),
\begin{equation}
\label{eq:plm_slope_bound}
\partial_\lambda F(\lambda, t; d, x) \ge \frac{\underline\sigma^2}{2} \qquad \text{for all } (d, x).
\end{equation}
Hence \(F(\cdot, t; d, x)\) is strictly increasing on \([-\rho, \rho]\). Next,
\[
F(0, t; d, x) - \mu_0(d, x) = \frac{\bE_0\bigl[Ue^{tw} \mid d, x\bigr]}{\bE_0\bigl[e^{tw} \mid d, x\bigr]}.
\]
Because \(\bE_0[U \mid d, x] = 0\), \(\bE_0[Uw \mid d, x] = 0\), and \(U, w\) are bounded, a second-order Taylor expansion gives a constant \(C_1 < \infty\) such that, uniformly in \((d, x)\),
\begin{equation}
\label{eq:plm_Fzero_bound}
|F(0, t; d, x) - \mu_0(d, x)| \le C_1 t^2.
\end{equation}
Also, by~\eqref{eq:plm_mu_derivative}, there exists \(C_\mu < \infty\) such that
\[
\|\mu_t - \mu_0\|_\infty \le C_\mu |t|
\]
for all sufficiently small \(t\). Since \(\rho\) is now fixed, we choose \(\varepsilon \le \rho\) small enough such that for \(|t| \le \varepsilon\),
\[
C_1 t^2 + C_\mu |t| \le \frac{\rho\,\underline\sigma^2}{4}.
\]
Then, by the mean value theorem and~\eqref{eq:plm_slope_bound},
\[
F(\rho, t; d, x) \ge F(0, t; d, x) + \frac{\rho\,\underline\sigma^2}{2} \ge \mu_0(d, x) - C_1 t^2 + \frac{\rho\,\underline\sigma^2}{2} \ge \mu_t(d, x),
\]
and similarly \(F(-\rho, t; d, x) \le \mu_t(d, x)\). By continuity and strict monotonicity, for each \((d, x)\) there exists a unique \(\lambda_t(d, x) \in [-\rho, \rho]\) satisfying
\begin{equation}
\label{eq:plm_lambda_def}
F(\lambda_t(d, x), t; d, x) = \mu_t(d, x).
\end{equation}
The map \((d, x) \mapsto \lambda_t(d, x)\) is measurable because
\[
\{(d, x) : \lambda_t(d, x) > c\} = \{(d, x) : F(c, t; d, x) < \mu_t(d, x)\}
\]
for every real \(c\), and the right-hand side is measurable.

By construction, \(q_t(\cdot \mid d, x)\) integrates to one. Also,
\[
\bE_{P_t}[Y \mid D = d,\, X = x] = F(\lambda_t(d, x), t; d, x) = \mu_t(d, x)
\]
by~\eqref{eq:plm_lambda_def}. Since every \(\mu_t\) has the partially linear form~\eqref{eq:plm_mu_prescribed}, the path
\[
p_t(y, d, x) := q_t(y \mid d, x)\, p_{0,D,X}(d, x)
\]
lies inside the partially linear model for all sufficiently small \(t\).

Next, we apply the mean value theorem to the function \(\lambda \mapsto F(\lambda, t; d, x)\) between \(0\) and \(\lambda_t(d, x)\):
\[
F(\lambda_t, t; d, x) - F(0, t; d, x) = \partial_\lambda F(\widetilde\lambda_t(d, x), t; d, x)\,\lambda_t(d, x)
\]
for some \(\widetilde\lambda_t(d, x)\) between \(0\) and \(\lambda_t(d, x)\). Therefore,
\[
\mu_t(d, x) - \mu_0(d, x) = \partial_\lambda F(\widetilde\lambda_t(d, x), t; d, x)\,\lambda_t(d, x) + (F(0, t; d, x) - \mu_0(d, x)).
\]
By~\eqref{eq:plm_Fzero_bound},~\eqref{eq:plm_mu_derivative}, and continuity of \(\partial_\lambda F\),
\[
\partial_\lambda F(\widetilde\lambda_t(d, x), t; d, x) \to \sigma_0^2(d, x) \qquad \text{uniformly},
\]
so using~\eqref{eq:plm_mu_derivative} again gives the uniform expansion
\begin{equation}
\label{eq:plm_lambda_expansion}
\left\|\lambda_t - t\,\frac{r}{\sigma_0^2}\right\|_\infty = o(|t|).
\end{equation}

From~\eqref{eq:plm_Y_submodel_density},
\[
\log \frac{q_t(y \mid d, x)}{q_0(y \mid d, x)} = \lambda_t(d, x)\,y + tw(y, d, x) - \log M_t(d, x).
\]
Since \(Y = \mu_0(d, x) + U\) and \(M_t = e^{\lambda_t \mu_0}\,\widetilde{M}(\lambda_t, t)\),
\[
\log \frac{q_t}{q_0} = \lambda_t U + tw - \log \widetilde{M}(\lambda_t, t).
\]
We know that \(\bE_0[U \mid d, x] = 0\) and \(\bE_0[w \mid d, x] = 0\), while \(\lambda_t = O(|t|)\) and \(w\) is bounded. Hence
\[
\widetilde{M}(\lambda_t, t) = 1 + O\bigl(t^2\bigr) \qquad \text{uniformly in } (d, x),
\]
so
\[\log \frac{q_t}{q_0} = \lambda_t U + tw + O\bigl(t^2\bigr) = t\left(\frac{Ur}{\sigma_0^2} + w\right) + o(|t|)\]
uniformly, where the last step uses~\eqref{eq:plm_lambda_expansion}. Finally, define
\[
c(Y, D, X) := \frac{Ur(D, X)}{\sigma_0^2(D, X)} + w(Y, D, X).
\]
Since \(c\) is bounded, the Taylor bound \(\bigl|e^{u/2} - 1 - u/2\bigr| \le Cu^2\) for small \(u\) implies that
\[
\sqrt{\frac{q_t}{q_0}} = 1 + \frac{t}{2}c + o(|t|)
\]
uniformly. Therefore,
\[
\left\|\sqrt{q_t} - \sqrt{q_0} - \frac{t}{2}c\sqrt{q_0}\right\|_{L_2(\nu_Y \otimes P_{0,DX})} = o(|t|),
\]
which gives QMD with score \(c\).
\end{proof}

\paragraph{Assembling Bounded-score Submodels.}

\begin{proposition}[Bounded score class]
\label{prop:plm_bounded_submodels}
Let bounded measurable functions \(s_X, s_D, a, w\) and a scalar \(b \in \RR\) satisfy
\[
\bE_0[s_X(X)] = 0, \qquad \bE_0[s_D(D, X) \mid X] = 0, \qquad \bE_0[w \mid D, X] = 0, \qquad \bE_0[Uw \mid D, X] = 0.
\]
Define
\begin{equation}
\label{eq:plm_k_def}
k(X) := \bE_0[Vs_D(D, X) \mid X].
\end{equation}
Let
\[
\theta_t := \theta_0 + tb, \qquad m_t(X) := m_0(X) + tk(X), \qquad \ell_t(X) := \ell_0(X) + t(a(X) + \theta_0 k(X)),
\]
and set
\begin{equation}
\label{eq:plm_mu_combined}
\mu_t(D, X) := \ell_t(X) + \theta_t(D - m_t(X)).
\end{equation}
Then there exists \(\varepsilon > 0\) such that for all \(|t| < \varepsilon\), one can define a distribution \(P_t\) in the partially linear model with density
\[
p_t(y, d, x) = p_{0,X}(x)(1 + ts_X(x)) \cdot p_{0,D \mid X}(d \mid x)(1 + ts_D(d, x)) \cdot q_t(y \mid d, x),
\]
where \(q_t\) is the conditional density supplied by Proposition~\ref{prop:plm_Y_submodel} with mean path~\eqref{eq:plm_mu_combined}. The resulting path is QMD with score
\begin{equation}
\label{eq:plm_bounded_score}
s(Z) = s_X(X) + s_D(D, X) + \frac{U(a(X) + bV)}{\sigma_0^2(D, X)} + w(Y, D, X).
\end{equation}
\end{proposition}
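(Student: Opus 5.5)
We build $P_t$ by composing the three constructions from Propositions~\ref{prop:plm_X_submodel}, \ref{prop:plm_D_submodel}, and~\ref{prop:plm_Y_submodel}, one factor of the factorization $p = p_X\,p_{D\mid X}\,q$ at a time. The $X$-factor $p_{0,X}(1+ts_X)$ and the $D\mid X$-factor $p_{0,D\mid X}(1+ts_D)$ are valid densities for $|t|<\min(\|s_X\|_\infty^{-1},\|s_D\|_\infty^{-1})/2$, since $\bE_0[s_X]=0$ and $\bE_0[s_D\mid X]=0$. Under this tilt,
\[
\bE_{P_t}[D\mid X] = m_0(X) + t\,\bE_0[Ds_D\mid X] = m_0(X) + t\,\bE_0[Vs_D\mid X] = m_t(X),
\]
exactly; the middle equality uses $\bE_0[s_D\mid X]=0$. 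This is why $k$ is defined as in~\eqref{eq:plm_k_def}.

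\textbf{Step 1: model membership.} We apply Proposition~\ref{prop:plm_Y_submodel} to the mean path $\mu_t$ of~\eqref{eq:plm_mu_combined}. Hypotheses (a) and (b) hold by construction. Expanding,
\[
\mu_t - \mu_0 = t\bigl(a + \theta_0 k + b(D-m_0) - \theta_0 k\bigr) - t^2 bk = t\bigl(a(X)+bV\bigr) - t^2 bk,
\]
so (c) holds with $r = a + bV$, which is bounded by boundedness of $D$ and $m_0$. The only subtlety is that Proposition~\ref{prop:plm_Y_submodel} builds $q_t$ relative to the baseline $(D,X)$-law. Its construction is conditional on $(d,x)$ and never uses $p_{0,D,X}$, so the same $q_t$ can be attached to the tilted $(D,X)$-marginal $p_{0,X}(1+ts_X)\,p_{0,D\mid X}(1+ts_D)$. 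Then $P_t$ has conditional mean $\bE_{P_t}[Y\mid D,X]=\ell_t+\theta_t(D-m_t)$, with $m_t = \bE_{P_t}[D\mid X]$ exactly as shown above.

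It remains to confirm that $\ell_t = \bE_{P_t}[Y\mid X]$, so that $P_t$ genuinely lies in the partially linear model with $\ell_{P_t}=\ell_t$ and $m_{P_t}=m_t$. Taking conditional expectation given $X$ under $P_t$,
\[
\bE_{P_t}[Y\mid X] = \ell_t + \theta_t\,\bE_{P_t}[D-m_t\mid X] = \ell_t,
\]
since $\bE_{P_t}[D-m_t\mid X]=0$. So the parametrization is self-consistent, and $\theta(P_t)=\theta_t$.

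\textbf{Step 2: QMD and the score.} Write
\[
\sqrt{p_t/p_0} = \sqrt{1+ts_X}\,\sqrt{1+ts_D}\,\sqrt{q_t/q_0}.
\]
By the proof of Lemma~\ref{lem:linear_tilt_submodel}, each of the first two factors equals $1+\tfrac t2 s + O(t^2)$ uniformly, with $s$ the corresponding score. By the uniform expansion established inside the proof of Proposition~\ref{prop:plm_Y_submodel}, the third factor equals $1+\tfrac t2 c+o(|t|)$ uniformly, with
\[
c = \frac{Ur}{\sigma_0^2} + w = \frac{U(a+bV)}{\sigma_0^2} + w.
\]
Multiplying the three bounded expansions gives $\sqrt{p_t/p_0} = 1+\tfrac t2 s+o(|t|)$ uniformly, where $s$ is~\eqref{eq:plm_bounded_score}. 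Multiplying by $\sqrt{p_0}$ and integrating then yields the QMD condition of Definition~\ref{def:regular_qmd_submodel}. Finally, $s$ has mean zero because each summand has zero conditional mean at its own level: $\bE_0[s_X]=0$, $\bE_0[s_D\mid X]=0$, $\bE_0[U\mid D,X]=0$, and $\bE_0[w\mid D,X]=0$.

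\textbf{Main obstacle.} The delicate point is Step 1's reuse of Proposition~\ref{prop:plm_Y_submodel} under a perturbed $(D,X)$-law. It is also necessary to check that its uniform-in-$(d,x)$ expansions hold $P_0$-a.s. These expansions involve only $\bE_0[\cdot\mid D=d,X=x]$. Since $1+ts_X$ and $1+ts_D$ are bounded away from zero, $P_t$ and $P_0$ have equivalent $(D,X)$-marginals, so $P_0$-a.s. statements transfer to $P_t$. I would verify this equivalence explicitly and otherwise cite the earlier proofs.
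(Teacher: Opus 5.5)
Your proposal is correct and follows essentially the same route as the paper. Like the paper, you compose the three factor perturbations, verify hypothesis (c) of Proposition~\ref{prop:plm_Y_submodel} through the expansion $\mu_t - \mu_0 = t(a+bV) - t^2 bk$ with $r = a+bV$, and multiply the three uniform square-root expansions to obtain QMD with score~\eqref{eq:plm_bounded_score}. You also check that $\bE_{P_t}[Y \mid X] = \ell_t$ and that $q_t$ may be attached to the tilted $(D,X)$-marginal; the paper leaves both implicit, and they are worth keeping.
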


\begin{proof}
First note that, by definition of \(k\),
\[
\bE_0\left[k(X)^2\right] \le \bE_0\left[V^2 s_D(D, X)^2\right] \le \|s_D\|_\infty^2\,\bE_0\bigl[V^2\bigr] < \infty,
\]
and because \(V\) and \(s_D\) are bounded, so is \(k\). The choice~\eqref{eq:plm_k_def} also implies that the \(D \mid X\) perturbation changes the conditional mean of \(D\) to \(m_t(X) = m_0(X) + tk(X)\), where the \(X\)-tilt perturbs only the marginal of \(X\).

Next, the prescribed conditional mean path is partially linear for every \(t\) by construction. Moreover,
\begin{align*}
\mu_t(D, X) &= \ell_0(X) + t(a(X) + \theta_0 k(X)) + (\theta_0 + tb)(V - tk(X)) \\
&= \mu_0(D, X) + t(a(X) + bV) - t^2 bk(X).
\end{align*}
Hence
\[
\|\mu_t - \mu_0 - t(a + bV)\|_\infty \le |t|^2 |b|\,\|k\|_\infty = o(|t|),
\]
so Proposition~\ref{prop:plm_Y_submodel} applies with
\[
r(D, X) = a(X) + bV.
\]
It follows that the \(Y \mid D, X\) perturbation is QMD with score
\[
s_Y(Y, D, X) = \frac{U(a(X) + bV)}{\sigma_0^2(D, X)} + w(Y, D, X)
\]
and that the full conditional mean remains  equal to~\eqref{eq:plm_mu_combined}.

It remains to verify QMD for the full product path. Let
\[
r_{X,t} := \sqrt{1 + ts_X} - 1 - \frac{t}{2}s_X, \qquad r_{D,t} := \sqrt{1 + ts_D} - 1 - \frac{t}{2}s_D,
\]
and let \(r_{Y,t}\) be defined by
\[
\sqrt{q_t/q_0} = 1 + \frac{t}{2}s_Y + r_{Y,t}.
\]
By Lemma~\ref{lem:linear_tilt_submodel},
\[
\|r_{X,t}\|_\infty = O\bigl(t^2\bigr), \qquad \|r_{D,t}\|_\infty = O\bigl(t^2\bigr),
\]
and by Proposition~\ref{prop:plm_Y_submodel},
\[
\|r_{Y,t}\|_\infty = o(|t|).
\]
Therefore,
\[
\sqrt{p_t / p_0} = \left(1 + \frac{t}{2}s_X + r_{X,t}\right)\left(1 + \frac{t}{2}s_D + r_{D,t}\right)\left(1 + \frac{t}{2}s_Y + r_{Y,t}\right).
\]
Expanding the product,
\[
\sqrt{p_t / p_0} = 1 + \frac{t}{2}(s_X + s_D + s_Y) + \rho_t,
\]
where every term in \(\rho_t\) is either one of the remainders \(r_{X,t}, r_{D,t}, r_{Y,t}\) or a product of quantities each of order \(O(|t|)\) or \(o(|t|)\). Since all scores are bounded, it follows that \(\|\rho_t\|_\infty = o(|t|)\), hence also \(\|\rho_t\|_{L_2(P_0)} = o(|t|)\), which gives QMD with score~\eqref{eq:plm_bounded_score}.
\end{proof}

Proposition~\ref{prop:plm_bounded_submodels} constructs, for each valid choice of bounded ingredients, an explicit QMD submodel inside the partially linear model with the corresponding score~\eqref{eq:plm_bounded_score}. The following two lemmas show that the class of scores achievable in this way is dense in the full set of scores permitted by the partially linear constraint, which will allow us to identify the tangent space in Proposition~\ref{prop:plm_tangent_space}.

\begin{lemma}[Density of the bounded score class]
\label{lem:plm_bounded_density}
Let 
\begin{equation}
\label{eq:plm_A_def}
\cA_{\mathrm{plm}} := \{a(X) + bV : a \in L_2(P_{0,X}),\; b \in \RR\} \subset L_2(P_{0,DX}).
\end{equation}
Define
\[
\cT_* := \{s \in L_2^0(P_0) : \bE_0[Us \mid D, X] \in \cA_{\mathrm{plm}}\}.
\]
Let \(\cT_b\) denote the subset of \(\cT_*\) consisting of all scores of the form
\[
s = s_X + s_D + \frac{Uc}{\sigma_0^2} + w, \qquad c = a + bV \in \cA_{\mathrm{plm}},
\]
where \(s_X, s_D, a, w\) are bounded, \(b \in \RR\), and
\[
\bE_0[s_X(X)] = 0, \qquad \bE_0[s_D(D, X) \mid X] = 0, \qquad \bE_0[w \mid D, X] = 0, \qquad \bE_0[Uw \mid D, X] = 0.
\]
Then \(\cT_b\) is dense in \(\cT_*\) in \(L_2(P_0)\).
\end{lemma}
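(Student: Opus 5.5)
Every $s\in\cT_*$ splits into three mutually orthogonal pieces, one per factor of $p = p_X\,p_{D\mid X}\,q$, and each piece will be approximated separately by bounded functions satisfying the defining constraints. Concretely, write
\[
s = s_X + s_D + s_Y,\qquad s_X := \bE_0[s\mid X],\quad s_D := \bE_0[s\mid D,X]-\bE_0[s\mid X],\quad s_Y := s - \bE_0[s\mid D,X].
\]
Then $\bE_0[s_X]=0$, $\bE_0[s_D\mid X]=0$ and $\bE_0[s_Y\mid D,X]=0$. Also $c := \bE_0[Us\mid D,X] = \bE_0[U s_Y\mid D,X]$, because $\bE_0[U\mid D,X]=0$ kills the $(D,X)$-measurable parts. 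By hypothesis $c = a+bV\in\cA_{\mathrm{plm}}$.

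The $Y$-piece is decomposed further as $s_Y = Uc/\sigma_0^2 + w$ with $w := s_Y - Uc/\sigma_0^2$. One checks that $\bE_0[w\mid D,X]=0$, and that
\[
\bE_0[Uw\mid D,X] = c - c\,\bE_0[U^2\mid D,X]/\sigma_0^2 = 0 .
\]

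The first two pieces are routine. Truncate $s_X$ and recentre it. For $s_D$, truncate and subtract the conditional mean given $X$; truncation converges in $L_2$ by dominated convergence, and conditional expectation is an $L_2$ contraction, so the recentred functions still converge. For the term $Uc/\sigma_0^2$, truncate $a$ to a bounded $a_n$ and keep $b$ fixed. Since $U$ is bounded (as $|Y|\le C_Y$, $|D|\le C_D$, and $\ell_0,m_0$ are bounded) and $\sigma_0^2\ge\underline\sigma^2$, the map $c\mapsto Uc/\sigma_0^2$ is $L_2$-bounded. Hence $U(a_n+bV)/\sigma_0^2\to Uc/\sigma_0^2$, and $a_n+bV$ is bounded because $V$ is.

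The main obstacle is approximating $w$ by bounded $w_n$ while preserving both conditional constraints, since plain truncation breaks them. Take a truncation $\tilde w_n\to w$ and project it onto the constraint set fibrewise:
\[
w_n := \tilde w_n - \bE_0[\tilde w_n\mid D,X] - \frac{U\,\bE_0[U\tilde w_n\mid D,X]}{\sigma_0^2(D,X)}.
\]
This $w_n$ is bounded, because $U$ is bounded and $\sigma_0^2\ge\underline\sigma^2$. Using $\bE_0[U\mid D,X]=0$ gives $\bE_0[w_n\mid D,X]=0$ and $\bE_0[Uw_n\mid D,X]=0$. The map $f\mapsto f-\bE_0[f\mid D,X]-U\bE_0[Uf\mid D,X]/\sigma_0^2$ is $L_2$-bounded and fixes $w$, so $w_n\to w$.

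Summing the four approximants gives elements of $\cT_b$ converging to $s$ in $L_2(P_0)$. For completeness, note that $\cT_b\subset\cT_*$: for $s'\in\cT_b$ we have $\bE_0[Us'\mid D,X] = c$, by the same computation as above.
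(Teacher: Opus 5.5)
Your proposal is correct and follows essentially the same route as the paper: the same decomposition $s = s_X + s_D + Uc/\sigma_0^2 + w$ with $c = \mathbb{E}_0[Us \mid D,X]$, truncation with recentring for the $X$- and $D \mid X$-parts, truncation of $a$ with $b$ held fixed, and the same fibrewise projection of a truncated $w$ onto the two conditional constraints. The only cosmetic difference is how you get $w_n \to w$: you observe that the projection is an $L_2$-bounded map that fixes $w$, whereas the paper verifies the same bound explicitly via conditional Jensen and conditional Cauchy--Schwarz.
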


\begin{proof}
Take any \(s \in \cT_*\). We can write
\begin{equation}
\label{eq:plm_general_decomp}
s = s_X + s_D + \frac{Uc}{\sigma_0^2} + w,
\end{equation}
where
\[
s_X := \bE_0[s \mid X], \qquad s_D := \bE_0[s \mid D, X] - \bE_0[s \mid X], \qquad c := \bE_0[Us \mid D, X],
\]
and
\[
w := s - s_X - s_D - \frac{Uc}{\sigma_0^2}.
\]
Then \(c \in \cA_{\mathrm{plm}}\) by assumption, \(\bE_0[s_X] = 0\), \(\bE_0[s_D \mid X] = 0\), and
\[
\bE_0[w \mid D, X] = 0, \qquad \bE_0[Uw \mid D, X] = 0.
\]
Since \(c \in \cA_{\mathrm{plm}}\), there exist \(a \in L_2(P_{0,X})\) and \(b \in \RR\) such that \(c = a + bV\).

We now approximate each term in~\eqref{eq:plm_general_decomp} by bounded terms preserving the defining orthogonality constraints. For the \(X\)-part, define
\[
s_{X,n} := \mathrm{trunc}_n(s_X) - \bE_0[\mathrm{trunc}_n(s_X)],
\]
where \(\mathrm{trunc}_n(u) := \max\{-n, \min\{u, n\}\}\). Then \(s_{X,n}\) is bounded, mean zero, and \(s_{X,n} \to s_X\) in \(L_2\). For the \(D \mid X\)-part, define
\[
\widetilde{s}_{D,n} := \mathrm{trunc}_n(s_D), \qquad s_{D,n} := \widetilde{s}_{D,n} - \bE_0\bigl[\widetilde{s}_{D,n} \mid X\bigr].
\]
Then \(s_{D,n}\) is bounded, satisfies \(\bE_0[s_{D,n} \mid X] = 0\), and, since conditional expectation is an \(L_2\) contraction,
\[
\|s_{D,n} - s_D\|_2 \le \bigl\|\widetilde{s}_{D,n} - s_D\bigr\|_2 + \left\|\bE_0\bigl[\widetilde{s}_{D,n} - s_D \mid X\bigr]\right\|_2 \le 2\bigl\|\widetilde{s}_{D,n} - s_D\bigr\|_2 \to 0.
\]

For the \(c\)-part, define \(a_n := \mathrm{trunc}_n(a)\) and \(c_n := a_n + bV\). Since \(V\) is bounded, \(c_n\) is bounded and \(c_n \to c\) in \(L_2(P_{0,DX})\). Hence
\[
\left\|\frac{Uc_n}{\sigma_0^2} - \frac{Uc}{\sigma_0^2}\right\|_2 \le \frac{\|U\|_\infty}{\underline\sigma^2}\,\|c_n - c\|_2 \to 0.
\]

For the \(w\)-part, define \(\widetilde{w}_n := \mathrm{trunc}_n(w)\) and 
\[
\alpha_n := \bE_0\bigl[\widetilde{w}_n \mid D, X\bigr], \qquad \beta_n := \frac{\bE_0\bigl[U\widetilde{w}_n \mid D, X\bigr]}{\sigma_0^2(D, X)}, \qquad w_n := \widetilde{w}_n - \alpha_n - U\beta_n.
\]
Then \(w_n\) is bounded and satisfies
\[
\bE_0[w_n \mid D, X] = 0, \qquad \bE_0[Uw_n \mid D, X] = 0.
\]
Moreover, because \(\bE_0[w \mid D, X] = 0\) and \(\bE_0[Uw \mid D, X] = 0\),
\[
\alpha_n = \bE_0\bigl[\widetilde{w}_n - w \mid D, X\bigr], \qquad \beta_n = \frac{\bE_0\left[U\bigl(\widetilde{w}_n - w\bigr) \mid D, X\right]}{\sigma_0^2}.
\]
Thus, by conditional Jensen and conditional Cauchy--Schwarz,
\[
\|\alpha_n\|_2 \le \bigl\|\widetilde{w}_n - w\bigr\|_2 \to 0,
\]
and
\begin{align*}
\|U\beta_n\|_2^2 &= \bE_0\!\left[\sigma_0^2(D, X)\,\beta_n(D, X)^2\right] \\
&= \bE_0\!\left[\frac{\bE_0\left[U\bigl(\widetilde{w}_n - w\bigr) \mid D, X\right]^2}{\sigma_0^2(D, X)}\right] \\
&\le \bE_0\left[\bE_0\left[\bigl(\widetilde{w}_n - w\bigr)^2 \mid D, X\right]\right] \\
&= \bigl\|\widetilde{w}_n - w\bigr\|_2^2 \\
&\to 0.
\end{align*}
Hence
\[
\|w_n - w\|_2 \le \bigl\|\widetilde{w}_n - w\bigr\|_2 + \|\alpha_n\|_2 + \|U\beta_n\|_2 \to 0.
\]
Finally, we define
\[
s_n := s_{X,n} + s_{D,n} + \frac{Uc_n}{\sigma_0^2} + w_n.
\]
Since each \(s_n\) lies in \(\cT_b\), and the triangle inequality gives \(\|s_n - s\|_2 \to 0\), it follows that \(\cT_b\) is dense in \(\cT_*\).
\end{proof}

\begin{lemma}[Bounded elements of \(\cA_{\mathrm{plm}}^\perp\) are dense in \(\cA_{\mathrm{plm}}^\perp\)]
\label{lem:plm_Aperp_density}
The bounded elements of
\[
\cA_{\mathrm{plm}}^\perp = \{c(D, X) \in L_2(P_{0,DX}) : \bE_0[c \mid X] = 0,\; \bE_0[cV] = 0\}
\]
are dense in \(\cA_{\mathrm{plm}}^\perp\) in \(L_2(P_{0,DX})\).
\end{lemma}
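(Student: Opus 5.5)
We first confirm the characterization of \(\cA_{\mathrm{plm}}^\perp\) displayed in the statement. Since \(\cA_{\mathrm{plm}}\) is the closed subspace spanned by \(L_2(P_{0,X})\) and the single function \(V\), the condition \(c \perp \cA_{\mathrm{plm}}\) means \(\bE_0[c\,a(X)] = 0\) for all \(a \in L_2(P_{0,X})\) and \(\bE_0[cV] = 0\). The first condition is equivalent to \(\bE_0[c \mid X] = 0\). Because \(\bE_0[V \mid X]=0\) and \(V\) is bounded, this set is a closed subspace, and the orthogonal projection onto it is explicit. The plan is to approximate a given \(c \in \cA_{\mathrm{plm}}^\perp\) by truncations and then project the truncations back onto \(\cA_{\mathrm{plm}}^\perp\), checking that the projection preserves boundedness.

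Concretely, let \(\widetilde c_n := \mathrm{trunc}_n(c)\), which converges to \(c\) in \(L_2(P_{0,DX})\). Set
\[
c_n := \widetilde c_n - \bE_0[\widetilde c_n \mid X] - \kappa_n V, \qquad \kappa_n := \frac{\bE_0\bigl[(\widetilde c_n - \bE_0[\widetilde c_n\mid X])V\bigr]}{J_0} = \frac{\bE_0[\widetilde c_n V]}{J_0}.
\]
The last equality uses \(\bE_0[V\mid X]=0\). Since \(\bE_0[V\mid X]=0\), we get \(\bE_0[c_n \mid X] = 0\), and the choice of \(\kappa_n\) gives \(\bE_0[c_n V] = 0\). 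Hence \(c_n \in \cA_{\mathrm{plm}}^\perp\).

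For boundedness, \(|\bE_0[\widetilde c_n\mid X]| \le n\). The constant \(\kappa_n\) is finite because \(J_0 > 0\) (Assumption~\ref{ass:plm_standing}), and \(|V| \le 2C_D\) by the bound \(|D|\le C_D\). So \(c_n\) is bounded by \(2n + |\kappa_n|\,2C_D\).

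For convergence, use \(\bE_0[c\mid X]=0\) and \(\bE_0[cV]=0\) to write
\[
c_n - c = (\widetilde c_n - c) - \bE_0[\widetilde c_n - c \mid X] - \frac{\bE_0[(\widetilde c_n - c)V]}{J_0}\,V.
\]
The conditional-expectation term is controlled by the \(L_2\) contraction property. The last term is controlled by Cauchy--Schwarz: \(|\bE_0[(\widetilde c_n - c)V]| \le \|\widetilde c_n - c\|_2\, J_0^{1/2}\). Together these give
\[
\|c_n - c\|_2 \le \bigl(2 + 1\bigr)\|\widetilde c_n - c\|_2 \to 0 .
\]
An equivalent way to see this is that \(c_n\) is the orthogonal projection of \(\widetilde c_n\) onto \(\cA_{\mathrm{plm}}^\perp\), and projections are contractions, so \(\|c_n - c\|_2 \le \|\widetilde c_n - c\|_2\).

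The only step needing care is boundedness of the projected element. It hinges on \(V\) being bounded and \(J_0>0\), both available from Assumption~\ref{ass:plm_standing}, so no real obstacle is expected. This mirrors the \(w\)-part of the proof of Lemma~\ref{lem:plm_bounded_density}, with the conditional \(U\)-projection replaced here by the one-dimensional \(V\)-projection.
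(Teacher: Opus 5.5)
Your proposal is correct and takes essentially the same approach as the paper: truncate \(c\), then subtract \(\bE_0[\widetilde c_n \mid X]\) and the \(V\)-component with coefficient \(\bE_0[\widetilde c_n V]/J_0\), and bound the error by the same three terms. Your remark that \(c_n\) is the orthogonal projection of \(\widetilde c_n\) onto \(\cA_{\mathrm{plm}}^\perp\) (valid since \(V \perp L_2(P_{0,X})\)) is a small sharpening, improving the paper's constant from \(3\) to \(1\).
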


\begin{proof}
Take any \(c \in \cA_{\mathrm{plm}}^\perp\). Let \(\widetilde{c}_n := \mathrm{trunc}_n(c)\) and define
\[
d_n := \widetilde{c}_n - \bE_0\bigl[\widetilde{c}_n \mid X\bigr] - \frac{\bE_0\bigl[\widetilde{c}_n V\bigr]}{J_0}\,V.
\]
Then each \(d_n\) is bounded. Moreover, we can verify that
\[
\bE_0[d_n \mid X] = 0, \qquad \bE_0[d_n V] = 0,
\]
so \(d_n \in \cA_{\mathrm{plm}}^\perp\). Since \(\widetilde{c}_n \to c\) in \(L_2\) and conditional expectation is an \(L_2\) contraction, we can write
\begin{align*}
\|d_n - c\|_2 
&\le \bigl\|\widetilde{c}_n - c\bigr\|_2 + \left\|\bE_0\bigl[\widetilde{c}_n - c \mid X\bigr]\right\|_2 + \frac{\left|\bE_0\left[\bigl(\widetilde{c}_n - c\bigr)V\right]\right|}{J_0}\,\|V\|_2 \\
&\le 2\bigl\|\widetilde{c}_n - c\bigr\|_2 + \frac{\|V\|_2}{J_0}\,\bigl\|\widetilde{c}_n - c\bigr\|_2\,\|V\|_2 \to 0,
\end{align*}
where the first inequality holds since \(\bE[c \mid X] = \bE_0[cV] = 0\). Thus bounded elements are dense in \(\cA_{\mathrm{plm}}^\perp\).
\end{proof}

\subsubsection{Hellinger-Lipschitz Bound for the Slope}
\label{subsubsec:plm_hellinger}

In this section, we show that the slope in the partially linear model is Hellinger Lipschitz (Assumption~\ref{assumption:hellinger}). Recall that for any distribution \(P\) in the partially linear model, we have
\[
\theta_P := \theta(P), \qquad \ell_P(X) := \bE_P[Y \mid X], \qquad m_P(X) := \bE_P[D \mid X], \qquad V_P := D - m_P(X).
\]
Multiplying \(Y = \ell_P(X) + \theta_P V_P + U_P\) by \(V_P\) on both sides and taking expectations under \(P\) gives
\[
\bE_P[V_P Y] = \bE_P[V_P \ell_P(X)] + \theta_P\,\bE_P\bigl[V_P^2\bigr] + \bE_P[V_P U_P] = \theta_P\,\bE_P\bigl[V_P^2\bigr],
\]
where the final equality holds since \(\bE_P[V_P U_P] = \bE_P[V_P\,\bE_P[U_P \mid D, X]] = 0\) and \(\bE_P[V_P \ell_P(X)] = \bE_P[\ell_P(X) \bE_P[V_P \mid X]] = 0\). 
It follows that 
\[
\theta_P = \frac{N(P)}{J(P)},
\]
where 
\begin{equation}
\label{eq:plm_N_ratio}
N(P) := \bE_P[V_P Y] = \bE_P[DY] - \bE_{P_X}[m_P(X)\ell_P(X)],
\end{equation}
and
\begin{equation}
\label{eq:plm_J_ratio}
J(P) := \bE_P\bigl[V_P^2\bigr] = \bE_P\bigl[D^2\bigr] - \bE_{P_X}\left[m_P(X)^2\right].
\end{equation}

\begin{proposition}[Hellinger-Lipschitz bound of \(\theta\)]
\label{prop:plm_Hellinger_Lipschitz}
Let \(P_0\) satisfy Assumption~\ref{ass:plm_standing}. Define
\begin{equation}
\label{eq:plm_lipschitz_constants}
\delta := \frac{J_0}{16\sqrt{2}\,C_D^2}, \qquad c_\theta := 16\sqrt{2}\,\frac{C_D C_Y}{J_0} + 64\sqrt{2}\,\frac{C_D^3 C_Y}{J_0^2}.
\end{equation}
Then, for any \(P_1, P_2\) in the partially linear model with
\[
H(P_i, P_0) \le \delta \qquad (i = 1, 2),
\]
we have
\begin{equation}
\label{eq:plm_Hellinger_Lipschitz}
|\theta(P_1) - \theta(P_2)| \le c_\theta\, H(P_1, P_2).
\end{equation}
In particular, Assumption~\ref{assumption:hellinger} holds.
\end{proposition}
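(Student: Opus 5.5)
We use the ratio representation \(\theta_P = N(P)/J(P)\) from \eqref{eq:plm_N_ratio}--\eqref{eq:plm_J_ratio} and reduce everything to total variation, then to Hellinger via \(\mathrm{TV}(P_1,P_2)\le\sqrt{2}H(P_1,P_2)\). The core estimate is
\[
\theta(P_1)-\theta(P_2) = \frac{N(P_1)-N(P_2)}{J(P_1)} + N(P_2)\,\frac{J(P_2)-J(P_1)}{J(P_1)J(P_2)}.
\]
Accordingly, the work splits into three parts: Lipschitz bounds for \(N\) and \(J\) in TV, a uniform upper bound \(|N(P)|\le C\), and a uniform lower bound \(J(P)\ge J_0/2\) on the Hellinger ball of radius \(\delta\).

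\textbf{Bounds on \(N\) and \(J\).} For \(J\), the identity \(J(P)=\mathrm{Var}_P(D)-\mathrm{Var}_{P_X}(m_P(X))\) is awkward, so we instead write \(J(P)=\bE_P[D^2]-\bE_{P_X}[m_P^2]\). The first term changes by at most \(C_D^2\int|p_1-p_2|\,d\nu = 2C_D^2\,\mathrm{TV}\). For the second, we split
\[
\int m_{P_1}^2\,dP_{1,X}-\int m_{P_2}^2\,dP_{2,X} = \int (m_{P_1}^2-m_{P_2}^2)\,dP_{1,X} + \int m_{P_2}^2\,d(P_{1,X}-P_{2,X}).
\]
The second piece is at most \(2C_D^2\,\mathrm{TV}\), using \(\mathrm{TV}(P_{1,X},P_{2,X})\le \mathrm{TV}(P_1,P_2)\) exactly as in the ATE computation. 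For the first piece, \(|m_{P_1}^2-m_{P_2}^2|\le 2C_D|m_{P_1}-m_{P_2}|\). The same conditional-mean trick as in the ATE Hellinger argument gives
\[
|m_{P_1}(x)-m_{P_2}(x)|\,p_{1,X}(x)\le 2C_D\int|p_1-p_2|(y,d,x)\,d\nu_Y\,d\nu_D.
\]
Unlike the ATE case, there is no positivity denominator here, because \(X\) itself is the conditioning variable. Integrating yields \(O(C_D^2)\,\mathrm{TV}\). The same argument with \(\ell_P\) (bounded by \(C_Y\)) and the product \(m_P\ell_P\) gives \(|N(P_1)-N(P_2)|\lesssim C_DC_Y\,\mathrm{TV}\). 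The trivial bound \(|N(P)|\le \bE_P[|V_P||Y|]\le 2C_DC_Y\) then handles the numerator factor.

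\textbf{Lower bound on \(J\) (main obstacle).} Applying the Lipschitz bound for \(J\) with \(P_2=P_0\) gives
\[
J(P)\ge J_0 - c_J\sqrt{2}\,H(P,P_0)
\]
for some constant \(c_J\) of order \(C_D^2\), obtained by tallying the constants from the previous step. The stated \(\delta = J_0/(16\sqrt{2}C_D^2)\) is designed so that this yields \(J(P)\ge J_0/2\) whenever \(H(P,P_0)\le\delta\). The care needed is in this constant tracking: we must make sure the accumulated constant \(c_J\) is at most \(8C_D^2\), so that \(c_J\sqrt{2}\,\delta\le J_0/2\).

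\textbf{Assembling.} Substituting \(J(P_i)\ge J_0/2\), the bound \(|N(P_2)|\le 2C_DC_Y\), and the TV-Lipschitz bounds into the core decomposition, then converting TV to Hellinger, produces the two terms of \(c_\theta\): a \(C_DC_Y/J_0\) term from the \(N\)-difference and a \(C_D^3C_Y/J_0^2\) term from the \(J\)-difference. The only nonroutine part of the proof is matching the exact numerical constants. If the tallied constants come out slightly larger than stated, we would revisit the TV bounds, for example by using \(\int|p_1-p_2|\,d\nu = 2\,\mathrm{TV}\) consistently, rather than change the structure of the argument.
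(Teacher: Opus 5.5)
Your architecture matches the paper's: the ratio $\theta_P=N(P)/J(P)$, TV-Lipschitz bounds for $N$ and $J$, the uniform bound $|N(P)|\le 2C_DC_Y$, the lower bound $J\ge J_0/2$ on the $\delta$-ball, and the same two-term decomposition. The gap is in the step you single out as the crux. With the bounds you actually propose, the accumulated constant for $J$ is $12C_D^2$, not at most $8C_D^2$:
\begin{itemize}
\item $\bE_P[D^2]$ contributes $2C_D^2\,\mathrm{TV}$.
\item Your conditional-mean bound integrates to $\int|m_{P_1}-m_{P_2}|\,dP_{1,X}\le 2C_D\int|p_1-p_2|\,d\nu=4C_D\,\mathrm{TV}$. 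Multiplying by $2C_D$ gives $8C_D^2\,\mathrm{TV}$.
\item The marginal piece adds $2C_D^2\,\mathrm{TV}$.
\end{itemize}
With the stated $\delta$ this only yields $J(P)\ge J_0-12\sqrt2\,C_D^2\delta=J_0/4$. The identical tally for $N$ gives $|N_1-N_2|\le 12C_DC_Y\,\mathrm{TV}$. The assembled constant is then $\sqrt2\,(48C_DC_Y/J_0+384C_D^3C_Y/J_0^2)$, well above $c_\theta$.

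Your proposed repair, using $\int|p_1-p_2|\,d\nu=2\,\mathrm{TV}$ consistently, is already what produces these numbers, so it cannot close the gap. The slack comes from two places. First, the conditional-mean trick forces the factor $|d-m_{P_2}(x)|\le 2C_D$. Second, you bound $(m_1^2-m_2^2)p_{1,X}$ and $m_2^2(p_{1,X}-p_{2,X})$ separately, which discards a cancellation between them. Your route does still give the qualitative ``in particular'' claim (Hellinger--Lipschitz with some constant), just not the displayed $\delta$ and $c_\theta$.

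The missing idea is to work with quantities that are linear in the density. With $m_i:=m_{P_i}$ and $\ell_i:=\ell_{P_i}$, set
\[
g_i(x):=m_i(x)p_{i,X}(x)=\int d\,p_i(y,d,x)\,d\nu_Y\,d\nu_D,\qquad h_i(x):=\ell_i(x)p_{i,X}(x).
\]
Since $|D|\le C_D$ and $|Y|\le C_Y$, you get directly
\[
\|g_1-g_2\|_{L_1(\nu_X)}\le 2C_D\,\mathrm{TV},\qquad \|h_1-h_2\|_{L_1(\nu_X)}\le 2C_Y\,\mathrm{TV}.
\]
Then use the exact identities
\[
m_1^2p_{1,X}-m_2^2p_{2,X}=(m_1+m_2)(g_1-g_2)-m_1m_2(p_{1,X}-p_{2,X}),
\]
\[
m_1\ell_1p_{1,X}-m_2\ell_2p_{2,X}=\ell_1(g_1-g_2)+m_2(h_1-h_2)-m_2\ell_1(p_{1,X}-p_{2,X}).
\]
These give $6C_D^2\,\mathrm{TV}$ and $6C_DC_Y\,\mathrm{TV}$ respectively. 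Hence $|J_1-J_2|\le 8C_D^2\,\mathrm{TV}$ and $|N_1-N_2|\le 8C_DC_Y\,\mathrm{TV}$. Now $J\ge J_0/2$ holds exactly at the stated $\delta$. Finally, $\sqrt2\bigl(\tfrac{2}{J_0}\cdot 8C_DC_Y+\tfrac{4}{J_0^2}\cdot 2C_DC_Y\cdot 8C_D^2\bigr)$ is precisely $c_\theta$.
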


\begin{proof}
Throughout, write \(p_i\) for the density of \(P_i\), \(p_{i,X}\) for the marginal density of \(X\), and
\[
m_i := m_{P_i}, \qquad \ell_i := \ell_{P_i}, \qquad \theta_i := \theta(P_i), \qquad J_i := J(P_i), \qquad N_i := N(P_i).
\]
To start, we define the \(X\)-indexed linear functionals
\[
g_i(x) := m_i(x)\,p_{i,X}(x) = \int d\, p_i(y, d, x)\,d\nu_Y\,d\nu_D,
\]
\[
h_i(x) := \ell_i(x)\,p_{i,X}(x) = \int y\, p_i(y, d, x)\,d\nu_Y\,d\nu_D.
\]
Since \(|D| \le C_D\) and \(|Y| \le C_Y\) under every distribution,
\begin{equation}
\label{eq:plm_g_h_bounds}
\|g_1 - g_2\|_{L_1(\nu_X)} \le 2C_D\,\mathrm{TV}(P_1, P_2), \qquad \|h_1 - h_2\|_{L_1(\nu_X)} \le 2C_Y\,\mathrm{TV}(P_1, P_2).
\end{equation}
Also,
\begin{equation}
\label{eq:plm_marginal_TV}
\|p_{1,X} - p_{2,X}\|_{L_1(\nu_X)} \le 2\,\mathrm{TV}(P_1, P_2).
\end{equation}
From~\eqref{eq:plm_J_ratio},
\[
|J_1 - J_2| \le \left|\bE_{P_1}\bigl[D^2\bigr] - \bE_{P_2}\bigl[D^2\bigr]\right| + \left|\bE_{P_{1,X}}\bigl[m_1^2\bigr] - \bE_{P_{2,X}}\bigl[m_2^2\bigr]\right|.
\]
The first term is bounded by \(2C_D^2\,\mathrm{TV}(P_1, P_2)\). For the second term, we use the identity
\[
m_1^2 p_{1,X} - m_2^2 p_{2,X} = (m_1 + m_2)(g_1 - g_2) - m_1 m_2(p_{1,X} - p_{2,X}).
\]
Since \(|m_i| \le C_D\),~\eqref{eq:plm_g_h_bounds} and~\eqref{eq:plm_marginal_TV} yield
\[
\bigl\|m_1^2 p_{1,X} - m_2^2 p_{2,X}\bigr\|_{L_1(\nu_X)} \le 2C_D\|g_1 - g_2\|_1 + C_D^2\|p_{1,X} - p_{2,X}\|_1 \le 6C_D^2\,\mathrm{TV}(P_1, P_2).
\]
Therefore,
\begin{equation}
\label{eq:plm_J_TV}
|J_1 - J_2| \le 8C_D^2\,\mathrm{TV}(P_1, P_2).
\end{equation}
Taking \(P_2 = P_0\) and using \(\mathrm{TV}(P_1, P_2) \le \sqrt{2}H(P_1, P_2)\),
\[
|J(P) - J_0| \le 8\sqrt{2}\,C_D^2\,H(P, P_0).
\]
Hence, if \(H(P, P_0) \le \delta\) with \(\delta\) from~\eqref{eq:plm_lipschitz_constants}, then
\begin{equation}
\label{eq:plm_J_lower}
J(P) \ge J_0 - 8\sqrt{2}\,C_D^2\,\delta = \frac{J_0}{2}.
\end{equation}
Thus every distribution in the Hellinger ball of radius \(\delta\) has denominator bounded away from zero.

By~\eqref{eq:plm_N_ratio},
\[
|N_1 - N_2| \le \left|\bE_{P_1}[DY] - \bE_{P_2}[DY]\right| + \left|\bE_{P_{1,X}}[m_1 \ell_1] - \bE_{P_{2,X}}[m_2 \ell_2]\right|.
\]
The first term is bounded by \(2C_D C_Y\,\mathrm{TV}(P_1, P_2)\). For the second term, we use the identity
\[
m_1 \ell_1 p_{1,X} - m_2 \ell_2 p_{2,X} = \ell_1(g_1 - g_2) + m_2(h_1 - h_2) - m_2 \ell_1(p_{1,X} - p_{2,X}).
\]
Since \(|\ell_i| \le C_Y\) and \(|m_i| \le C_D\), equations~\eqref{eq:plm_g_h_bounds} and~\eqref{eq:plm_marginal_TV} imply
\begin{align*}
\|m_1 \ell_1 p_{1,X} - m_2 \ell_2 p_{2,X}\|_{L_1(\nu_X)} &\le C_Y\|g_1 - g_2\|_1 + C_D\|h_1 - h_2\|_1 + C_D C_Y\|p_{1,X} - p_{2,X}\|_1 \\
&\le 6C_D C_Y\,\mathrm{TV}(P_1, P_2).
\end{align*}
Therefore,
\begin{equation}
\label{eq:plm_N_TV}
|N_1 - N_2| \le 8C_D C_Y\,\mathrm{TV}(P_1, P_2).
\end{equation}
Also, since \(|D| \le C_D\), \(|Y| \le C_Y\), \(|m_P| \le C_D\), and \(|\ell_P| \le C_Y\),
\begin{equation}
\label{eq:plm_N_uniform}
|N(P)| \le 2C_D C_Y \qquad \text{for every distribution } P.
\end{equation}

Finally, suppose now that \(H(P_i, P_0) \le \delta\) for \(i = 1, 2\). Then~\eqref{eq:plm_J_lower} gives \(J_1 \ge J_0/2\) and \(J_2 \ge J_0/2\). Therefore,
\begin{align*}
|\theta_1 - \theta_2| &= \left|\frac{N_1}{J_1} - \frac{N_2}{J_2}\right| \\
&\le \frac{|N_1 - N_2|}{J_1} + \frac{|N_2|\,|J_1 - J_2|}{J_1 J_2} \\
&\le \frac{2}{J_0}\,|N_1 - N_2| + \frac{4}{J_0^2}\,|N_2|\,|J_1 - J_2| \\
&\le \left(\frac{16C_D C_Y}{J_0} + \frac{64C_D^3 C_Y}{J_0^2}\right)\mathrm{TV}(P_1, P_2),
\end{align*}
where the last step uses~\eqref{eq:plm_N_TV},~\eqref{eq:plm_J_TV}, and~\eqref{eq:plm_N_uniform}. Again using \(\mathrm{TV}(P_1, P_2) \le \sqrt{2}H(P_1, P_2)\), 
\[
|\theta(P_1) - \theta(P_2)| \le \left(\frac{16C_D C_Y}{J_0} + \frac{64C_D^3 C_Y}{J_0^2}\right) \sqrt{2}\,H(P_1, P_2) = c_\theta\, H(P_1, P_2),
\]
which gives~\eqref{eq:plm_Hellinger_Lipschitz}.
\end{proof}

\subsubsection{Geometry of the Tangent Space}
\label{subsubsec:plm_geometry}

We now characterize the tangent space, the nuisance tangent space, their orthogonal complements, and the full class of influence functions. 

\begin{proposition}[Full tangent space]
\label{prop:plm_tangent_space}
Under Assumption~\ref{ass:plm_standing}, the tangent space at \(P_0\) is
\begin{equation}
\label{eq:plm_tangent_char}
\cT = \left\{s \in L_2^0(P_0) : \bE_0[Us \mid D, X] = a(X) + bV \text{ for some } a \in L_2(P_{0,X}),\; b \in \RR\right\}.
\end{equation}
Moreover, if \(s \in \cT\) and \(\bE_0[Us \mid D, X] = a(X) + bV\), then the pathwise derivative of \(\theta\) is \(\dot\theta_s = b.\)
\end{proposition}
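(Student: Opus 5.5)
The plan is to prove the two inclusions $\cT \supseteq \cT_*$ and $\cT \subseteq \cT_*$, where $\cT_*$ is the set in Lemma~\ref{lem:plm_bounded_density}, and then read off the pathwise derivative.

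\emph{Step 1: $\cT_* \subseteq \cT$.} By Proposition~\ref{prop:plm_bounded_submodels}, every element of the bounded class $\cT_b$ is the score of a regular submodel inside the partially linear model, so $\cT_b \subseteq \cS$. Lemma~\ref{lem:plm_bounded_density} says $\cT_b$ is dense in $\cT_*$. Hence $\cT_* \subseteq \overline{\mathrm{span}(\cS)} = \cT$.

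\emph{Step 2: $\cT \subseteq \cT_*$.} First observe that $\cT_*$ is a closed linear subspace of $L_2^0(P_0)$. Linearity is clear. For closedness, note that $s \mapsto \bE_0[Us \mid D,X]$ is $L_2$-continuous because $U$ is bounded, and that $\cA_{\mathrm{plm}}$ is closed in $L_2(P_{0,DX})$: it is the sum of the closed subspace $L_2(P_{0,X})$ and the line spanned by $V$, and $V \perp L_2(P_{0,X})$ since $\bE_0[V\mid X]=0$. It then suffices to show that every score $s \in \cS$ lies in $\cT_*$. Equivalently, we show $\bE_0[Us\,c]=0$ for every $c \in \cA_{\mathrm{plm}}^\perp$, and by Lemma~\ref{lem:plm_Aperp_density} it is enough to take $c$ bounded.

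Fix such a $c$ and a regular submodel $P_t$ with score $s$. For each $t$, $\bE_{P_t}[Y\mid D,X] = \ell_t(X) + \theta_t V_t$ is partially linear. We want a bounded test function whose $P_t$-expectation vanishes identically. The candidate is $f(Z) = Y c(D,X)$, since
\[
\bE_{P_t}[Yc] = \bE_{P_t}\bigl[(\ell_t(X) + \theta_t(D - m_t(X)))\,c\bigr].
\]
This is zero at $t=0$, but in general it is not zero for $t \neq 0$, because $c$ is orthogonal to $L_2(P_{0,X})$ and to $V$ only under $P_0$. This is the main obstacle.

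My plan for handling it is to differentiate both sides using Lemma~\ref{lem:plm_qmd_l1}. The left side has derivative $\bE_0[Ycs]$. For the right side, I would write it as $\int c\,\mu_t\,dP_t$ and expand it to first order. The terms in which $\mu_0$ is held fixed and $P_t$ varies contribute $\bE_0[\mu_0 c s]$. The remaining terms involve the variation of $\mu_t$ within the partially linear span, integrated against $c$ under $P_0$. These vanish because $c \perp \cA_{\mathrm{plm}}$, \emph{provided} the variation of $\mu_t$ lies in $\cA_{\mathrm{plm}}$ to first order. That proviso requires some differentiability of $\ell_t$, $m_t$, $\theta_t$ along arbitrary QMD paths, which is not given. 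To avoid it, I would instead use that $(\mu_t - \mu_0)/t$ lies in the finite-plus-$X$ span up to terms controlled by the Hellinger-Lipschitz bound (Proposition~\ref{prop:plm_Hellinger_Lipschitz}) and the analogous TV bounds for $m_P$ and $\ell_P$ derived in its proof. Subtracting the two sides then gives $\bE_0[(Y-\mu_0)cs] = \bE_0[Ucs] = 0$. I expect this step to need the most care, and I may fall back on directly computing $\bE_{P_t}[c\,(Y - \ell_t - \theta_t V_t)] = 0$ and bounding the cross terms.

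\emph{Step 3: the pathwise derivative.} Take $s \in \cT$ with $\bE_0[Us\mid D,X] = a + bV$. For $s \in \cT_b$, the constructed path has $\theta_t = \theta_0 + tb$ exactly, so $\dot\theta_s = b$. Define $\varphi^* := UV/(\sigma_0^2\,\cdot)$ suitably normalized, i.e.\ a function satisfying $\bE_0[\varphi^* s] = b$ for all $s \in \cT$. For instance, take $\varphi^* = U\tilde V/\sigma_0^2$ with $\tilde V$ the projection making $\langle U\tilde V/\sigma_0^2, U(a+bV)/\sigma_0^2\rangle = b$. Then, exactly as in the extension step of Theorem~\ref{thm:forward}, combine Lemma~\ref{lem:hellinger_gap} with the Hellinger-Lipschitz bound of Proposition~\ref{prop:plm_Hellinger_Lipschitz} to extend $\dot\theta_s = \bE_0[\varphi^* s] = b$ from the dense class $\cT_b$ to all regular submodels.
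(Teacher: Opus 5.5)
Your overall architecture matches the paper's. You use the bounded class plus density for $\cT_*\subseteq\cT$. For the reverse inclusion you test against bounded $c\in\cA_{\mathrm{plm}}^\perp$ and differentiate $\bE_{P_t}[Yc]=\bE_{P_t}[\mu_t c]$. But Step~2 is not completed, and the obstacle you identify there is not real. No differentiability of $\ell_t,m_t,\theta_t$ is needed, because $\mu_t$ lies in $\cA_{\mathrm{plm}}$ \emph{exactly} for every $t$. Since $D-m_t(X) = V-(m_t-m_0)(X)$,
\[
\mu_t = \bigl[\ell_t-\theta_t(m_t-m_0)\bigr](X) + \theta_t V,
\]
and $\ell_t,m_t$ are bounded by $C_Y,C_D$, hence lie in $L_2(P_{0,X})$. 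The orthogonality of $c$ holds under $P_0$, which is exactly the measure in $\bE_0[c\mu_t]$. So $\bE_0[c\mu_t]=0$ identically in $t$; equivalently $\bE_0[c\,r_t]=0$ for $r_t=(\mu_t-\mu_0)/t=a_t(X)+b_tV$ with $b_t=(\theta_t-\theta_0)/t$. This is the missing idea, and it is what the paper uses. Your proposed workaround controls a supposed non-$\cA_{\mathrm{plm}}$ error via the Hellinger-Lipschitz bound and TV bounds on $m_P,\ell_P$. That error term is identically zero, and the workaround is not specified enough to check; bounds alone would only make $\bE_0[c(\mu_t-\mu_0)]/t$ bounded, not vanishing.

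Once you have this, write $\bE_{P_t}[c\mu_t]=\bE_0[c\mu_t]+\bE_0[c\mu_t(\Lambda_t-1)]+\int_{A^c}c\mu_tp_t\,d\nu$. A genuine analytic step remains that your sketch glosses over when it says the terms with $\mu_0$ held fixed contribute $\bE_0[\mu_0cs]$. You need $\bE_0[c\mu_t(\Lambda_t-1)/t]\to\bE_0[c\mu_0 s]$, which requires $\mu_t\to\mu_0$ in $L_2(P_0)$ on top of Lemma~\ref{lem:plm_qmd_l1}. The paper gets this from $\|\mu_t-\mu_0\|_{L_1(P_0)}\le 4C_Y\,\mathrm{TV}(P_t,P_0)$ (via $F_t=\mu_tp_{t,DX}$) together with $|\mu_t-\mu_0|\le 2C_Y$. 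It disposes of the off-support term using $P_t(A^c)=o(t^2)$.

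Step~3 presupposes Step~2, since approximating an arbitrary score $s'$ by elements of $\cT_b$ requires $s'\in\overline{\cT_b}=\cT_*$. Given Step~2, your Hellinger-Lipschitz extension is a valid but heavier alternative. The paper instead tests $r_t$ against $c=V$ to get $b_tJ_0=\bE_0[Vr_t]\to\bE_0[VUs]=bJ_0$ directly for every regular submodel, with no gradient or Lipschitz bound needed. If you keep your route, take $\varphi^*=UV/J_0$. A normalized $UV/\sigma_0^2$ fails under heteroskedasticity, because $\bE_0[V/\sigma_0^2\mid X]\neq 0$ in general.
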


\begin{proof}
Let \(\cT_* := \left\{s \in L_2^0(P_0) : \bE_0[Us \mid D, X] \in \cA_{\mathrm{plm}}\right\}\). We first prove \(\cT \subseteq \cT_*\), then the reverse inclusion. To start, we take any regular submodel \(t \mapsto P_t\) through \(P_0\) with score \(s\), and let
\[
\Lambda_t := \frac{p_t}{p_0}.
\]
Write
\[
\mu_t(D, X) := \bE_{P_t}[Y \mid D, X].
\]
Since \(P_t\) remains in the partially linear model for every \(t\),
\[
\mu_t(D, X) = \ell_t(X) + \theta_t(D - m_t(X)).
\]
Hence, for each fixed \(t \neq 0\),
\begin{equation}
\label{eq:plm_rt_in_A}
r_t(D, X) := \frac{\mu_t(D, X) - \mu_0(D, X)}{t} \in \cA_{\mathrm{plm}}.
\end{equation}
Indeed,
\[
\mu_t - \mu_0 = \{\ell_t - \ell_0 - \theta_t(m_t - m_0)\} + (\theta_t - \theta_0)V,
\]
which is of the form \(a_t(X) + b_t V\).

We now identify the weak limit of \(r_t\). Let \(c(D, X)\) be any bounded
measurable test function, and let \(A := \{p_0 > 0\}\). Next, we can write
\begin{align*}
\bE_{P_t}[cY]
&= \int c\,\mu_t\,p_t\,d\nu\\
&= \bE_0[c\,\mu_t\,\Lambda_t]
   + \int_{A^c} c\,\mu_t\,p_t\,d\nu \\
&= \bE_0[c\,\mu_t]
   + \bE_0[c\,\mu_t(\Lambda_t - 1)]
   + \int_{A^c} c\,\mu_t\,p_t\,d\nu.
\end{align*}
Similarly, 
\begin{align*}
\bE_{P_t}[cY] - \bE_0[cY] 
&= \int cY(p_t - p_0)\,d\nu\\
&= \int_A cY\frac{(p_t - p_0)}{p_0} p_0\,d\nu + \int_{A^c} cY\,p_t\,d\nu\\
&= \bE_0[cY(\Lambda_t - 1)]
  + \int_{A^c} cY\,p_t\,d\nu.
\end{align*}
Therefore, it follows that
\begin{align*}
    \bE_0[c \mu_t] - \bE_0[c \mu_0]
    &= \bE_0[c\mu_t ] - \bE_0[ cY]\\
    &= \left[\bE_{P_t}[cY] -  \bE_0[cY]\right] - \bE_0[c\,\mu_t(\Lambda_t - 1)]
   - \int_{A^c} c\,\mu_t\,p_t\,d\nu \\
    &= \bE_0[cY(\Lambda_t - 1)]
  - \bE_0[c\,\mu_t(\Lambda_t - 1)] + \int_{A^c} c(Y - \mu_t) p_t\,d\nu,
\end{align*}
where the first equality holds by the tower property and the others by substituting the values derived above. 
Dividing by \(t\),
\begin{equation}
\label{eq:plm_crt_diff}
\bE_0[c\,r_t]
= \underbrace{\bE_0\!\left[cY\,\frac{\Lambda_t - 1}{t}\right]}_{\mathrm{(I)}}
  - \underbrace{\bE_0\!\left[c\,\mu_t\,\frac{\Lambda_t - 1}{t}\right]}_{\mathrm{(II)}}
  + \underbrace{\frac{1}{t}\int_{A^c} c(Y - \mu_t)\,p_t\,d\nu}_{\mathrm{(III)}}.
\end{equation}
For term \((\mathrm{III})\), on \(A^c\) we have \(p_0 = 0\), so the QMD expansion gives \(P_t\bigl(A^c\bigr) = o\bigl(t^{2}\bigr)\). Thus,
\[
\left|\frac{1}{t}\int_{A^c} c(Y - \mu_t)\,p_t\,d\nu\right|
\le \frac{2\|c\|_\infty\,C_Y}{|t|}\,P_t\bigl(A^c\bigr)
= \frac{o\bigl(t^{2}\bigr)}{|t|} \to 0.
\]
By Lemma~\ref{lem:plm_qmd_l1}, term \((\mathrm{I})\) converges to \(\bE_0[cYs]\). For  term \((\mathrm{II})\), note that \(|\mu_t| \le C_Y\) uniformly, and the same lemma gives
\[
\left\|\frac{\Lambda_t - 1}{t} - s\right\|_{L_1(P_0)} \to 0.
\]
Hence, it follows that
\[
\bE_0\!\left[c\,\mu_t\,\frac{\Lambda_t - 1}{t}\right] - \bE_0[c\,\mu_t\,s] \to 0.
\]
Also, because \(|\mu_t - \mu_0| \le 2C_Y\),
\[
\|\mu_t - \mu_0\|_2^2 \le 2C_Y\|\mu_t - \mu_0\|_1.
\]
To bound the \(L_1\) norm, define \(F_t(d,x) := \mu_t(d,x)\,p_{t,DX}(d,x) = \int y\,p_t(y,d,x)\,d\nu_Y\), where \(p_{t,DX}\) denotes the \((D,X)\)-marginal density under \(P_t\). The identity
\[
(\mu_t - \mu_0)\,p_{0,DX} = (F_t - F_0) - \mu_t\,(p_{t,DX} - p_{0,DX})
\]
gives, after taking absolute values and integrating,
\[
\|\mu_t - \mu_0\|_{L_1(P_0)} \le \|F_t - F_0\|_{L_1(\nu_{DX})} + C_Y\,\|p_{t,DX} - p_{0,DX}\|_{L_1(\nu_{DX})}.
\]
Since \(|F_t(d,x) - F_0(d,x)| \le C_Y \int |p_t - p_0|\,d\nu_Y\), integrating over \((d,x)\) yields \(\|F_t - F_0\|_{L_1(\nu_{DX})} \le 2C_Y\,\mathrm{TV}(P_t, P_0)\). Marginalizing out \(Y\) also gives \(\|p_{t,DX} - p_{0,DX}\|_{L_1(\nu_{DX})} \le 2\,\mathrm{TV}(P_t, P_0)\). Hence,
\[
\|\mu_t - \mu_0\|_{L_1(P_0)} \le 4C_Y\,\mathrm{TV}(P_t, P_0) \to 0,
\]
since QMD implies \(\mathrm{TV}(P_t, P_0) \to 0\). It follows that \(\mu_t \to \mu_0\) in \(L_2(P_0)\), and
\[
\bE_0[c\,\mu_t\,s] \to \bE_0[c\,\mu_0\,s].
\]
Taking limits in~\eqref{eq:plm_crt_diff},
\[
\lim_{t \to 0} \bE_0[c\,r_t] = \bE_0[c(Y - \mu_0)s] = \bE_0[cUs].
\]
Now, let \(c\) range over the bounded elements of \(\cA_{\mathrm{plm}}^\perp\). Since \(r_t \in \cA_{\mathrm{plm}}\) for every \(t \neq 0\), we have \(\bE_0[c\,r_t] = 0\) for every such \(c\), and
\[
0 = \bE_0[cUs] = \bE_0[c\,\bE_0[Us \mid D, X]] \qquad \text{for all bounded } c \in \cA_{\mathrm{plm}}^\perp.
\]
By Lemma~\ref{lem:plm_Aperp_density}, bounded elements are dense in \(\cA_{\mathrm{plm}}^\perp\), so the same orthogonality holds for all of \(\cA_{\mathrm{plm}}^\perp\). Therefore
\[
\bE_0[Us \mid D, X] \in \bigl(\cA_{\mathrm{plm}}^\perp\bigr)^\perp = \cA_{\mathrm{plm}},
\]
which proves \(\cT \subseteq \cT_*\) since \(\cT_*\) is a closed linear subspace.

Since \(\bE_0[Us \mid D, X] \in \cA_{\mathrm{plm}}\), there exist \(a \in L_2(P_{0,X})\) and \(b \in \RR\) such that
\[
\bE_0[Us \mid D, X] = a(X) + bV.
\]
From~\eqref{eq:plm_rt_in_A}, write \(r_t = a_t(X) + b_t V\) with \(b_t = (\theta_t - \theta_0)/t\). Using the limit already established with the bounded test function \(c = V\), we obtain
\[
\lim_{t \to 0} \bE_0[V\,r_t] = \bE_0[VUs].
\]
Now \(\bE_0[V a_t(X)] = \bE_0[\bE_0[V \mid X] \cdot a_t(X)] = 0\), so
\[
\bE_0[V\,r_t] = b_t\,\bE_0\bigl[V^{2}\bigr].
\]
Therefore,
\[
\lim_{t \to 0} b_t J_0 = \bE_0[V\,\bE_0[Us \mid D, X]] = \bE_0[V(a(X) + bV)] = bJ_0.
\]
Since \(J_0 > 0\), this shows \(b_t \to b\), i.e.\ \(\dot\theta_s = b\).

By Lemma~\ref{lem:plm_bounded_density}, the bounded class \(\cT_b\) is dense in \(\cT_*\). Proposition~\ref{prop:plm_bounded_submodels} constructs, for every element of \(\cT_b\), an explicit regular submodel inside the partially linear model obtaining that score. Therefore \(\cT_b \subseteq \cT\). Since the tangent space is closed in \(L_2(P_0)\) by definition,
\[
\cT_* = \overline{\cT_b}^{L_2(P_0) } \subseteq \cT.
\]
Combined with the above, this yields \(\cT = \cT_*\).

\end{proof}

\begin{corollary}[Nuisance tangent space]
\label{cor:plm_nuisance_tangent}
The nuisance tangent space is
\begin{equation}
\label{eq:plm_nuisance_tangent}
\Lambda = \left\{s \in L_2^0(P_0) : \bE_0[Us \mid D, X] = a(X) \text{ for some } a \in L_2(P_{0,X})\right\}.
\end{equation}
\end{corollary}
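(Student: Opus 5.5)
We plan to prove the two inclusions separately, working directly from the description of \(\cT\) in Proposition~\ref{prop:plm_tangent_space} and the formula \(\dot\theta_s = b\) for the pathwise derivative of \(\theta\). Write \(\Lambda_* := \{s \in L_2^0(P_0) : \bE_0[Us \mid D, X] = a(X) \text{ for some } a \in L_2(P_{0,X})\}\), which is exactly the right-hand side of \eqref{eq:plm_nuisance_tangent}.

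For \(\Lambda \subseteq \Lambda_*\), take \(s \in \cS_{\mathrm{nuis}}\). Every score lies in \(\cT\), so Proposition~\ref{prop:plm_tangent_space} gives \(\bE_0[Us \mid D, X] = a(X) + bV\). Along some regular submodel with score \(s\) the derivative of \(\theta\) vanishes. By the same proposition this derivative equals \(b\) for \emph{every} regular submodel with score \(s\), so \(b = 0\) and \(s \in \Lambda_*\). The set \(\Lambda_*\) is a linear subspace. It is also closed: if \(s_n \to s\) in \(L_2(P_0)\), then \(\bE_0[Us_n \mid D,X] \to \bE_0[Us \mid D,X]\) in \(L_2\), because \(U\) is bounded and conditional expectation is an \(L_2\) contraction. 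Functions of \(X\) alone form a closed subspace, so the limit is again a function of \(X\). Taking the closed span therefore gives \(\Lambda \subseteq \Lambda_*\).

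For \(\Lambda_* \subseteq \Lambda\), note first that \(\Lambda_* \subseteq \cT_*\) with \(b=0\). We rerun the truncation argument of Lemma~\ref{lem:plm_bounded_density} with \(b = 0\) held fixed throughout. Decompose \(s = s_X + s_D + Uc/\sigma_0^2 + w\) with \(c = a(X)\), and approximate each piece by a bounded piece satisfying the same constraints. The approximating sequence \(s_n\) then has the form of \eqref{eq:plm_bounded_score} with \(b = 0\), and \(s_n \to s\). Proposition~\ref{prop:plm_bounded_submodels} supplies, for each such \(s_n\), a regular submodel in the partially linear model with score \(s_n\) and \(\theta_t = \theta_0 + tb = \theta_0\) exactly. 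Along it \(\frac{d}{dt}\theta(P_t)|_{t=0} = 0\), so \(s_n \in \cS_{\mathrm{nuis}}\). Hence \(s\) lies in the closure of \(\cS_{\mathrm{nuis}}\), which is contained in \(\Lambda\).

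The main point needing care is checking that the truncation in Lemma~\ref{lem:plm_bounded_density} keeps \(b=0\). It does, since only \(a\) is truncated (\(a_n = \mathrm{trunc}_n(a)\)) and \(b\) is carried over unchanged. It also matters that the construction in Proposition~\ref{prop:plm_bounded_submodels} fixes \(\theta\) exactly, not merely to first order. With these checked, the two inclusions give \eqref{eq:plm_nuisance_tangent}.
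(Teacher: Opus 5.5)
Your proposal is correct and rests on the same key fact as the paper's proof: by Proposition~\ref{prop:plm_tangent_space}, the derivative of \(\theta\) along any regular submodel with score \(s\) is the coefficient \(b\) of \(V\) in \(\bE_0[Us \mid D,X]\), so nuisance scores are exactly those with \(b=0\). The paper stops there, while you also check that the right-hand side is closed and that bounded \(b=0\) scores (from the truncation in Lemma~\ref{lem:plm_bounded_density}, which leaves \(b\) untouched) are realized by \(\theta\)-preserving submodels via Proposition~\ref{prop:plm_bounded_submodels}, which makes the identification of the closed span rigorous; one small correction is that membership in \(\cS_{\mathrm{nuis}}\) needs only a vanishing first-order derivative, so exact preservation of \(\theta\) is convenient but not essential.
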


\begin{proof}
By Proposition~\ref{prop:plm_tangent_space}, the derivative of \(\theta\) along any regular submodel with score \(s\) is the coefficient of \(V\) in \(\bE_0[Us \mid D, X]\). Nuisance scores are exactly those tangent directions with derivative zero, i.e.\ those with \(b = 0\).
\end{proof}

\begin{proposition}[Orthogonal complements]
\label{prop:plm_orthogonal_complements}
Under Assumption~\ref{ass:plm_standing},
\begin{align}
\Lambda^\perp &= \left\{Ua(D, X) : a \in L_2(P_{0, DX}),\; \bE_0[a(D, X) \mid X] = 0\right\}, \label{eq:plm_Lambda_perp} \\
\cT^\perp &= \left\{Ua(D, X) : a \in L_2(P_{0, DX}),\; \bE_0[a(D, X) \mid X] = 0,\; \bE_0[a(D, X)V] = 0\right\}. \label{eq:plm_T_perp}
\end{align}
\end{proposition}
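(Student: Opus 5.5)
We compute the two complements directly from the characterizations in Proposition~\ref{prop:plm_tangent_space} and Corollary~\ref{cor:plm_nuisance_tangent}. Both sets are defined by a linear condition on the map \(\Pi(s) := \bE_0[Us \mid D, X]\). The plan is to first decompose \(L_2^0(P_0)\) orthogonally, then read off which pieces are annihilated by each condition.

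\textbf{Step 1: an orthogonal decomposition of \(L_2^0(P_0)\).} Write \(\mathcal{U} := \{U a(D,X) : a \in L_2(P_{0,DX})\}\). This is a closed subspace, since \(\|Ua\|_2^2 = \bE_0[\sigma_0^2 a^2]\) is equivalent to \(\|a\|_2^2\) by Assumption~\ref{ass:plm_standing}. It is contained in \(L_2^0(P_0)\) because \(\bE_0[U \mid D,X]=0\). The orthogonal projection of \(s\) onto \(\mathcal{U}\) is \(U\,\Pi(s)/\sigma_0^2\). Hence \(s \perp \mathcal{U}\) if and only if \(\Pi(s)=0\), and every \(s\) splits as
\[
s = \frac{U\,\Pi(s)}{\sigma_0^2} + s^\perp, \qquad \Pi(s^\perp)=0.
\]
Both \(\Lambda\) and \(\cT\) contain the whole complement \(\mathcal{U}^\perp \cap L_2^0(P_0)\), because for those elements \(\Pi(s)=0\), which lies in every admissible set. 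Consequently \(\Lambda^\perp\) and \(\cT^\perp\), taken inside \(L_2^0(P_0)\), are both contained in \(\mathcal{U}\).

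\textbf{Step 2: reduce to the \(\mathcal{U}\) component.} For \(Ua \in \mathcal{U}\), I will show that \(Ua \perp \Lambda\) if and only if \(\bE_0[a\,c]=0\) for every \(c\) with \(Uc/\sigma_0^2 \in \Lambda\). By the corollary, these \(c\) are exactly \(c=a'(X)\), \(a' \in L_2(P_{0,X})\). The key identity is
\[
\bE_0\bigl[Ua\cdot Uc/\sigma_0^2\bigr] = \bE_0[a c],
\]
which follows by conditioning on \((D,X)\). The condition \(\bE_0[a\,a'(X)]=0\) for all \(a'\) is equivalent to \(\bE_0[a\mid X]=0\), which gives \eqref{eq:plm_Lambda_perp}. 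For \(\cT\), the admissible \(c\) are \(a'(X)+bV\), so we additionally need \(\bE_0[aV]=0\), which gives \eqref{eq:plm_T_perp}.

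To justify the reduction, I will check that every \(s\in\Lambda\) (respectively \(\cT\)) has \(\mathcal{U}\)-component \(U\Pi(s)/\sigma_0^2\) whose \(\Pi\)-value lies in the required class. I also need the converse: that \(Uc/\sigma_0^2\) itself belongs to \(\Lambda\) (respectively \(\cT\)). This holds because \(\Pi(Uc/\sigma_0^2)=c\) and \(\bE_0[Uc/\sigma_0^2]=0\).

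\textbf{Main obstacle.} The delicate point is \(L_2\) integrability. I must confirm that \(Uc/\sigma_0^2\in L_2^0(P_0)\) for \(c\in L_2\), and that \(\Pi(s)\in L_2(P_{0,DX})\) for \(s\in L_2\). Both follow from the bounds on \(U\) and \(\sigma_0^2\) together with conditional Cauchy--Schwarz. I also need the reverse inclusions: that any \(Ua\) with the stated constraints is indeed orthogonal to every element of \(\Lambda\), and not just to its \(\mathcal{U}\)-part. This follows because \(\bE_0[Ua\,s] = \bE_0[a\,\Pi(s)]\) for all \(s\in L_2\), so the whole computation collapses to that single identity.
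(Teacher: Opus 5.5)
Your proposal is correct and follows essentially the same route as the paper. The paper writes $r = \alpha + \gamma + Uc + w$ with $c = \bE_0[Ur \mid D,X]/\sigma_0^2$ and eliminates $\alpha$, $\gamma$, $w$ one at a time by testing $r$ against each (each satisfies $\bE_0[U\,\cdot \mid D,X]=0$ and so lies in $\Lambda$), which is exactly your $\mathcal{U}^\perp$-component removed in a single projection step; the constraints on $c$ and the reverse inclusions then come, as in your plan, from the test scores $Uh(X)/\sigma_0^2$ and $UV/\sigma_0^2$ together with the identity $\bE_0[Ua\,s] = \bE_0\bigl[a\,\bE_0[Us \mid D,X]\bigr]$.
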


\begin{proof}
Take any \(r \in L_2^0(P_0)\). Define
\[
\alpha(X) := \bE_0[r \mid X], \qquad \gamma(D,X) := \bE_0[r \mid D,X] - \bE_0[r \mid X], \qquad c(D,X) := \frac{\bE_0[Ur \mid D,X]}{\sigma_0^2(D,X)},
\]
and
\[
w(Y,D,X) := r - \alpha - \gamma - Uc.
\]
By conditional Jensen's inequality,
\(
\alpha \in L_2(P_{0, X})
\) and \(\gamma \in L_2(P_{0, DX}).\)
Also,
\[
c^2 = \frac{\bE_0[Ur \mid D,X]^2}{\sigma_0^4(D,X)} \le \frac{\bE_0\bigl[U^{2} \mid D,X\bigr]\,\bE_0\bigl[r^{2} \mid D,X\bigr]}{\sigma_0^4(D,X)} = \frac{\bE_0\bigl[r^{2} \mid D,X\bigr]}{\sigma_0^2(D,X)},
\]
so Assumption~\ref{ass:plm_standing} implies
\[
\bE_0[c^2] \le \underline\sigma^{-2}\,\bE_0[r^2] < \infty.
\]
Hence \(c \in L_2(P_{0, DX})\), and therefore \(w \in L_2(P_0)\) as well. Moreover,
\[
\bE_0[\alpha] = \bE_0[r] = 0, \qquad \bE_0[\gamma] = \bE_0[\bE_0[\gamma \mid X]] = 0, \qquad \bE_0[w] = 0,
\]
and 
\begin{equation}
\label{eq:plm_orthcomp_identities}
\bE_0[\gamma \mid X] = 0, \qquad \bE_0[w \mid D,X] = 0, \qquad \bE_0[Uw \mid D,X] = 0.
\end{equation}

\smallskip
\noindent\textbf{Characterization of \(\Lambda^\perp\).}

To start, assume \(r \in \Lambda^\perp\). We first show that \(\alpha = 0\) almost surely. Let
\[
s_\alpha(Z) := \alpha(X).
\]
Since \(\bE_0[s_\alpha] = \bE_0[\alpha] = 0\) and
\[
\bE_0[U s_\alpha \mid D,X] = \alpha(X)\,\bE_0[U \mid D,X] = 0,
\]
Corollary~\ref{cor:plm_nuisance_tangent} gives \(s_\alpha \in \Lambda\). Because \(r \in \Lambda^\perp\),
\[
0 = \bE_0[r\,s_\alpha] = \bE_0[r\,\alpha].
\]
Using the decomposition \(r = \alpha + \gamma + Uc + w\), we obtain
\[
\bE_0[r\,\alpha] = \bE_0\bigl[\alpha^{2}\bigr] + \bE_0[\gamma\,\alpha] + \bE_0[Uc\,\alpha] + \bE_0[w\,\alpha].
\]
The three cross terms vanish as follows,
\begin{align*}
\bE_0[\gamma\,\alpha] &= \bE_0[\alpha(X)\,\bE_0[\gamma \mid X]] = 0, \\
\bE_0[Uc\,\alpha] &= \bE_0[\alpha(X)\,c(D,X)\,\bE_0[U \mid D,X]] = 0, \\
\bE_0[w\,\alpha] &= \bE_0[\alpha(X)\,\bE_0[w \mid D,X]] = 0.
\end{align*}
Hence \(\bE_0[r\,\alpha] = \bE_0\bigl[\alpha^{2}\bigr]\), so \(\bE_0\bigl[\alpha^{2}\bigr] = 0\) and \(\alpha = 0\) \(P_0\text{-a.s.}\) Next, we show that \(\gamma = 0\) almost surely. Let
\[
s_\gamma(Z) := \gamma(D,X).
\]
Since \(\bE_0[s_\gamma] = \bE_0[\gamma] = 0\) and
\[
\bE_0[U s_\gamma \mid D,X] = \gamma(D,X)\,\bE_0[U \mid D,X] = 0,
\]
Corollary~\ref{cor:plm_nuisance_tangent} again gives \(s_\gamma \in \Lambda\). Therefore,
\[
0 = \bE_0[r\,s_\gamma] = \bE_0[r\,\gamma].
\]
Since \(\alpha = 0\), we have
\[
\bE_0[r\,\gamma] = \bE_0\bigl[\gamma^{2}\bigr] + \bE_0[Uc\,\gamma] + \bE_0[w\,\gamma].
\]
The last two terms vanish similarly,
\begin{align*}
\bE_0[Uc\,\gamma] &= \bE_0[\gamma(D,X)\,c(D,X)\,\bE_0[U \mid D,X]] = 0, \\
\bE_0[w\,\gamma] &= \bE_0[\gamma(D,X)\,\bE_0[w \mid D,X]] = 0.
\end{align*}
Hence \(\bE_0[r\,\gamma] = \bE_0\bigl[\gamma^{2}\bigr]\), so \(\bE_0\bigl[\gamma^{2}\bigr] = 0\) and 
\(\gamma = 0\) \(P_0\text{-a.s.}\) We now show that \(w = 0\) almost surely. Let
\[
s_w(Z) := w(Y,D,X).
\]
Since \(\bE_0[s_w] = \bE_0[w] = 0\) and \(\bE_0[U s_w \mid D,X] = \bE_0[Uw \mid D,X] = 0\) by~\eqref{eq:plm_orthcomp_identities}, Corollary~\ref{cor:plm_nuisance_tangent} gives \(s_w \in \Lambda\). Thus,
\[
0 = \bE_0[r\,s_w] = \bE_0[r\,w].
\]
Since \(\alpha = \gamma = 0\), we obtain
\[
\bE_0[r\,w] = \bE_0[Uc\,w] + \bE_0\bigl[w^{2}\bigr].
\]
The first term vanishes as
\[
\bE_0[Uc\,w] = \bE_0[c(D,X)\,\bE_0[Uw \mid D,X]] = 0.
\]
Therefore \(\bE_0[r\,w] = \bE_0\bigl[w^{2}\bigr]\), so \(\bE_0\bigl[w^{2}\bigr] = 0\) and \(w = 0\) \(P_0\text{-a.s.}\) Up to this point, 
\[
r = Uc(D,X).
\]
It remains to identify the constraint on \(c\). Let any \(h \in L_2(P_{0,X})\) be given, and define
\[
s_h(Z) := \frac{U\,h(X)}{\sigma_0^2(D,X)}.
\]
Since
\[
\bE_0\bigl[s_h^{2}\bigr] = \bE_0\!\left[\frac{U^2\,h(X)^2}{\sigma_0^4(D,X)}\right] = \bE_0\!\left[\frac{h(X)^2}{\sigma_0^2(D,X)}\right] \le \underline\sigma^{-2}\,\bE_0\bigl[h(X)^2\bigr] < \infty,
\]
we have \(s_h \in L_2(P_0)\). Also,
\[
\bE_0[s_h] = \bE_0\!\left[\frac{h(X)\,\bE_0[U \mid D,X]}{\sigma_0^2(D,X)}\right] = 0,
\]
and
\[
\bE_0[U s_h \mid D,X] = \frac{h(X)\,\bE_0\bigl[U^{2} \mid D,X\bigr]}{\sigma_0^2(D,X)} = h(X).
\]
Hence Corollary~\ref{cor:plm_nuisance_tangent} implies \(s_h \in \Lambda\). Since \(r \in \Lambda^\perp\),
\[
0 = \bE_0[r\,s_h] = \bE_0\!\left[Uc(D,X) \cdot \frac{U\,h(X)}{\sigma_0^2(D,X)}\right] = \bE_0[c(D,X)\,h(X)] = \bE_0[\bE_0[c(D,X) \mid X]\,h(X)]
\]
for every \(h \in L_2(P_{0,X})\). Choosing
\[
h(X) := \bE_0[c(D,X) \mid X] \in L_2(P_{0,X}),
\]
we arrive at
\[
0 = \bE_0\left[\bE_0[c(D,X) \mid X]^2\right].
\]
Therefore
\[
\bE_0[c(D,X) \mid X] = 0 \qquad P_0\text{-a.s.}
\]
This proves
\[
\Lambda^\perp \subseteq \{Ua(D,X) : a \in L_2(P_{0, DX}),\; \bE_0[a(D,X) \mid X] = 0\}.
\]
To see the reverse inclusion, suppose
\[
r = Ua(D,X) \qquad \text{with} \qquad a \in L_2(P_{0, DX}), \quad \bE_0[a(D,X) \mid X] = 0.
\]
Since
\[
\bE_0\bigl[r^{2}\bigr] = \bE_0\bigl[U^2 a(D,X)^2\bigr] = \bE_0\bigl[\sigma_0^2(D,X)\,a(D,X)^2\bigr] \le \overline\sigma^2\,\bE_0\bigl[a(D,X)^2\bigr] < \infty,
\]
we have \(r \in L_2(P_0)\), and also
\[
\bE_0[r] = \bE_0[a(D,X)\,\bE_0[U \mid D,X]] = 0,
\]
so \(r \in L_2^0(P_0)\). Next, take any \(s \in \Lambda\). By Corollary~\ref{cor:plm_nuisance_tangent}, there exists \(h \in L_2(P_{0,X})\) such that
\[
\bE_0[Us \mid D,X] = h(X).
\]
Then, by iterated expectations,
\begin{align*}
\bE_0[r\,s] &= \bE_0[Ua(D,X)\,s] \\
&= \bE_0[a(D,X)\,\bE_0[Us \mid D,X]] \\
&= \bE_0[a(D,X)\,h(X)] \\
&= \bE_0[h(X)\,\bE_0[a(D,X) \mid X]] \\
&= 0.
\end{align*}
Since this holds for every \(s \in \Lambda\), we conclude that \(r \in \Lambda^\perp\). Therefore
\[
\{Ua(D,X) : a \in L_2(P_{0, DX}),\; \bE_0[a(D,X) \mid X] = 0\} \subseteq \Lambda^\perp.
\]
Combining the two inclusions proves~\eqref{eq:plm_Lambda_perp}.

\smallskip
\noindent\textbf{Characterization of \(\cT^\perp\).}

To start, assume \(r \in \cT^\perp\). Since \(\Lambda \subseteq \cT\), we also have \(r \in \Lambda^\perp\). From the above, we know there exists \(c(D,X) \in L_2(P_{0, DX})\) such that
\[
r = Uc(D,X), \qquad \bE_0[c(D,X) \mid X] = 0.
\]
To obtain the additional restriction, define
\[
s_\theta(Z) := \frac{UV}{\sigma_0^2(D,X)}.
\]
Since
\[
\bE_0\bigl[s_\theta^2\bigr] = \bE_0\!\left[\frac{U^2 V^2}{\sigma_0^4(D,X)}\right] = \bE_0\!\left[\frac{V^2}{\sigma_0^2(D,X)}\right] \le \underline\sigma^{-2}\,\bE_0\bigl[V^2\bigr] < \infty,
\]
we have \(s_\theta \in L_2(P_0)\). Also,
\[
\bE_0[s_\theta] = \bE_0\!\left[\frac{V\,\bE_0[U \mid D,X]}{\sigma_0^2(D,X)}\right] = 0,
\]
and
\[
\bE_0[U s_\theta \mid D,X] = \frac{V\,\bE_0\bigl[U^2 \mid D,X\bigr]}{\sigma_0^2(D,X)} = V = 0 + 1 \cdot V.
\]
Hence Proposition~\ref{prop:plm_tangent_space} implies \(s_\theta \in \cT\). Since \(r \in \cT^\perp\),
\[
0 = \bE_0[r\,s_\theta] = \bE_0\!\left[Uc(D,X) \cdot \frac{UV}{\sigma_0^2(D,X)}\right] = \bE_0[c(D,X)\,V].
\]
Therefore
\[
r = Uc(D,X) \qquad \text{with} \qquad \bE_0[c(D,X) \mid X] = 0, \quad \bE_0[c(D,X)\,V] = 0,
\]
which proves
\[
\cT^\perp \subseteq \{Ua(D,X) : a \in L_2(P_{0, DX}),\; \bE_0[a(D,X) \mid X] = 0,\; \bE_0[a(D,X)\,V] = 0\}.
\]
To see the reverse inclusion, suppose
\[
r = Ua(D,X) \qquad \text{with} \qquad a \in L_2(P_{0, DX}), \quad \bE_0[a(D,X) \mid X] = 0, \quad \bE_0[a(D,X)\,V] = 0.
\]
Take any \(s \in \cT\). By Proposition~\ref{prop:plm_tangent_space}, there exist \(h \in L_2(P_{0,X})\) and \(b \in \RR\) such that
\[
\bE_0[Us \mid D,X] = h(X) + bV.
\]
Then
\begin{align*}
\bE_0[r\,s] &= \bE_0[Ua(D,X)\,s] \\
&= \bE_0[a(D,X)\,\bE_0[Us \mid D,X]] \\
&= \bE_0[a(D,X)\,h(X)] + b\,\bE_0[a(D,X)\,V] \\
&= \bE_0[h(X)\,\bE_0[a(D,X) \mid X]] + b\,\bE_0[a(D,X)\,V] \\
&= 0.
\end{align*}
Thus \(r\) is orthogonal to every element of \(\cT\), so \(r \in \cT^\perp\). Therefore
\[
\{Ua(D,X) : a \in L_2(P_{0, DX}),\; \bE_0[a(D,X) \mid X] = 0,\; \bE_0[a(D,X)\,V] = 0\} \subseteq \cT^\perp.
\]
Combining the two inclusions proves~\eqref{eq:plm_T_perp}.
\end{proof}

\begin{proposition}[All influence functions]
\label{prop:plm_all_IFs}
A function \(\varphi \in L_2^0(P_0)\) is an influence function for \(\theta\) if and only if it can be written as
\begin{equation}
\label{eq:plm_all_IFs}
\varphi(Z) = Ua(D, X) \qquad \text{with} \qquad \bE_0[a(D, X) \mid X] = 0, \qquad \bE_0[a(D, X)V] = 1.
\end{equation}
\end{proposition}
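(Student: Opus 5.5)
The plan is to combine the characterization of $\cT$ in Proposition~\ref{prop:plm_tangent_space} with the orthogonal-complement computations in Proposition~\ref{prop:plm_orthogonal_complements}. By Definition~\ref{def:pathwise}, $\varphi \in L_2^0(P_0)$ is an influence function iff $\bE_0[\varphi s] = \dot\theta_s$ for every score $s \in \cS$. Proposition~\ref{prop:plm_tangent_space} shows that $\dot\theta_s = b$, where $\bE_0[Us \mid D,X] = a(X) + bV$. The map $s \mapsto b$ extends to a continuous linear functional on $\cT$, since $b = \bE_0[V\,\bE_0[Us\mid D,X]]/J_0 = \bE_0[UVs]/J_0$ and $U$ and $V$ are bounded. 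So we must characterize all $\varphi$ with $\bE_0[\varphi s] = \bE_0[UVs]/J_0$ for all $s \in \cS$. By continuity, it is equivalent to require this for all $s \in \cT$, because $\cS \subseteq \cT$ and $\cT$ is the closed span of $\cS$.

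\emph{Sufficiency.} Let $\varphi = Ua(D,X)$ with $\bE_0[a\mid X]=0$ and $\bE_0[aV]=1$. Then $\varphi \in L_2^0(P_0)$, using $\sigma_0^2 \le \overline\sigma^2$ as in the proof of Proposition~\ref{prop:plm_orthogonal_complements}. Take any $s \in \cS \subseteq \cT$ and write $\bE_0[Us\mid D,X] = h(X) + bV$. Iterated expectations give
\[
\bE_0[\varphi s] = \bE_0[a\,(h + bV)] = \bE_0\bigl[h\,\bE_0[a\mid X]\bigr] + b\,\bE_0[aV] = b = \dot\theta_s .
\]
This is exactly the computation already used for the reverse inclusion of $\cT^\perp$, so $\varphi$ is an influence function.

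\emph{Necessity.} Let $\varphi$ be an influence function. By Lemma~\ref{lem:ortho_nuisance}, $\varphi \in \Lambda^\perp$. Proposition~\ref{prop:plm_orthogonal_complements}~\eqref{eq:plm_Lambda_perp} then gives $\varphi = Ua(D,X)$ with $\bE_0[a\mid X]=0$ and $a \in L_2(P_{0,DX})$. To pin down the normalization, use the score $s_\theta = UV/\sigma_0^2$ from the proof of Proposition~\ref{prop:plm_orthogonal_complements}. It satisfies $\bE_0[Us_\theta\mid D,X] = V$, so $\dot\theta_{s_\theta}=1$, and
\[
\bE_0[\varphi s_\theta] = \bE_0\bigl[a\,\bE_0[U^2\mid D,X]\,V/\sigma_0^2\bigr] = \bE_0[aV].
\]
Equating this with $1$ yields $\bE_0[aV]=1$.

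The main obstacle is that $s_\theta$, and nuisance directions such as $U h/\sigma_0^2$ with $h$ unbounded, are elements of $\cT$ but are not necessarily scores of actual regular submodels. The pathwise identity is stated only along regular submodels. I would resolve this with the continuity observation above: the identity $\bE_0[\varphi s] = \bE_0[UVs]/J_0$ holds on $\cS$, both sides are $L_2$-continuous in $s$, and so it extends to all of $\cT$. In particular it holds at $s_\theta$. Alternatively, $s_\theta$ is itself bounded under Assumption~\ref{ass:plm_standing} and is realized by Proposition~\ref{prop:plm_bounded_submodels} with $b=1$ and all other ingredients zero, which gives a genuine regular submodel directly. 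The appeal to Lemma~\ref{lem:ortho_nuisance} is already closed under $L_2$ limits, so no further approximation issue arises there.
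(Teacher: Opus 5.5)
Your proposal is correct and follows the paper's proof essentially step for step. For necessity you use Lemma~\ref{lem:ortho_nuisance} with the $\Lambda^\perp$ characterization and obtain the normalization $\bE_0[aV]=1$ by testing against $s_\theta = UV/\sigma_0^2$; for sufficiency you use iterated expectations against $\bE_0[Us \mid D,X] = h(X) + bV$. Your added observation also makes explicit a step the paper leaves implicit when it asserts that the derivative along $s_\theta$ is $1$: either $s_\theta$ is a bounded score realized by Proposition~\ref{prop:plm_bounded_submodels} with $b=1$, or $\dot\theta_s = \bE_0[UVs]/J_0$ is $L_2$-continuous, so the identity extends from $\cS$ to $\cT$.
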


\begin{proof}
If \(\varphi\) is an influence function, then it must be the case that \(\varphi \in \Lambda^\perp\). Proposition~\ref{prop:plm_orthogonal_complements} therefore implies that \(\varphi = Ua(D, X)\) with \(\bE_0[a \mid X] = 0\). To identify the normalization, we test \(\varphi\) against the score \(s_\theta := UV/\sigma_0^2(D, X) \in \cT\) from the proof of Proposition~\ref{prop:plm_orthogonal_complements}. Since the pathwise derivative of \(\theta\) along \(s_\theta\) is \(1\), we have
\[
1 = \bE_0[\varphi\,s_\theta] = \bE_0\!\left[Ua(D, X) \cdot \frac{UV}{\sigma_0^2(D, X)}\right] = \bE_0[a(D, X)V].
\]
Conversely, suppose~\eqref{eq:plm_all_IFs} holds. Let \(s \in \cT\). By Proposition~\ref{prop:plm_tangent_space},
\[
\bE_0[Us \mid D, X] = h(X) + bV
\]
for some \(h \in L_2(P_{0,X})\) and \(b \in \RR\), and the derivative of \(\theta\) along \(s\) is exactly \(b\). Then
\begin{align*}
\bE_0[\varphi\,s] &= \bE_0[a(D, X)\,\bE_0[Us \mid D, X]] = \bE_0[a(D, X)h(X)] + b\,\bE_0[a(D, X)V] = 0 + b = b.
\end{align*}
Thus \(\varphi\) yields the pathwise derivative along every tangent direction and is an influence function.
\end{proof}

\begin{proposition}[Efficient influence function]
\label{prop:plm_EIF}
Define
\[
r_0^\star(X) := \frac{\bE_0\bigl[D / \sigma_0^2(D, X) \mid X\bigr]}{\bE_0\bigl[1 / \sigma_0^2(D, X) \mid X\bigr]},
\]
and
\[
K_0 := \bE_0\!\left[\frac{(D - r_0^\star(X))V}{\sigma_0^2(D, X)}\right] = \bE_0\!\left[\frac{(D - r_0^\star(X))^2}{\sigma_0^2(D, X)}\right].
\]
Then the efficient influence function for \(\theta\) is given by
\begin{equation}
\label{eq:plm_EIF}
\varphi_{\mathrm{eff}}(Z) = K_0^{-1}\,\frac{D - r_0^\star(X)}{\sigma_0^2(D, X)}\,U.
\end{equation}
It is the unique influence function lying in \(\cT\).
\end{proposition}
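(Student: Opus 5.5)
The plan is to use the characterization of all influence functions in Proposition~\ref{prop:plm_all_IFs}, together with the description of \(\cT\) in Proposition~\ref{prop:plm_tangent_space}. Since the efficient influence function is the unique influence function lying in \(\cT\) (Remark~\ref{remark:uniqueness}), it suffices to show three things: \(\varphi_{\mathrm{eff}}\) is an influence function, it belongs to \(\cT\), and it is the only influence function in \(\cT\). Throughout, write \(a^\star(D,X) := K_0^{-1}(D - r_0^\star(X))/\sigma_0^2(D,X)\), so that \(\varphi_{\mathrm{eff}} = U a^\star\).

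\textbf{Step 1 (well-definedness).} Under Assumption~\ref{ass:plm_standing}, \(1/\sigma_0^2 \in [\overline\sigma^{-2}, \underline\sigma^{-2}]\) and \(|D|\le C_D\), so \(r_0^\star\) is bounded and \(a^\star\) is bounded provided \(K_0 \neq 0\). Write \(w := 1/\sigma_0^2\). By definition of \(r_0^\star\) we have \(\bE_0[w(D - r_0^\star)\mid X]=0\). Since \(V - (D - r_0^\star) = r_0^\star - m_0\) depends only on \(X\), this gives the equality of the two expressions for \(K_0\). Positivity of \(K_0\) follows from \(K_0 \ge \overline\sigma^{-2}\bE_0[(D-r_0^\star)^2] \ge \overline\sigma^{-2}\bE_0[V^2]\). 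The last inequality holds because \(m_0\) minimizes \(\bE_0[(D-g(X))^2]\) over functions of \(X\), and \(J_0>0\).

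\textbf{Step 2 (influence function).} Check the two conditions of Proposition~\ref{prop:plm_all_IFs}. First, \(\bE_0[a^\star\mid X] = K_0^{-1}\bE_0[w(D-r_0^\star)\mid X] = 0\) by construction of \(r_0^\star\). Second, \(\bE_0[a^\star V] = K_0^{-1}K_0 = 1\) by the first expression for \(K_0\). Membership in \(L_2^0(P_0)\) follows from boundedness.

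\textbf{Step 3 (membership in \(\cT\)).} Compute \(\bE_0[U\varphi_{\mathrm{eff}}\mid D,X] = a^\star\sigma_0^2 = K_0^{-1}(D - r_0^\star(X))\). Rewrite this as \(K_0^{-1}V + K_0^{-1}(m_0(X) - r_0^\star(X))\), which has the form \(a(X)+bV\) with bounded \(a\) and \(b=K_0^{-1}\). Proposition~\ref{prop:plm_tangent_space} then gives \(\varphi_{\mathrm{eff}}\in\cT\).

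\textbf{Step 4 (uniqueness).} If \(\varphi_1,\varphi_2\) are influence functions in \(\cT\), then \(\bE_0[(\varphi_1-\varphi_2)s]=0\) for all scores \(s\), hence for all \(s\in\cT\) by continuity. Thus \(\varphi_1-\varphi_2 \in \cT\cap\cT^\perp=\{0\}\). Alternatively, one can argue via Proposition~\ref{prop:plm_orthogonal_complements}: any two influence functions differ by \(Uc\) with \(c\) satisfying the \(\cT^\perp\) constraints.

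The only mildly delicate point is Step 1, namely the identity relating the two forms of \(K_0\) and its strict positivity. Everything else is a direct application of the earlier characterizations.
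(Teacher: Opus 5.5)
Your proposal is correct and follows essentially the paper's route. You verify the two conditions of Proposition~\ref{prop:plm_all_IFs} for $a^\star$. You then show $\bE_0[U\varphi_{\mathrm{eff}}\mid D,X]=K_0^{-1}V+K_0^{-1}(m_0-r_0^\star)\in\cA_{\mathrm{plm}}$, which gives membership in $\cT$ via Proposition~\ref{prop:plm_tangent_space}.

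The differences are supplementary. The paper additionally proves a minimum-variance decomposition $\bE_0[\varphi^2]=\bE_0[\varphi_{\mathrm{eff}}^2]+\bE_0[\sigma_0^2\delta^2]$ and simply cites uniqueness of the influence function in $\cT$. You instead prove that uniqueness directly via $\cT\cap\cT^\perp=\{0\}$, and you also supply the identity between the two forms of $K_0$ and the bound $K_0\ge\overline\sigma^{-2}J_0>0$, which the paper leaves implicit.
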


\begin{proof}
Set \(a_{\mathrm{eff}}(D, X) := K_0^{-1}(D - r_0^\star(X))/\sigma_0^2(D, X)\). Since \(r_0^\star\) is the weighted conditional mean of \(D\), we have
\[
\bE_0\!\left[\frac{D - r_0^\star(X)}{\sigma_0^2(D, X)} \;\middle|\; X\right] = 0,
\]
so \(\bE_0[a_{\mathrm{eff}} \mid X] = 0\). Also, by definition of \(K_0\),
\[
\bE_0[a_{\mathrm{eff}} V] = 1.
\]
Thus, \(\varphi_{\mathrm{eff}} := Ua_{\mathrm{eff}}\) is an influence function. Next, let \(\varphi = Ua\) be any other influence function and write \(\delta := a - a_{\mathrm{eff}}\). Then
\[
\bE_0[\delta \mid X] = 0, \qquad \bE_0[\delta V] = 0.
\]
Expanding the variance,
\begin{align*}
\bE_0\bigl[\sigma_0^2 a^2\bigr] &= \bE_0\bigl[\sigma_0^2 a_{\mathrm{eff}}^2\bigr] + \bE_0\bigl[\sigma_0^2 \delta^2\bigr] + 2\bE_0\bigl[\sigma_0^2 a_{\mathrm{eff}} \delta\bigr] \\
&= \bE_0\bigl[\sigma_0^2 a_{\mathrm{eff}}^2\bigr] + \bE_0\bigl[\sigma_0^2 \delta^2\bigr] + 2K_0^{-1}\bE_0[(D - r_0^\star(X))\delta(D, X)].
\end{align*}
The cross-term vanishes as
\begin{align*}
\bE_0[(D - r_0^\star(X))\delta(D, X)] &= \bE_0[V\delta(D, X)] + \bE_0[(m_0(X) - r_0^\star(X))\delta(D, X)] = 0 + 0,
\end{align*}
where the first term is zero by \(\bE_0[\delta V] = 0\) and the second because \(m_0 - r_0^\star\) is a function of \(X\) and \(\bE_0[\delta \mid X] = 0\). It follows that
\[
\bE_0\bigl[\varphi^2\bigr] = \bE_0\bigl[\varphi_{\mathrm{eff}}^2\bigr] + \bE_0\bigl[\sigma_0^2 \delta^2\bigr] \ge \bE_0\bigl[\varphi_{\mathrm{eff}}^2\bigr],
\]
with equality if and only if \(\delta = 0\) almost surely. Therefore, \eqref{eq:plm_EIF} is the minimum-variance influence function. We also have
\[
\bE_0[U\varphi_{\mathrm{eff}} \mid D, X] = K_0^{-1}(D - r_0^\star(X)) = K_0^{-1}(m_0(X) - r_0^\star(X)) + K_0^{-1}V,
\]
which belongs to \(\cA_{\mathrm{plm}}\). By Proposition~\ref{prop:plm_tangent_space}, we know that \(\varphi_{\mathrm{eff}} \in \cT\). Since every influence function lies in \(\Lambda^\perp\) and the efficient influence function is the unique element of \(\Lambda^\perp \cap \cT\), this completes the proof.
\end{proof}

\subsubsection{The Residual-on-residual Moment}
\label{subsubsec:plm_ror}

The residual-on-residual moment is given by \citep{chernozhukov_doubledebiased_2018}
\begin{equation}
\label{eq:plm_ror}
\psi_{\mathrm{ror}}(Z;\, \theta, \ell, m) := (D - m(X))\left[Y - \ell(X) - \theta(D - m(X))\right].
\end{equation}
We now show that it is orthogonal with respect to the nuisance parameterization \(\eta = (\ell, m)\), identify the influence function it induces, and compare it to the efficient influence function.

\begin{proposition}[Neyman orthogonality and mean-zero]
\label{prop:plm_ror_orthogonality}
The residual-on-residual moment is correctly specified and Neyman orthogonal with respect to the nuisance parameterization \(\eta = (\ell, m)\).
\end{proposition}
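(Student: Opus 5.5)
The proof has two parts: correct specification in the sense of Definition~\ref{def:correct_spec}, and the vanishing of the Gâteaux derivatives in each nuisance coordinate as in Definition~\ref{def:neyman_orthogonality}. Throughout I use the identities $\bE_P[D - m_P(X) \mid X] = 0$ and $\bE_P[Y \mid D,X] = \ell_P(X) + \theta_P(D - m_P(X))$, which hold for every $P$ in the partially linear model.

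\textbf{Correct specification.} Take $P$ in the model and write $V_P = D - m_P(X)$ and $U_P = Y - \ell_P(X) - \theta_P V_P$. Then
\[\psi_{\mathrm{ror}}(Z;\theta_P,\ell_P,m_P) = V_P U_P,\]
and by the tower property
\[\bE_P[V_P U_P] = \bE_P\bigl[V_P\,\bE_P[U_P \mid D,X]\bigr] = 0.\]
This holds for every $P$ in the model, not only near $P_0$. It is the same computation already carried out for $N(P) = \theta_P J(P)$ in Section~\ref{subsubsec:plm_hellinger}.

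\textbf{Neyman orthogonality.} Fix $\theta_0$ and perturb $(\ell,m)\to(\ell_0+th_\ell,\, m_0+th_m)$ with $h_\ell,h_m\in L_\infty(P_{0,X})$. Writing $V = D - m_0(X)$ and $U$ as in the setup, the perturbed moment is
\[\psi_t = (V - th_m)\bigl[U - th_\ell + t\theta_0 h_m\bigr].\]
This is a quadratic polynomial in $t$ with bounded coefficients, since $Y$, $D$, $h_\ell$ and $h_m$ are all bounded. Differentiation under $\bE_0$ is therefore immediate, and it also gives the Fréchet differentiability of Assumption~\ref{assumption:frechet_m} in $L_2(P_0)$. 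The derivative at $t=0$ is
\[\bE_0\bigl[-h_m U + V(-h_\ell + \theta_0 h_m)\bigr].\]
The first term vanishes because $\bE_0[U\mid D,X]=0$. The second vanishes because $\bE_0[V\mid X]=0$ and $-h_\ell+\theta_0 h_m$ depends only on $X$. Since the Gâteaux derivative is linear in $(h_\ell,h_m)$, it vanishes in every direction of $\dot\cH$.

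I do not expect a real obstacle here. The only point needing care is that the Gâteaux derivative is taken under the fixed measure $P_0$, as Remark~\ref{remark:neyman_connection} stresses. Consequently one may use only the $P_0$-identities $\bE_0[U\mid D,X]=0$ and $\bE_0[V\mid X]=0$, and must not invoke model membership of the perturbed nuisance.

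For completeness I would also record $G = \bE_0[\partial_\theta \psi_{\mathrm{ror}}] = -\bE_0[V^2] = -J_0 \neq 0$. This sets up the later use of Theorem~\ref{thm:forward}, which yields the induced influence function $J_0^{-1}VU$.
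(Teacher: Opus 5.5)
Your proof is correct and follows the paper's argument: correct specification comes from the tower property $\bE_P[V_P U_P] = \bE_P\bigl[V_P\,\bE_P[U_P \mid D,X]\bigr] = 0$, and orthogonality comes from differentiating the quadratic-in-$t$ perturbed moment and eliminating the two terms via $\bE_0[U \mid D,X] = 0$ and $\bE_0[V \mid X] = 0$. Your aside that the polynomial-in-$t$ structure also gives Fr\'echet differentiability is not needed for this statement, and as phrased it only shows Gâteaux differentiability along lines; the proper justification, which the paper gives later, is the joint remainder bound $C\|(\delta\theta,\delta\ell,\delta m)\|^2$.
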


\begin{proof}
Let \(P\) be any distribution in the partially linear model, and write
\[
U_P := Y - \ell_P(X) - \theta(P)(D - m_P(X)), \qquad V_P := D - m_P(X).
\]
Then \(\bE_P[U_P \mid D, X] = 0\) and \(\bE_P[V_P \mid X] = 0\). Therefore,
\[
\bE_P[\psi_{\mathrm{ror}}(Z;\, \theta(P), \ell_P, m_P)] = \bE_P[V_P U_P] = \bE_P[V_P\,\bE_P[U_P \mid D, X]] = 0.
\]
Next, we fix \(P_0\) and perturb only the nuisance parameter around \(\eta_0 = (\ell_0, m_0)\):
\[
\ell_t = \ell_0 + th_\ell, \qquad m_t = m_0 + th_m.
\]
At \(\theta = \theta_0\),
\begin{align*}
Y - \ell_t(X) - \theta_0(D - m_t(X)) &= U - th_\ell(X) + t\theta_0 h_m(X),\\
D - m_t(X) &= V - th_m(X).
\end{align*}
Hence, it follows that
\begin{align*}
&\frac{d}{dt}\bE_0[\psi_{\mathrm{ror}}(Z;\, \theta_0, \ell_t, m_t)]\bigg|_{t=0} \\
&= \bE_0[-h_m(X)U + V(-h_\ell(X) + \theta_0 h_m(X))] \\
&= -\bE_0[h_m(X)\,\bE_0[U \mid X]] - \bE_0[h_\ell(X)\,\bE_0[V \mid X]] + \theta_0\,\bE_0[h_m(X)\,\bE_0[V \mid X]] \\
&= 0,
\end{align*}
and the residual-on-residual moment is Neyman orthogonal for every nuisance direction \((h_\ell, h_m)\).
\end{proof}

\begin{proposition}[The induced influence function]
\label{prop:plm_ror_IF}
The Jacobian of the residual-on-residual moment with respect to \(\theta\) is
\[
G_{\mathrm{ror}} = -\bE_0\bigl[V^2\bigr] = -J_0,
\]
and the induced influence function is
\begin{equation}
\label{eq:plm_ror_IF}
\varphi_{\mathrm{ror}}(Z) = \frac{VU}{J_0}.
\end{equation}
This is a valid influence function for \(\theta\).
\end{proposition}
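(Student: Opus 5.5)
The plan has two parts. First compute the Jacobian directly from \eqref{eq:plm_ror}. Then show validity of \(\varphi_{\mathrm{ror}}\) either by the characterization in Proposition~\ref{prop:plm_all_IFs} or by Theorem~\ref{thm:forward}. The route through Proposition~\ref{prop:plm_all_IFs} is the cleanest and I will take it as primary.

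\emph{Jacobian.} Since \(\psi_{\mathrm{ror}}\) is affine in \(\theta\) with coefficient \(-(D-m(X))^2\), we have \(\partial_\theta\psi_{\mathrm{ror}}(Z;\theta_0,\ell_0,m_0) = -V^2\). Hence \(G_{\mathrm{ror}} = -\bE_0[V^2] = -J_0\), which is nonzero by Assumption~\ref{ass:plm_standing}. Evaluating the moment at the truth gives \(\psi_{\mathrm{ror}}(Z;\theta_0,\ell_0,m_0) = V\,(Y-\ell_0(X)-\theta_0V) = VU\). Therefore \(-G_{\mathrm{ror}}^{-1}\psi_{\mathrm{ror}}(Z;\theta_0,\eta_0) = VU/J_0\), which is the form the influence function should take.

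\emph{Validity.} Write \(\varphi_{\mathrm{ror}} = U a(D,X)\) with \(a := V/J_0\), and check the two conditions in \eqref{eq:plm_all_IFs}:
\[
\bE_0[a\mid X] = \bE_0[V\mid X]/J_0 = 0, \qquad \bE_0[aV] = \bE_0[V^2]/J_0 = 1.
\]
We also need \(\varphi_{\mathrm{ror}}\in L_2^0(P_0)\). Mean zero follows from \(\bE_0[U\mid D,X]=0\). Square integrability follows from boundedness of \(U\) and \(V\), which holds because \(|Y|\le C_Y\), \(|D|\le C_D\), and the regression functions are bounded accordingly. Proposition~\ref{prop:plm_all_IFs} then gives that \(\varphi_{\mathrm{ror}}\) is an influence function.

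As a cross-check, the same conclusion follows from Theorem~\ref{thm:forward}. That route needs the following inputs:
\begin{itemize}
\item Correct specification and Neyman orthogonality, from Proposition~\ref{prop:plm_ror_orthogonality}.
\item The nondegenerate Jacobian computed above.
\item The Hellinger Lipschitz property, from Proposition~\ref{prop:plm_Hellinger_Lipschitz}.
\item The dense bounded score class \(\cT_b\) with explicit submodels, from Lemma~\ref{lem:plm_bounded_density} and Proposition~\ref{prop:plm_bounded_submodels}.
\item Fr\'echet differentiability of \(\psi_{\mathrm{ror}}\) into \(L_2(P_0)\). This holds because the moment is polynomial in \((\theta,\ell,m)\) with bounded data, so the remainder is quadratic in the sup norm.
\item Smoothness of the coordinate paths \((\theta_t,\ell_t,m_t)\) along the constructed submodels, where \(\theta_t\) and \(m_t\) are linear in \(t\) by construction.
\end{itemize}

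The main obstacle in this second route is verifying coordinate smoothness and the uniform second-moment condition of Lemma~\ref{lem:differentiation_varying_f}. The difficulty is \(\ell_t = \bE_{P_t}[Y\mid X]\), because it is induced by the \(Y\)-tilt and \(D\)-tilt jointly rather than prescribed directly. This is the main reason to rely on Proposition~\ref{prop:plm_all_IFs}, which avoids the issue entirely.

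Finally, as a remark I would note why \(\varphi_{\mathrm{ror}}\) is generally inefficient. The difference \(a - a_{\mathrm{eff}}\) satisfies the constraints defining \(\cT^\perp\) in Proposition~\ref{prop:plm_orthogonal_complements}. It vanishes when \(\sigma_0^2\) is constant, because then \(r_0^\star = m_0\) and \(K_0 = J_0/\sigma_0^2\).
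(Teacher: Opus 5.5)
Your primary route is the paper's proof. You compute $\partial_\theta\psi_{\mathrm{ror}} = -V^2$, evaluate at the truth to get $VU/J_0$, and invoke Proposition~\ref{prop:plm_all_IFs} with $a = V/J_0$; your explicit check that $\varphi_{\mathrm{ror}} \in L_2^0(P_0)$ is a small addition the paper leaves implicit, and your cross-check via Theorem~\ref{thm:forward} is carried out in the paper in Section~\ref{subsubsec:plm_forward}. One correction to your side remark: $\ell_t$ is no obstacle there, since the submodels of Proposition~\ref{prop:plm_bounded_submodels} give $\bE_{P_t}[Y \mid X] = \ell_t + \theta_t(m_t - m_t) = \ell_t$, which is affine in $t$ by construction.
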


\begin{proof}
Differentiating~\eqref{eq:plm_ror} with respect to \(\theta\) at the truth gives
\[
\partial_\theta \psi_{\mathrm{ror}}(Z;\, \theta_0, \ell_0, m_0) = -V^2, \qquad G_{\mathrm{ror}} = -\bE_0\bigl[V^2\bigr] = -J_0.
\]
Hence, the induced influence function is
\[
-G_{\mathrm{ror}}^{-1}\,\psi_{\mathrm{ror}}(Z;\, \theta_0, \ell_0, m_0) = \frac{VU}{J_0}.
\]
To see that it is a valid influence function, we apply Proposition~\ref{prop:plm_all_IFs} with \(a(D, X) = V/J_0\). Then \(\bE_0[a \mid X] = J_0^{-1}\bE_0[V \mid X] = 0\) and \(\bE_0[aV] = J_0^{-1}\bE_0\bigl[V^2\bigr] = 1\), so~\eqref{eq:plm_ror_IF} is an influence function.
\end{proof}

\begin{proposition}[Decomposition of \(\varphi_{\mathrm{ror}}\) into  \(\varphi_\mathrm{eff}\) plus a \(\cT^\perp\) remainder]
\label{prop:plm_ror_decomposition}
Let
\begin{equation}
\label{eq:plm_Delta_def}
\Delta(Z) := \varphi_{\mathrm{ror}}(Z) - \varphi_{\mathrm{eff}}(Z) = U\Delta_a(D, X), \qquad \Delta_a(D, X) := \frac{V}{J_0} - K_0^{-1}\,\frac{D - r_0^\star(X)}{\sigma_0^2(D, X)}.
\end{equation}
Then \(\Delta \in \cT^\perp\). Consequently,
\begin{equation}
\label{eq:plm_IF_decomposition}
\varphi_{\mathrm{ror}} = \varphi_{\mathrm{eff}} + \Delta, \qquad \varphi_{\mathrm{eff}} \in \cT, \qquad \Delta \in \cT^\perp.
\end{equation}
In particular,
\begin{equation}
\label{eq:plm_Pythagorean}
\bE_0\bigl[\varphi_{\mathrm{ror}}^2\bigr] = \bE_0\bigl[\varphi_{\mathrm{eff}}^2\bigr] + \bE_0\bigl[\Delta^2\bigr] = \bE_0\bigl[\varphi_{\mathrm{eff}}^2\bigr] + \bE_0\bigl[\sigma_0^2(D, X)\,\Delta_a(D, X)^2\bigr].
\end{equation}
Thus, the residual-on-residual moment is generally inefficient, and the efficiency gap is the \(\cT^\perp\) component.
\end{proposition}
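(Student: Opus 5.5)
The plan is to verify that \(\Delta\) satisfies the two defining constraints of \(\cT^\perp\) in Proposition~\ref{prop:plm_orthogonal_complements}, and then obtain the decomposition and the Pythagorean identity from orthogonality.

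\textbf{Membership in \(\cT^\perp\).} First, \(\Delta = U\Delta_a(D,X)\) with \(\Delta_a\) the difference of the two \(a\)-functions: \(a_{\mathrm{ror}} := V/J_0\) from Proposition~\ref{prop:plm_ror_IF}, and \(a_{\mathrm{eff}}\) from Proposition~\ref{prop:plm_EIF}. Both are in \(L_2(P_{0,DX})\): \(V\) is bounded, \(\sigma_0^2 \ge \underline\sigma^2\), and \(r_0^\star\) is bounded by \(C_D\), being a weighted conditional mean of \(D\). Both \(a_{\mathrm{ror}}\) and \(a_{\mathrm{eff}}\) index influence functions through Proposition~\ref{prop:plm_all_IFs}, so each has conditional mean zero given \(X\) and each satisfies \(\bE_0[aV]=1\). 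Subtracting gives
\[
\bE_0[\Delta_a \mid X] = 0, \qquad \bE_0[\Delta_a V] = 1 - 1 = 0.
\]
Hence \(\Delta \in \cT^\perp\) by \eqref{eq:plm_T_perp}.

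\textbf{Decomposition and Pythagorean identity.} The decomposition \(\varphi_{\mathrm{ror}} = \varphi_{\mathrm{eff}} + \Delta\) holds by the definition of \(\Delta\). The membership \(\varphi_{\mathrm{eff}} \in \cT\) was established in Proposition~\ref{prop:plm_EIF}. Since \(\varphi_{\mathrm{eff}} \in \cT\) and \(\Delta \in \cT^\perp\), we have \(\bE_0[\varphi_{\mathrm{eff}}\Delta]=0\), and expanding the square yields the first equality in \eqref{eq:plm_Pythagorean}. For the second equality, condition on \((D,X)\):
\[
\bE_0[U^2\Delta_a^2] = \bE_0\bigl[\Delta_a^2\,\bE_0[U^2\mid D,X]\bigr] = \bE_0[\sigma_0^2\Delta_a^2].
\]
This also matches the variance expansion in the proof of Proposition~\ref{prop:plm_EIF} with \(\delta = \Delta_a\).

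\textbf{Main point to state carefully.} The statement ``generally inefficient'' should be read as follows: the gap \(\bE_0[\sigma_0^2\Delta_a^2]\) is positive unless \(\Delta_a = 0\) a.s. That happens, for instance, under homoskedasticity, where \(\sigma_0^2\) is constant, \(r_0^\star = m_0\), and \(K_0 = J_0/\sigma_0^2\), so that \(\Delta_a = 0\). I would add this remark rather than claim strict inefficiency. No step here is a real obstacle; the only care needed is confirming the \(L_2\) membership of \(\Delta_a\) so that \eqref{eq:plm_T_perp} applies.
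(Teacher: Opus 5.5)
Your proposal is correct and follows essentially the same route as the paper. You verify the two constraints in \eqref{eq:plm_T_perp} for \(\Delta_a\) by inheriting them from the normalizations \(\bE_0[a \mid X]=0\) and \(\bE_0[aV]=1\) of \(V/J_0\) and \(a_{\mathrm{eff}}\), whereas the paper recomputes them directly from the definitions of \(r_0^\star\) and \(K_0\). The remaining steps are the same as the paper's: \(\varphi_{\mathrm{eff}} \in \cT\) from Proposition~\ref{prop:plm_EIF}, orthogonality, and conditioning on \((D,X)\) for the Pythagorean identity.
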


\begin{proof}
The decomposition~\eqref{eq:plm_IF_decomposition} is immediate from the definition of \(\Delta_a\). To show \(\Delta \in \cT^\perp\), Proposition~\ref{prop:plm_orthogonal_complements} says it suffices to verify
\[
\bE_0[\Delta_a(D, X) \mid X] = 0, \qquad \bE_0[\Delta_a(D, X)V] = 0.
\]
The first identity holds because
\[
\bE_0\!\left[\frac{V}{J_0} \;\middle|\; X\right] = 0, \qquad \bE_0\!\left[\frac{D - r_0^\star(X)}{\sigma_0^2(D, X)} \;\middle|\; X\right] = 0
\]
by the definition of \(r_0^\star\). The second holds because
\[
\bE_0[\Delta_a V] = \frac{\bE_0\bigl[V^2\bigr]}{J_0} - K_0^{-1}\bE_0\!\left[\frac{(D - r_0^\star(X))V}{\sigma_0^2(D, X)}\right] = 1 - 1 = 0.
\]
Therefore \(\Delta \in \cT^\perp\).

Since \(\varphi_{\mathrm{eff}} \in \cT\) by Proposition~\ref{prop:plm_EIF}, orthogonality gives \(\bE_0[\varphi_{\mathrm{eff}}\,\Delta] = 0\). Expanding \(\bE_0\bigl[(\varphi_{\mathrm{eff}} + \Delta)^2\bigr]\) yields~\eqref{eq:plm_Pythagorean}. The final expression follows from \(\Delta^2 = U^2\Delta_a^2\) and \(\bE_0\bigl[U^2\Delta_a^2\bigr] = \bE_0\bigl[\sigma_0^2(D, X)\,\Delta_a(D, X)^2\bigr]\). If the gap term is nonzero, then the residual-on-residual influence function has strictly larger variance than the efficient one.
\end{proof}

\begin{corollary}[Homoskedastic special case]
\label{cor:plm_homoskedastic}
If \(\sigma_0^2(D, X) \equiv \sigma^2\) is constant, then
\[
r_0^\star(X) = \bE_0[D \mid X] = m_0(X), \qquad K_0 = \frac{J_0}{\sigma^2},
\]
and therefore
\[
\varphi_{\mathrm{eff}}(Z) = \frac{VU}{J_0} = \varphi_{\mathrm{ror}}(Z).
\]
So the residual-on-residual moment is efficient in the homoskedastic partially linear model.
\end{corollary}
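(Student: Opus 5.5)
Under homoskedasticity the weight $1/\sigma_0^2(D,X)\equiv 1/\sigma^2$ is a positive constant. It therefore factors out of both conditional expectations defining $r_0^\star$ and cancels in their ratio. This leaves
\[
r_0^\star(X) = \frac{\sigma^{-2}\bE_0[D\mid X]}{\sigma^{-2}} = m_0(X),
\]
so $D - r_0^\star(X) = V$.

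Substituting into the definition of $K_0$ in Proposition~\ref{prop:plm_EIF} gives
\[
K_0 = \bE_0\bigl[V^2/\sigma^2\bigr] = J_0/\sigma^2,
\]
which is positive because $J_0>0$ by Assumption~\ref{ass:plm_standing}. Plugging $r_0^\star = m_0$, $K_0 = J_0/\sigma^2$ and $\sigma_0^2\equiv\sigma^2$ into \eqref{eq:plm_EIF} yields
\[
\varphi_{\mathrm{eff}} = \frac{\sigma^2}{J_0}\cdot\frac{V}{\sigma^2}\,U = \frac{VU}{J_0},
\]
and this is $\varphi_{\mathrm{ror}}$ from \eqref{eq:plm_ror_IF} in Proposition~\ref{prop:plm_ror_IF}.

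For the efficiency statement I would reach the same conclusion through Proposition~\ref{prop:plm_ror_decomposition}. There $\Delta_a = V/J_0 - (\sigma^2/J_0)\,V/\sigma^2 = 0$, so the gap term $\bE_0[\sigma_0^2\Delta_a^2]$ in \eqref{eq:plm_Pythagorean} vanishes. Hence the residual-on-residual influence function has $\cT^\perp$ component zero, equals the unique influence function in $\cT$, and is efficient.

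There is no real obstacle here. The only point to check is that the constant weight may be pulled out of the conditional expectations, which needs $\sigma^2>0$; that is guaranteed by the lower bound $\underline\sigma^2$ in Assumption~\ref{ass:plm_standing}.
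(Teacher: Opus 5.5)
Your proposal is correct and matches the paper's proof: cancel the constant weight to get $r_0^\star = m_0$, compute $K_0 = J_0/\sigma^2$, and substitute into \eqref{eq:plm_EIF} to obtain $\varphi_{\mathrm{eff}} = VU/J_0 = \varphi_{\mathrm{ror}}$. The extra appeal to Proposition~\ref{prop:plm_ror_decomposition} (showing $\Delta_a = 0$) is redundant once that equality is shown, but it is consistent and harmless.
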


\begin{proof}
If \(\sigma_0^2(D, X) \equiv \sigma^2\), then
\[
r_0^\star(X) = \frac{\bE_0\bigl[D/\sigma^2 \mid X\bigr]}{\bE_0\bigl[1/\sigma^2 \mid X\bigr]} = \bE_0[D \mid X] = m_0(X).
\]
Hence
\[
K_0 = \bE_0\!\left[\frac{(D - m_0(X))V}{\sigma^2}\right] = \frac{\bE_0\bigl[V^2\bigr]}{\sigma^2} = \frac{J_0}{\sigma^2}.
\]
Substituting into~\eqref{eq:plm_EIF} gives
\[
\varphi_{\mathrm{eff}} = \frac{VU}{J_0} = \varphi_{\mathrm{ror}}.
\]
\end{proof}

\begin{remark}[A broader class of orthogonal moments]
\label{rem:plm_broader}
The residual-on-residual moment is only one member of a larger class. Let \(a_0(D, X)\) be any square-integrable weight satisfying
\[
\bE_0[a_0(D, X) \mid X] = 0, \qquad \bE_0[a_0(D, X)V] \neq 0.
\]
Then it follows that
\[
\psi_{a_0}(Z;\, \theta, \ell, m) := a_0(D, X)(Y - \ell(X) - \theta(D - m(X)))
\]
is correctly specified, Neyman orthogonal for \(\eta = (\ell, m)\), and induces the influence function
\[
\varphi_{a_0}(Z) = \frac{Ua_0(D, X)}{\bE_0[a_0(D, X)V]}.
\]
The residual-on-residual moment corresponds to the special case of \(a_0 = V\).
\end{remark}

\subsubsection{Forward Direction}
\label{subsubsec:plm_forward}

We now verify Assumptions~\ref{assumption:correct_spec}--\ref{assumption:hellinger} of Theorem~\ref{thm:forward} for the moment \(\psi_{\mathrm{ror}}\). To start, define the dense score class
\(
\cS := \cT_b,
\)
where \(\cT_b\) is the bounded score class from Lemma~\ref{lem:plm_bounded_density}. By the same lemma and Proposition~\ref{prop:plm_tangent_space}, the \(L_2(P_0)\)-closure of \(\cS\) is \(\cT\).

\textbf{Assumption~\ref{assumption:correct_spec}.} This is Proposition~\ref{prop:plm_ror_orthogonality}.

\textbf{Assumption~\ref{assumption:coord_smooth_all}.} Take any \(s \in \cS\). By definition, there exist bounded objects \(s_X, s_D, a, w, b\) satisfying the conditions of Proposition~\ref{prop:plm_bounded_submodels} such that
\[
s = s_X + s_D + \frac{U(a + bV)}{\sigma_0^2} + w.
\]
The proposition furnishes an explicit regular submodel \(t \mapsto P_{t,s}\) with score \(s\) along which
\[
\theta_{t,s} = \theta_0 + tb, \qquad m_{t,s} = m_0 + tk_s, \qquad \ell_{t,s} = \ell_0 + t(a + \theta_0 k_s),
\]
where
\[
k_s(X) := \bE_0[Vs_D(D, X) \mid X].
\]
Hence the induced coordinate path is differentiable at \(0\) with
\[\dot\theta_{0,s} = b, \qquad \dot m_{0,s} = k_s, \qquad \dot\ell_{0,s} = a + \theta_0 k_s,\]
which is exactly the smoothness required in Assumption~\ref{assumption:coord_smooth_all}.

\textbf{Assumption~\ref{assumption:frechet_all}.} Equip \(\RR \times \cH\) with the norm
\[
\|(\delta\theta, \delta\ell, \delta m)\| := |\delta\theta| + \|\delta\ell\|_\infty + \|\delta m\|_\infty.
\]
At \((\theta_0, \ell_0, m_0)\), a direct expansion gives
\begin{align*}
&\psi_{\mathrm{ror}}(Z;\, \theta_0 + \delta\theta, \ell_0 + \delta\ell, m_0 + \delta m) - \psi_{\mathrm{ror}}(Z;\, \theta_0, \ell_0, m_0) \\
&\qquad = -V^2\delta\theta - V\delta\ell(X) + (\theta_0 V - U)\delta m(X) + R_{\mathrm{ror}}(Z;\, \delta\theta, \delta\ell, \delta m),
\end{align*}
where the remainder is
\[
R_{\mathrm{ror}} = \delta m\,\delta\ell + 2\delta\theta\,V\,\delta m - \theta_0(\delta m)^2 - \delta\theta(\delta m)^2.
\]
Therefore the Fr\'echet derivative is the bounded linear map
\begin{align*}
D_\theta\,\psi_{\mathrm{ror},0}(\delta\theta) &= -V^2\,\delta\theta, \\
D_\eta\,\psi_{\mathrm{ror},0}[\delta\ell, \delta m] &= -V\,\delta\ell(X) + (\theta_0 V - U)\,\delta m(X).
\end{align*}
Since \(U\) and \(V\) are bounded,
\[
\|R_{\mathrm{ror}}(\delta\theta, \delta\ell, \delta m)\|_{L_2(P_0)} \le C\|(\delta\theta, \delta\ell, \delta m)\|^2
\]
for some finite constant \(C\). Thus the map \((\theta, \ell, m) \mapsto \psi_{\mathrm{ror}}(\cdot;\, \theta, \ell, m) \in L_2(P_0)\) is Fr\'echet differentiable.

\textbf{Assumption~\ref{assumption:regularity_submodels}.} Fix \(s \in \cS\) and its associated submodel \(t \mapsto P_{t,s}\). Let
\[
f_{t,s}(Z) := \psi_{\mathrm{ror}}(Z;\, \theta_{t,s}, \ell_{t,s}, m_{t,s}).
\]
Since the coordinate paths are affine in \(t\), the same expansion as above gives
\[
f_{t,s}(Z) = VU + t\,\dot f_{0,s}(Z) + t^2 R_{t,s}(Z),
\]
with
\begin{equation}
\label{eq:plm_fdot_ror}
\dot f_{0,s}(Z) = -V^2 b - V(a + \theta_0 k_s) + (\theta_0 V - U)k_s = -V^2 b - Va - Uk_s,
\end{equation}
and with \(R_{t,s}\) uniformly bounded since all elements are bounded. Hence
\[
\left\|\frac{f_{t,s} - f_{0,s}}{t} - \dot f_{0,s}\right\|_{L_2(P_0)} \to 0,
\]
and \(f_{0,s} = VU\) is bounded, and the conditions of Lemma~\ref{lem:differentiation_varying_f} hold for each \(s \in \cS\) and its corresponding submodel from Assumption~\ref{assumption:coord_smooth_all}.

\textbf{Assumption~\ref{assumption:jacobian}.} This is  Proposition~\ref{prop:plm_ror_IF}. 

\textbf{Assumption~\ref{assumption:neyman}.} This is Proposition~\ref{prop:plm_ror_orthogonality}.

\textbf{Assumption~\ref{assumption:hellinger}.} This is Proposition~\ref{prop:plm_Hellinger_Lipschitz}.

Since all the assumptions of Theorem~\ref{thm:forward} are verified, we conclude that \(\theta\) is pathwise differentiable at \(P_0\) with influence function
\[
-G^{-1}\psi_{\mathrm{ror}}(Z;\, \theta_0, \ell_0, m_0) = \frac{VU}{J_0} = \varphi_{\mathrm{ror}}(Z),
\]
which recovers Proposition~\ref{prop:plm_ror_IF}.

\subsubsection{Reverse Direction}
\label{subsubsec:plm_reverse}

To verify the reverse implication, it is most convenient to freeze the efficient weight at \(P_0\) and work with the moment
\begin{equation}
\label{eq:plm_frozen_eff}
\psi_{\mathrm{eff},0}(Z;\, \theta, \ell, m) := a_{\mathrm{eff},0}(D, X)(Y - \ell(X) - \theta(D - m(X))),
\end{equation}
where
\[
a_{\mathrm{eff},0}(D, X) := K_0^{-1}\,\frac{D - r_0^\star(X)}{\sigma_0^2(D, X)}.
\]
At the truth,
\[
\psi_{\mathrm{eff},0}(Z;\, \theta_0, \ell_0, m_0) = \varphi_{\mathrm{eff}}(Z).
\]

\textbf{Assumption~\ref{assumption:correct_spec}.} For any distribution \(P\) in the partially linear model,
\[
\bE_P[\psi_{\mathrm{eff},0}(Z;\, \theta(P), \ell_P, m_P)] = \bE_P[a_{\mathrm{eff},0}(D, X)\,U_P] = \bE_P[a_{\mathrm{eff},0}(D, X)\,\bE_P[U_P \mid D, X]] = 0.
\]
So the moment is correctly specified.

\textbf{Assumption~\ref{assumption:frechet_all}.} At \((\theta_0, \ell_0, m_0)\),
\begin{align*}
&\psi_{\mathrm{eff},0}(Z;\, \theta_0 + \delta\theta, \ell_0 + \delta\ell, m_0 + \delta m) - \psi_{\mathrm{eff},0}(Z;\, \theta_0, \ell_0, m_0) \\
&\qquad = a_{\mathrm{eff},0}(D, X)(-V\delta\theta - \delta\ell(X) + \theta_0\,\delta m(X)) + a_{\mathrm{eff},0}(D, X)\,\delta\theta\,\delta m(X).
\end{align*}
Thus the Fr\'echet derivative is
\begin{align*}
D_\theta\,\psi_{\mathrm{eff},0}(\delta\theta) &= -a_{\mathrm{eff},0}(D,X)\,V\,\delta\theta, \\
D_\eta\,\psi_{\mathrm{eff},0}[\delta\ell,\delta m] &= a_{\mathrm{eff},0}(D,X)(-\delta\ell(X) + \theta_0\,\delta m(X)).
\end{align*}
where the remainder satisfies
\[
\|a_{\mathrm{eff},0}\,\delta\theta\,\delta m\|_{L_2(P_0)} \le \|a_{\mathrm{eff},0}\|_\infty\,|\delta\theta|\,\|\delta m\|_\infty \le C\|(\delta\theta, \delta\ell, \delta m)\|^2,
\]
and Assumption~\ref{assumption:frechet_all} holds.

\textbf{Assumption~\ref{assumption:pathwise}.} By Proposition~\ref{prop:plm_EIF}, \(\theta\) is pathwise differentiable at \(P_0\) with influence function
\[
\varphi_{\mathrm{eff}}(Z) = \psi_{\mathrm{eff},0}(Z;\, \theta_0, \ell_0, m_0).
\]

\textbf{Assumption~\ref{assumption:product_reverse}. }Finally, we construct the coordinate submodels explicitly.

\emph{\(\theta\)-coordinate submodel.} Set
\[
\theta_t := \theta_0 + t, \qquad \ell_t := \ell_0, \qquad m_t := m_0.
\]
Equivalently,
\[
\mu_t(D, X) := \ell_0(X) + \theta_t(D - m_0(X)) = \mu_0(D, X) + tV.
\]
We apply Proposition~\ref{prop:plm_Y_submodel} with \(w \equiv 0\) and \(r(D, X) = V\). This furnishes a regular QMD submodel through \(P_0\) with score
\begin{equation}
\label{eq:plm_s_beta}
s_\beta(Z) = \frac{UV}{\sigma_0^2(D, X)}.
\end{equation}
By construction,
\[
\dot\theta_{0, s_\beta} = 1, \qquad \dot\ell_{0, s_\beta} = 0, \qquad \dot m_{0, s_\beta} = 0,
\]
which is the required \(\theta\)-coordinate submodel.

\emph{\(\eta\)-coordinate submodel.} We fix a nuisance direction \(h = (h_\ell, h_m) \in \dot \cH\) and set
\[
\theta_t := \theta_0, \qquad m_t := m_0 + th_m, \qquad \ell_t := \ell_0 + th_\ell.
\]
It follows that
\[
\mu_t(D, X) = \ell_t(X) + \theta_0(D - m_t(X)) = \mu_0(D, X) + t(h_\ell(X) - \theta_0 h_m(X)).
\]
Now perturb \(D \mid X\) using Proposition~\ref{prop:plm_D_submodel} with score
\[
s_D(D, X) = \frac{h_m(X)V}{\tau_0^2(X)},
\]
so that \(m_t = m_0 + th_m\). We also perturb \(Y \mid D, X\) using Proposition~\ref{prop:plm_Y_submodel} with \(w \equiv 0\) and derivative
\[
r(D, X) = h_\ell(X) - \theta_0 h_m(X).
\]
Equivalently, this is the special case of Proposition~\ref{prop:plm_bounded_submodels} with \(s_X = 0\), \(a = h_\ell - \theta_0 h_m\), \(b = 0\), and \(w = 0\). The resulting score is
\begin{equation}
\label{eq:plm_s_h}
s_h(Z) = \frac{h_m(X)V}{\tau_0^2(X)} + \frac{U(h_\ell(X) - \theta_0 h_m(X))}{\sigma_0^2(D, X)}.
\end{equation}
By construction,
\[
\dot\theta_{0, s_h} = 0, \qquad \dot\ell_{0, s_h} = h_\ell, \qquad \dot m_{0, s_h} = h_m.
\]
Therefore, the local product structure required by Assumption~\ref{assumption:product_reverse} holds.

\textbf{Assumption~\ref{assumption:regularity_coordinate}.} For the \(\theta\)-coordinate submodel,
\[
f_{t,\beta}(Z) := \psi_{\mathrm{eff},0}(Z;\, \theta_t, \ell_t, m_t) = a_{\mathrm{eff},0}(D, X)(U - tV),
\]
so
\[
\frac{f_{t,\beta} - f_{0,\beta}}{t} = -a_{\mathrm{eff},0}(D, X)\,V
\]
exactly. Since \(a_{\mathrm{eff},0}\) and \(V\) are bounded, the conditions of Lemma~\ref{lem:differentiation_varying_f} hold.

For the \(\eta\)-coordinate submodel in direction \(h = (h_\ell, h_m)\),
\[
f_{t,h}(Z) := \psi_{\mathrm{eff},0}(Z;\, \theta_0, \ell_0 + th_\ell, m_0 + th_m) = a_{\mathrm{eff},0}(D, X)(U - th_\ell(X) + t\theta_0 h_m(X)),
\]
so
\[
\frac{f_{t,h} - f_{0,h}}{t} = -a_{\mathrm{eff},0}(D, X)(h_\ell(X) - \theta_0 h_m(X)).
\]
Since all terms are bounded, the conditions of Lemma~\ref{lem:differentiation_varying_f} hold.

Since the assumptions of Theorem~\ref{thm:reverse} are  verified for the frozen efficient moment~\eqref{eq:plm_frozen_eff}, we conclude that
\[
\frac{\partial}{\partial \eta}\bE_0[\psi_{\mathrm{eff},0}(Z;\, \theta_0, \eta)]\bigg|_{\eta = \eta_0}[h] = 0 \qquad \text{for every } h \in \dot \cH,
\]
and 
\[
\bE_0[\partial_\theta \psi_{\mathrm{eff},0}(Z;\, \theta_0, \ell_0, m_0)] = -1.
\]

\subsection{Expected Density}
\label{appendix:squared_density}

Finally, we illustrate Proposition~\ref{prop:neyman_wo_lps} through a worked example in which local product structure fails. The target parameter is the expected density, a classical nonparametric functional for which the target is a known functional of the nuisance. We verify the forward direction (Theorem~\ref{thm:forward}) directly, show that local product structure fails, and then apply Proposition~\ref{prop:neyman_wo_lps} to recover the characterization of Neyman orthogonality.

\paragraph{Setup.} Let \((\cZ, \cA)\) be a measurable space with \(\sigma\)-finite dominating measure \(\nu\) satisfying \(\nu(\cZ) < \infty\). Fix a density \(p_0\) with respect to \(\nu\) satisfying
\begin{equation}\label{eq:sq_density_standing}
  0 < c_p \le p_0(z) \le C_p < \infty \quad \nu\text{-a.e.}, \qquad p_0 \text{ not } \nu\text{-a.e.\ constant}.
\end{equation}
We work in the nonparametric model \(\cP\) consisting of all densities \(p\) with respect to \(\nu\) satisfying
\[
  \operatorname*{ess\,inf}_{\nu} p > c_p/2, \qquad \|p\|_\infty < 2C_p.
\]
In particular, \(P_0 \in \cP\), and by~\eqref{eq:sq_density_standing} we see \(P_0\) lies in the interior of \(\cP\) relative to the affine hyperplane \(\left\{p : \int p \, d\nu = 1\right\}\). Define the target and nuisance functionals
\[
  \beta(P) := \int p(z)^2 \, d\nu(z), \qquad \eta(P) := p,
\]
with \(\beta_0 := \beta(P_0) = \int p_0^2 \, d\nu\) and \(\eta_0 := p_0\). Observe that \(\beta = g(\eta)\) where \(g : \cH \to \RR\) is the functional \(g(p) := \int p^2 \, d\nu\), so the target parameter carries no degrees of freedom beyond those already encoded in the nuisance.

Since \(P_0\) lies in the interior of \(\cP\), for any bounded mean-zero function \(g\), the linear tilt \(p_t := p_0(1 + tg)\) remains in \(\cP\) for sufficiently small \(|t|\) with both density bounds preserved, so the tangent space at \(P_0\) is \(\cT = L_2^0(P_0)\) by the same argument as Corollary~\ref{cor:saturation}. The ambient normed space is \(\cV := L_\infty(\nu)\), and the nuisance parameter set coincides with the densities in the model,
\[
  \cH := \left\{p \in L_\infty(\nu) : \operatorname*{ess\,inf}_\nu\, p > c_p / 2, \; \|p\|_\infty < 2C_p, \; \int p \, d\nu = 1 \right\}.
\]
Since \(\cH\) is open relative to the affine subspace \(\left\{p \in L_\infty(\nu) : \int p \, d\nu = 1\right\}\) and \(p_0 \in \cH\), the admissible perturbation space is given by
\[
  \dot{\cH} = \left\{h \in L_\infty(\nu) : \int h \, d\nu = 0\right\}.
\]

\paragraph{Fr\'echet differentiability of \(g\).} The functional \(g(p) = \int p^2 \, d\nu\) is Fr\'echet differentiable at \(p_0\) with derivative \(Dg(p_0)[h] = 2 \int p_0 h \, d\nu\). Indeed,
\[
  g(p_0 + h) - g(p_0) = 2 \int p_0 h \, d\nu + \int h^2 \, d\nu,
\]
and the remainder satisfies \(\left| \int h^2 \, d\nu \right| \le \|h\|_\infty^2 \, \nu(\cZ)\), which is \(o\bigl(\|h\|_\infty\bigr)\) since \(\nu(\cZ) < \infty\). Since \(p_0\) is not \(\nu\)-a.e.\ constant by~\eqref{eq:sq_density_standing}, there exists \(h^* \in \dot{\cH}\) with \(\int p_0 h^* \, d\nu \neq 0\), so \(Dg(p_0)[h^*] = 2\int p_0 h^* \, d\nu \neq 0\), that is, \(Dg(p_0)\) does not vanish on \(\dot{\cH}\).

\paragraph{Estimating function and influence function.} 
Define the estimating function
\begin{equation}\label{eq:sq_density_m}
  m(z;\, \beta,\, p) := 2p(z) - \int p^2 \, d\nu - \beta,
\end{equation}
and the influence function at the truth, 
\begin{equation}\label{eq:sq_density_phi}
  \varphi(z) = 2(p_0(z) - \beta_0).
\end{equation}
\subsubsection{Forward direction}
\label{subsubsec:squared_density_forward}
We verify Assumptions~\ref{assumption:correct_spec}--\ref{assumption:hellinger} and apply Theorem~\ref{thm:forward} to conclude that \(\beta\) is pathwise differentiable with influence function \(\varphi(Z) = -G^{-1} m(Z;\, \beta_0,\, p_0)\).

\textbf{Assumption~\ref{assumption:correct_spec}.} Let \(P\) be any distribution with \(\beta(P) = \beta\) and \(\eta(P) = p\). Then
\[
  \bE_P[m(Z;\, \beta,\, p)] = 2\bE_P[p(Z)] - \int p^2 \, d\nu - \beta = 2\beta - \beta - \beta = 0.
\]
\textbf{Assumption~\ref{assumption:jacobian}.} Since \(m\) is linear in \(\beta\) with coefficient \(-1\), we have \(\partial_\beta m(z;\, \beta_0,\, p_0) = -1\) identically, so
\[
  G := \bE_0[\partial_\beta m(Z;\, \beta_0,\, p_0)] = -1 \neq 0.
\]
\textbf{Assumption~\ref{assumption:neyman}.} For any \(h \in \dot{\cH}\), substituting \(p_0 + th\) into~\eqref{eq:sq_density_m} with \(\beta = \beta_0\):
\begin{align*}
  \bE_0[m(Z;\, \beta_0,\, p_0 + th)]
  &= 2 \int (p_0 + th)\, p_0 \, d\nu - \int (p_0 + th)^2 \, d\nu - \beta_0 \\
  &= \left(2\beta_0 + 2t \textstyle\int p_0 h \, d\nu\right) - \left(\beta_0 + 2t \textstyle\int p_0 h \, d\nu + t^2 \textstyle\int h^2 \, d\nu\right) - \beta_0 \\
  &= -t^2 \int h^2 \, d\nu.
\end{align*}
It follows that \(\dfrac{d}{dt}\bE_0[m(Z;\, \beta_0,\, p_0 + th)]\bigg|_{t=0} = 0\) for all \(h \in \dot{\cH}\).

\textbf{Assumptions~\ref{assumption:coord_smooth_all}--\ref{assumption:regularity_submodels}.} We take \(S = L_\infty(P_0) \cap L_2^0(P_0)\), which is dense in \(\cT = L_2^0(P_0)\). For each \(s \in S\) we use the linear tilt submodel from Lemma~\ref{lem:linear_tilt_submodel}. The induced coordinate path has \(\beta_t = \int p_t^2 \, d\nu\) and \(\eta_t = p_t = p_0 + t p_0 s\). It follows that \(\dot{\beta}_{0,s} = 2\bE_0[p_0(Z) s(Z)]\) and \(\dot{\eta}_{0,s} = p_0 s \in \dot{\cH}\), the final inclusion because \(\int p_0 s \, d\nu = \bE_0[s] = 0\).

To check Fr\'echet differentiability of \(m\) in \(L_2(P_0)\), we compute
\begin{align*}
  &m(z;\, \beta_0 + \delta\beta,\, p_0 + \delta p) - m(z;\, \beta_0,\, p_0) \\
  &\quad = 2(p_0(z) + \delta p(z)) - \int (p_0 + \delta p)^2 \, d\nu - (\beta_0 + \delta\beta) - \left[2p_0(z) - \int p_0^2 \, d\nu - \beta_0\right] \\
  &\quad = 2\delta p(z) - 2\!\int\! p_0 \,\delta p \, d\nu - \delta\beta - \int (\delta p)^2 \, d\nu.
\end{align*}
We identify the partial derivatives as
\[
  D_\beta m_0(\delta\beta)(z) := -\delta\beta, \qquad D_\eta m_0(\delta p)(z) := 2\delta p(z) - 2\!\int\! p_0 \,\delta p \, d\nu,
\]
with the remainder being \(-\int (\delta p)^2 \, d\nu\). It follows that
\[
  \frac{\left|\int (\delta p)^2 \, d\nu\right|}{|\delta\beta| + \|\delta p\|_\infty} \le \frac{\|\delta p\|_\infty^2 \, \nu(\cZ)}{|\delta\beta| + \|\delta p\|_\infty} \le \|\delta p\|_\infty \, \nu(\cZ) \to 0
\]
as \(|\delta\beta| + \|\delta p\|_\infty \to 0\), since \(\nu(\cZ) < \infty\).
 
We now verify the conditions of Lemma~\ref{lem:differentiation_varying_f}. Define \(f_{t,s}(z) := m(z;\, \beta_t,\, p_t)\). Since \(\beta_t = \int p_t^2 \, d\nu\), we have
\[
  f_{t,s}(z) = 2p_t(z) - \int p_t^2 \, d\nu - \beta_t = 2p_t(z) - 2\beta_t.
\]
For the linear tilt \(p_t = p_0(1 + ts)\), we have \(p_t(z) - p_0(z) = tp_0(z)s(z)\) and
\[
  \beta_t = \int p_0^2(1 + ts)^2 \, d\nu = \beta_0 + 2t\,\bE_0[p_0(Z)s(Z)] + t^2\,\bE_0\left[p_0(Z)s(Z)^2\right],
\]
so that \((\beta_t - \beta_0)/t = \dot{\beta}_{0,s} + t\,\bE_0\left[p_0(Z)s(Z)^2\right]\) where \(\dot{\beta}_{0,s} = 2\,\bE_0[p_0(Z)s(Z)]\). It follows that
\begin{align*}
  \frac{f_{t,s}(z) - f_0(z)}{t}
  &= 2 \cdot \frac{p_t(z) - p_0(z)}{t} - 2 \cdot \frac{\beta_t - \beta_0}{t} \\
  &= 2p_0(z)s(z) - 2\dot{\beta}_{0,s} - 2t\,\bE_0\left[p_0(Z)s(Z)^2\right].
\end{align*}
As \(t \to 0\), the final term vanishes and the quotient converges pointwise to the bounded function \(\dot{f}_0(z) := 2p_0(z)s(z) - 2\dot{\beta}_{0,s}\). Each of the three terms above is bounded uniformly in \(z\) and \(|t| \le 1\):
\[
  |2p_0(z)s(z)| \le 2C_p\|s\|_\infty, \qquad \bigl|2\dot{\beta}_{0,s}\bigr| < \infty, \qquad \left|2t\,\bE_0\left[p_0(Z)s(Z)^2\right]\right| \le 2\,\bE_0\left[p_0(Z)s(Z)^2\right],
\]
so \(\|(f_{t,s} - f_0)/t\|_\infty \le M\) for some constant \(M\) independent of \(t\). The first condition then follows by dominated convergence, and the second condition follows because a function bounded by \(M\) has second moment at most \(M^2\) under any probability measure.

\textbf{Assumption~\ref{assumption:hellinger}.} We show that \(\beta\) is Hellinger Lipschitz. For any \(P_1, P_2 \in \cP\), since \(\|p_i\|_\infty \le 2C_p\),
\begin{align*}
  |\beta(P_1) - \beta(P_2)| &= \left|\int (p_1 - p_2)(p_1 + p_2) \, d\nu\right|
  \le 4C_p \int |p_1 - p_2| \, d\nu.
\end{align*}
Writing \(|p_1 - p_2| = \bigl|\sqrt{p_1} - \sqrt{p_2}\bigr|\,\bigl(\sqrt{p_1} + \sqrt{p_2}\bigr)\) and applying Cauchy--Schwarz,
\[
  \int |p_1 - p_2| \, d\nu \le \bigl\|\sqrt{p_1} - \sqrt{p_2}\bigr\|_{L_2(\nu)} \cdot \bigl\|\sqrt{p_1} + \sqrt{p_2}\bigr\|_{L_2(\nu)} \le 2\sqrt{2}\, H(P_1, P_2),
\]
since \(\bigl\|\sqrt{p_1} + \sqrt{p_2}\bigr\|_{L_2(\nu)}^2 \le 2\int (p_1 + p_2)\,d\nu = 4\). Since \(\|p_i\|_\infty \le 2C_p\) for all \(P_i \in \cP\), we conclude that \(\left|\beta(P_1) - \beta(P_2)\right| \le 8\sqrt{2}\, C_p \, H(P_1, P_2)\).

Since all the assumptions of Theorem~\ref{thm:forward} hold, we conclude that \(\beta\) is pathwise differentiable with influence function
\[
  \varphi(Z) = -G^{-1} m(Z;\, \beta_0,\, p_0) = m(Z;\, \beta_0,\, p_0) = 2p_0(Z) - 2\beta_0.
\]

\subsubsection{Reverse direction}
\label{subsubsec:squared_density_reverse}

Since \(\beta = g(\eta)\) with \(g(p) = \int p^2 \, d\nu\), the chain rule ensures that every regular submodel must satisfy \(\dot{\beta}_{0,s} = Dg(p_0)\bigl[\dot{\eta}_{0,s}\bigr] = 2\int p_0 \dot{\eta}_{0,s} \, d\nu\). A \(\beta\)-coordinate submodel requires \(\dot{\beta}_{0,s} = 1\) and \(\dot{\eta}_{0,s} = 0\). However, \(\dot{\eta}_{0,s} = 0\) forces \(\dot{\beta}_{0,s} = Dg(p_0)[0] = 0 \neq 1\), and no such submodel exists. An \(\eta\)-coordinate submodel in direction \(h\) requires \(\dot{\beta}_{0,s} = 0\) and \(\dot{\eta}_{0,s} = h\). However, \(\dot{\eta}_{0,s} = h\) forces \(\dot{\beta}_{0,s} = 2 \int p_0 h \, d\nu\), which is generally nonzero. Therefore, Assumption~\ref{assumption:product-structure} fails and Theorem~\ref{thm:reverse} does not apply.

We now verify the conditions of Proposition~\ref{prop:neyman_wo_lps} and apply it to recover the characterization of Neyman orthogonality. Correct specification, Fr\'echet differentiability of \(m\), Fr\'echet differentiability of \(g\) with \(Dg(p_0)\) non-vanishing on \(\dot{\cH}\), and the influence function representation \(m(z;\, \beta_0,\, p_0) = 2p_0(z) - 2\beta_0 = \varphi(z)\) have all been established above.

\textbf{Submodel condition.} For every \(h \in \dot{\cH}\), we construct a regular submodel through \(P_0\) with bounded score \(s\), nuisance derivative \(\dot{\eta}_{0,s} = h\), and satisfying the regularity conditions of Lemma~\ref{lem:differentiation_varying_f}.

Define the score \(s := h / p_0\). Since \(p_0 \ge c_p\) by~\eqref{eq:sq_density_standing}, \(|s(z)| = |h(z)| / p_0(z) \le \|h\|_\infty / c_p\), so \(s\) is bounded. Moreover, \(\bE_0[s] = \int (h/p_0) \, p_0 \, d\nu = \int h \, d\nu = 0\) since \(h \in \dot{\cH}\). By Lemma~\ref{lem:linear_tilt_submodel}, the linear tilt \(p_t := p_0(1 + ts) = p_0 + th\) defines a regular QMD submodel through \(P_0\) with score \(s\). Moreover, for \(|t| < c_p/(2\|h\|_\infty)\) the bounds \(c_p/2 < p_t < 2C_p\) hold, so \(p_t \in \cP\). The nuisance derivative is
\[
  \dot{\eta}_{0,s} = \frac{d}{dt} p_t \bigg|_{t = 0} = p_0 s = h,
\]
and the chain rule gives \(\dot{\beta}_{0,s} = Dg(p_0)[h] = 2 \int p_0 h \, d\nu\), so the induced coordinate path is differentiable at \(t = 0\).

For the regularity condition, since \(p_t = p_0 + th\) and \(\beta_t = \beta_0 + 2t\int p_0 h \, d\nu + t^2 \int h^2 \, d\nu\), the function \(f_{t,s}(z) = m(z;\, \beta_t,\, p_t) = 2p_t(z) - 2\beta_t\) satisfies
\[
  \frac{f_{t,s}(z) - f_0(z)}{t} = 2h(z) - 4\!\int\! p_0 h \, d\nu - 2t\!\int\! h^2 \, d\nu,
\]
which converges in \(L_2(P_0)\) to the bounded function \(2h(z) - 4 \int p_0 h \, d\nu\). Since each term in the difference quotient is bounded uniformly in \(z\) and \(|t| \le 1\), the conditions of Lemma~\ref{lem:differentiation_varying_f} hold. 

Finally, since \(Dg(p_0)\) does not vanish on \(\dot{\cH}\), the conditions of Proposition~\ref{prop:neyman_wo_lps} hold. Hence, we conclude that
\[
  m \text{ is Neyman orthogonal} \quad \Longleftrightarrow \quad G = -1.
\]

\end{document}